\newif\ifarxiv 
\pgfplotsset{compat=newest}
\newlist{steps}{enumerate}{1}
\setlist[steps, 1]{label = Step \arabic*:}
\newtheorem{theorem}{Theorem}
\newtheorem{thm}[theorem]{Theorem}
\newtheorem{rem}[theorem]{Remark}
\newtheorem{lem}[theorem]{Lemma}
\newtheorem{prop}[theorem]{Proposition}
\theoremstyle{definition} 
\newtheorem{defn}[theorem]{Definition}
\newcounter{algsubstate}
\newenvironment{algsubstates}
{\setcounter{algsubstate}{0}%
	\renewcommand{\State}{%
		\refstepcounter{algsubstate}%
		\Statex {\normalsize\arabic{ALG@line}.\arabic{algsubstate}.}\kern5pt}
}
{}
\newcommand{\ketbra}[2]{\ket{#1}\!\bra{#2}}
\newcommand{\vect}[1]{\mathbf{#1}}
\newcommand\norm[1]{\lVert#1\rVert}
\newcommand{\eps}{\epsilon} 
\newcommand{\freq}{\operatorname{freq}}
\newcommand{\term}[1]{\textup{\textbf{#1}}} 
\newcommand{\Renyi}{R\'{e}nyi}
\newcommand{\tr}[1]{\operatorname{Tr}\left[#1\right]}
\newcommand{\ptr}[2]{\operatorname{Tr}_{#1}\left[#2\right]}
\newcommand{\id}{\mathbb{I}}
\newcommand{\score}{S}
\newcommand{\scalar}{x}
\newcommand{\scoreupper}{S}
\let\inner\relax
\NewDocumentCommand\inner{mg}{%
	\ensuremath{\left\langle #1 | \IfNoValueTF{#2}{#1}{#2}\right\rangle}%
}
\newcommand{\binh}{h_\mathrm{bin}} 
\newcommand{\strat}{\Lambda}
\newcommand{\Pos}{\mathrm{Pos}}
\newcommand{\States}{S} 
\newcommand{\OAT}{\Omega_{\mathrm{AT}}}
\newcommand{\OEV}{\Omega_{\mathrm{EV}}}
\DeclareSymbolFont{tipa}{T3}{cmr}{m}{n}
\DeclareMathAccent{\invbreve}{\mathalpha}{tipa}{16}
\begin{document}
	
	\title{Analytic Rényi Entropy Bounds for Device-Independent Cryptography}
	
	\author{Thomas A. Hahn}
	\affiliation{The Center for Quantum Science and Technology, Department of Physics of Complex Systems, Weizmann Institute of Science, Rehovot, Israel}
    \author{Aby Philip}
    \affiliation{Institute of Fundamental Technological Research, Polish Academy of Sciences, Pawi\'{n}skiego 5B, 02-106 Warsaw, Poland}
	\author{Ernest Y.-Z. Tan}
	\affiliation{
		University of Waterloo, Waterloo, Ontario N2L 3G1, Canada}
	\author{Peter Brown}
	\affiliation{T\'{e}l\'{e}com Paris, LTCI, Institut Polytechnique de Paris, Inria, 19 Place Marguerite Perey, 91120 Palaiseau, France}
	\date{\today}

	\begin{abstract} 
        Device-independent (DI) cryptography represents the highest level of security, enabling cryptographic primitives to be executed safely on uncharacterized devices. Moreover, with successful proof-of-concept demonstrations in randomness expansion, randomness amplification, and quantum key distribution, the field is steadily advancing toward commercial viability.
        Critical to this continued progression is the development of tighter finite-size security proofs. In this work, we provide a simple method to obtain tighter finite-size security proofs for protocols based on the CHSH game, which is the nonlocality test used in all of the proof-of-concept experiments. We achieve this by analytically solving key-rate optimization problems based on Rényi entropies, providing a simple method to obtain tighter finite-size key rates. 

	\end{abstract}
	
	\maketitle


	\ifarxiv\section{Introduction}\else\noindent{\it Introduction.|}\fi
	Quantum theory exhibits certain correlations between distant agents that cannot be explained classically~\cite{Bell}. These so-called nonlocal correlations have far-reaching implications beyond fundamental physics. In particular, by observing certain nonlocal correlations, it is possible to make statements about the underlying quantum systems used to produce them. This leads to device-independent information processing, which guarantees the successful completion of information processing tasks without detailed characterization of the underlying hardware. Such protocols have already been demonstrated in practice, with recent implementations of DI randomness expansion~\cite{dirng_expr1,dirng_expr2}, amplification~\cite{dira_expr} and quantum key distribution (QKD)~\cite{diqkd_expr1,diqkd_expr2,diqkd_expr3}. 
	
	DIQKD protocols provide a method to generate shared secret key whilst relying on minimal assumptions. Here, one can leverage the fact that if two honest parties, Alice and Bob, observe a shared nonlocal distribution, then they can information-theoretically verify that their outputs are random from the perspective of an eavesdropper, Eve. There has been significant effort in designing protocols that are secure and efficient in practice~\cite{di_security_review1,di_security_review2}, and whilst proof-of-principle demonstrations of DIQKD have been achieved, the current achievable rates remain far from practical.
	
	Protocols based on the CHSH inequality~\cite{CHSH} have been studied extensively for DIQKD, owing to its simplicity both theoretically and experimentally. In particular, when considering security against collective attacks,~\cite{pironio2009deviceindependent} provides a tight analytical lower bound on the von Neumann entropy (and hence the asymptotic key-rate) in terms of the expected CHSH violation. Applying techniques like the Entropy Accumulation Theorem (EAT)~\cite{DFR20,DF19,MFSR22}, one can elevate these bounds to finite-size key rates against general adversaries. Recently, a new {\Renyi} EAT (REAT) was proven in~\cite{arx_AHT24}, which is based on the more general family of entropies known as {\Renyi} entropies, and has the potential to yield tighter finite-size key-rates resulting in more practical protocols. However, in order to reap these benefits, tight and efficient methods for bounding Rényi entropies in a device-independent manner must be developed.
	
	In this work, we derive a tight analytical relationship between the expected CHSH value and the amount of Rényi entropy in the output. This can be seen as a broad generalization of the expression derived in~\cite{pironio2009deviceindependent}, which we recover as a special case. Crucially, our results unlock the potential of~\cite{arx_AHT24} and we demonstrate significantly improved finite-size key-rates for protocols based on the CHSH inequality. We further discuss how our results can be modified to include noisy preprocessing~\cite{HSTRBS20} and generalized to the asymmetric CHSH inequalities~\cite{Woodhead_2021}, further boosting the key rates and applicability of the technique. Overall, our work pushes DIQKD towards a new level of practicality.
	
	\ifarxiv\section{Analytic Rényi entropy bounds}\else\noindent{\it Analytic key-rate bounds.|}\fi 
	We now present the main technical contribution of this work, which is a tight analytical relationship between the CHSH value and the accumulated Rényi entropy. To more precisely state the problem at hand, consider the following setup. We have two honest parties Alice and Bob, and an eavesdropper Eve. Alice and Bob each hold a device which can receive binary inputs $X,Y$ and produce binary outputs $A,B$ respectively. The behavior of the boxes may be modeled in the following way: a shared tripartite state $\rho_{Q_AQ_BE}$ is distributed to Alice, Bob and Eve; upon receiving the input $X=x$, Alice's box measures $Q_A$ with a POVM $\{M_{a}^x\}_a$ and outputs the measurement outcome $A=a$; upon receiving the input $Y=y$, Bob's box measures $Q_B$ with the POVM $\{N_{b}^y\}_b$ and outputs the measurement outcome $B=b$. Given inputs $X=x$, $Y=y$, we can compute a post-measurement state $\rho_{ABE}^{xy} = \sum_{ab} \ketbra{ab}{ab}_{AB} \otimes \rho_E^{abxy}$, where
	\begin{align}
		\rho_E^{abxy} = \ptr{Q_AQ_B}{\rho_{Q_AQ_BE} (M_a^x \otimes N_b^y \otimes \id_E)}\,.
	\end{align}
	We call a tuple $(Q_AQ_BE, \rho_{Q_AQ_BE}, \{M_a^x\}, \{N_b^y\})$ of Hilbert spaces, a shared state, and POVMs a \emph{quantum strategy}.

	From the perspective of the honest parties, Alice and Bob, their devices are black boxes. As such, the exact Hilbert spaces, shared state, and POVMs used are unknown to them. On the other hand, we allow the eavesdropper Eve to have full control over the implementation of the devices, i.e., she can choose the quantum strategy. Despite the black-box nature of their devices, Alice and Bob are able to learn information about the correlations produced by their devices. In this work, we shall  
    focus on protocols based on estimating
    the expected CHSH value of their devices, which is defined as \begin{align}\label{eq:expected_CHSH_value}
		\score = \sum_{abxy} (-1)^{xy + a + b} \tr{\rho_{Q_AQ_BE} \left(M_a^x \otimes N_b^y \otimes \mathds{1}\right)}  \; .
	\end{align} 
	For any $\score > 2$, the correlations produced by the devices are nonlocal and can be used to produce device-independent randomness. Moreover, the maximal achievable value using quantum systems is $\score = 2 \sqrt{2}$~\cite{tsirelson_bound}. Given a quantum strategy ${\strat = (Q_AQ_BE, \rho_{Q_AQ_BE}, \{M_{a|x}\}, \{N_{b|y}\})}$ we denote its expected CHSH value, computed using~Eq.\ \eqref{eq:expected_CHSH_value}, by $\scoreupper_{\mathrm{CHSH}}(\strat)$.
	
	In this section we are interested in solving the following optimization problem. Given a conditional entropy $\mathbb{H}$ (see~\ifarxiv Appendix~\ref{app:definitions} \else the {\it Supplemental Material }\fi for the formal definitions) and an expected CHSH value $\score = [2, 2\sqrt{2}]$, find the minimal value of $\mathbb{H}(A|X=0, E)$ over all possible quantum strategies that have an expected CHSH value of $\score$. In other words, we are interested in computing an $\mathbb{H}$ rate function for the CHSH Bell-inequality, as made precise in the following definition. 
	
	\begin{defn}[$\mathbb{H}$ rate function for CHSH]
		Let $\mathbb{H}$ be a conditional entropy and let $\score \in [2, 2 \sqrt{2}]$. We say that a function $f_{\mathbb{H}}: [2, 2 \sqrt{2}] \to \mathbb{R}$ is a tight $\mathbb{H}$ rate function for the CHSH Bell inequality if 
		\begin{equation} \label{Eq: DefHTradeoffFunction}
			\begin{aligned}
				f_{\mathbb{H}}(\score)                 \coloneqq\inf_{\strat}& \quad \mathbb{H}(A | X=0, E) \\
				\mathrm{s.t.}& \quad \scoreupper_{\operatorname{CHSH}}(\strat) = \score \,,  
			\end{aligned}
		\end{equation} 
		where the infimum is over all quantum strategies $\strat$.
	\end{defn}

	We note that the case of the von Neumann entropy, $H$, was solved in~\cite{pironio2009deviceindependent} (see also~\cite{Arnon-Friedman:2018aa}), where it was shown that
	\begin{equation}\label{eq:f_H}
		f_{H}(\score) = 1 - h\left(\frac12 + \frac12 \sqrt{\frac{\score^2}{4} - 1}\right)\,,  
	\end{equation}
	where $h(x) \coloneqq -x \log_2 x- (1-x)\log_2(1-x)$ is the binary entropy. Similarly, in~\cite{MPA11} it was shown that for the min-entropy, $H_{\min}$, one has
	\begin{equation}\label{eq:f_Hmin}
		f_{H_{\min}}(\score) = 1 - \log\left(1 + \sqrt{2 - \frac{\score^2}{4}}\right)\,.
	\end{equation}
	
	Our main technical result is an exact analytical form of the rate functions for multiple major families of {\Renyi} conditional entropy. In Theorem~\ref{Theorem: SandwichBound} below, we focus on presenting our results for 
    two particular families of ``sandwiched Rényi entropies'' (see~\ifarxiv Appendix~\ref{app:definitions} \else the {\it Supplemental Material }\fi or~\cite{MDS+13,WWY14} for full details), denoted as $\widetilde{H}^{\uparrow}_{\alpha}$ and $\widetilde{H}^{\downarrow}_{\alpha}$. We focus on these for now as they are the most relevant entropies for our finite-size analysis; however, we highlight that significant generalizations of this theorem are also possible. For instance, as we show in~\ifarxiv Appendix~\ref{app:generalisations}\else the {\it Supplemental Material}\fi, we can extend it to the family of asymmetric CHSH inequalities~\cite{Woodhead_2021}, modify $f_{\widetilde{H}^{\downarrow}_\alpha}(\score)$ to include noisy preprocessing~\cite{HSTRBS20}, and derive analogous results for the Petz-Rényi entropies~\cite{Petz}. The generalizations that include noisy preprocessing and the asymmetric CHSH inequalities have the potential to boost the achievable key-rates for DIQKD even further. 
	\begin{thm}\label{Theorem: SandwichBound}
		Let $\alpha > 1$ and $\score \in [2, 2\sqrt{2}]$. Then we have 
	\begin{align}
        f_{\widetilde{H}^{\uparrow}_\alpha}(\score) &= 1 + \frac{2\alpha - 1}{1-\alpha}\log \phi_{\frac{\alpha}{2\alpha-1}}(S) \, ,\\
			f_{\widetilde{H}^{\downarrow}_\alpha}(\score) &= 1 + \frac{\alpha}{1-\alpha}\log \phi_{\frac{1}{\alpha}}(S) \, ,
		\end{align}
        where
        \begin{equation}
                    \phi_\mu(S) = \left(\frac{1-\sqrt{\tfrac{S^2}{4} -1}}{2} \right)^\mu \!+ \left(\frac{1+\sqrt{\tfrac{S^2}{4}-1}}{2} \right)^\mu \, .
        \end{equation}  
	\end{thm}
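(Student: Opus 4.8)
The plan is to prove matching bounds: an \emph{achievability} bound from an explicit two-qubit attack, and a coinciding \emph{lower} bound obtained by reducing an arbitrary strategy to that case. Since the observables for $X,Y\in\{0,1\}$ are binary, Jordan's lemma simultaneously block-diagonalizes each party's pair of observables into blocks of dimension at most two. Because Alice's $X=0$ POVM and the CHSH operator are then block-diagonal, dephasing the shared state across these blocks changes neither the CHSH value nor Eve's conditional states, so one may assume $\rho_{Q_AQ_BE}=\bigoplus_j q_j\,\rho_j$ with each $\rho_j$ a two-qubit strategy of CHSH value $S_j$ and $\sum_j q_j S_j=S$. Storing the block index $J$ in Eve's system only lowers the entropy, so it suffices to lower bound $\widetilde{H}_\alpha(A|EJ)$.

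Next I would solve the single two-qubit block. The conditional entropy depends only on the classical--quantum state $\rho_{AE}=\sum_a \ketbra{a}{a}_A\otimes\sigma^a_E$ produced by Alice's $X=0$ measurement, with Eve holding a purification. I would show that the minimizer has $p_0=p_1=\tfrac12$ and pure conditional states $\sigma^a_E=\ketbra{e_a}{e_a}$ with real overlap $\braket{e_0|e_1}=\sqrt{S^2/4-1}$ fixed by the constraint, so that $\rho_E=\tfrac12(\sigma^0_E+\sigma^1_E)$ has eigenvalues $\lambda_\pm=\tfrac12(1\pm\sqrt{S^2/4-1})$ and $\phi_\mu(S)=\tr{\rho_E^\mu}$. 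For this configuration the entropies are explicit. Since each $\sigma^a_E$ is rank one, $\widetilde{H}^{\downarrow}_\alpha(A|E)=\frac1{1-\alpha}\log\sum_a p_a^\alpha\,(\bra{e_a}\rho_E^{(1-\alpha)/\alpha}\ket{e_a})^\alpha$ collapses, using $\bra{e_a}\rho_E^{(1-\alpha)/\alpha}\ket{e_a}=\lambda_+^{1/\alpha}+\lambda_-^{1/\alpha}=\phi_{1/\alpha}(S)$, to $1+\frac{\alpha}{1-\alpha}\log\phi_{1/\alpha}(S)$. For the $\uparrow$ entropy one additionally optimizes over $\sigma_E$; by the $\mathbb{Z}_2$ symmetry of the configuration the optimizer is diagonal in the eigenbasis of $\rho_E$, whereupon the inner problem reduces to minimizing the convex map $s\mapsto\lambda_+ s^{(1-\alpha)/\alpha}+\lambda_-(1-s)^{(1-\alpha)/\alpha}$, whose minimum equals $\phi_\mu(S)^{1/\mu}$ at $\mu=\frac{\alpha}{2\alpha-1}$, giving $1+\frac{2\alpha-1}{1-\alpha}\log\phi_\mu(S)$.

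To upgrade this to a bound for arbitrary strategies I would derive the direct-sum rule for Eve's classical register, namely $\widetilde{H}^{\downarrow}_\alpha(A|EJ)=\frac1{1-\alpha}\log\sum_j q_j\,2^{(1-\alpha)\widetilde{H}^{\downarrow}_\alpha(A|E)_j}$ and the analogous Rényi-weighted mean for the $\uparrow$ case. Inserting the per-block bound $\widetilde{H}_\alpha(A|E)_j\ge f(S_j)$ and using that $\frac{1}{1-\alpha}\log(\cdot)$ is decreasing for $\alpha>1$, the whole lower bound reduces to concavity of $S\mapsto\phi_{1/\alpha}(S)^{\alpha}$ (respectively $S\mapsto\phi_\mu(S)^{1/\mu}$) on $[2,2\sqrt2]$; by Jensen this forces the optimum onto a single block, yielding $\widetilde{H}_\alpha(A|EJ)\ge f(S)$ and hence $\widetilde{H}_\alpha(A|E)\ge f(S)$. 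Achievability follows from the explicit symmetric two-qubit attack of the previous step, for which $J$ is trivial and equality holds.

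The main obstacle is the two-qubit optimization: proving that the symmetric, pure-state configuration is the \emph{global} minimizer over all qubit states and measurements at fixed CHSH value, generalizing the von Neumann and min-entropy results of~\cite{pironio2009deviceindependent,MPA11}, rather than merely a stationary point; ruling out asymmetric output distributions, mixed conditional states, and non-projective measurements is the delicate part. A secondary task is the concavity lemma used in the recombination step, which is elementary but must be checked directly, since $S\mapsto\sqrt{S^2/4-1}$ is itself concave and the concavity of the relevant composition is not automatic.
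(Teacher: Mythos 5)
Your overall architecture matches the paper's: Jordan's lemma block reduction, a per-block reduction to the symmetric state with two pure conditional vectors of overlap $g_\score=\sqrt{\score^2/4-1}$, an explicit {\Renyi} computation on that state, recombination over blocks via the {\Renyi}-weighted mean from~\cite[Prop.~5.1]{Tomamichel2015QuantumIP} plus a concavity-and-monotonicity lemma, and tightness via the explicit attack $\sqrt{P_+}\ket{\phi^+}\ket{0}_E+\sqrt{P_-}\ket{\phi^-}\ket{1}_E$. However, the step you correctly flag as ``the main obstacle'' --- proving that the symmetric, pure-conditional-state configuration is the \emph{global} minimizer in each qubit block --- is a genuine gap as you have framed it, and the paper does not solve it by direct optimization over qubit strategies at all. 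Instead it adapts the symmetrization argument of~\cite{Woodhead_2021} (Lemma~\ref{Lem: QProperties}): mix the post-measurement state with its bit-flipped copy using local unitary invariance and classical linearity of the trace functional $\mathbb{Q}$, then use data processing on a suitably chosen purification to show $\mathbb{Q}(A|E)_{\rho}\leq\mathbb{Q}(A|E)_{\sigma}$ for a state $\sigma_{AE}$ of exactly the symmetric form, with the overlap bound $|\inner{\psi_{=}}{\psi_{\neq}}|\geq g_\score$ imported from~\cite[Eqs.~(73) and (95)]{Woodhead_2021}. This sidesteps entirely the need to rule out asymmetric output distributions or mixed conditional states, requiring only three abstract properties of $\mathbb{Q}$ --- so if you pursue your route you would be re-proving, by hand and for each entropy, something the symmetrization trick gives uniformly.

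A second, smaller divergence: for $\widetilde{H}^{\uparrow}_\alpha$ you propose to solve the inner supremum over $\sigma_E$ exactly (diagonality by $\mathbb{Z}_2$ symmetry, then a scalar convex minimization). The paper avoids proving optimality of any $\sigma_E$: it substitutes the specific choice $\tau_E\propto\sigma_E^{\alpha/(2\alpha-1)}$ to get a \emph{lower} bound equal to $f_{\widetilde{H}^{\downarrow}_{2-1/\alpha}}(\score)$, and then closes the gap using the general inequality $\widetilde{H}^{\uparrow}_\alpha(A|E)\leq\widetilde{H}^{\downarrow}_{2-1/\alpha}(A|E)$ from~\cite[Cor.~5.3]{Tomamichel2015QuantumIP} applied to the explicit attack, so the two bounds pinch. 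Your symmetry-plus-convexity argument is plausible but is extra work the paper never needs. Finally, on your concavity caveat: you are right that it is not automatic, and the paper's device is to factor the rate function through $x=\score^2/4-1$ (convex in $\score$) and prove that $x\mapsto\phi_{1/\alpha}(\,\cdot\,)^{\alpha}$-type functions are concave and decreasing in $x$ (Appendix~\ref{app: ConcavityProperties}, via Taylor-remainder estimates); the composition rule ``concave nonincreasing $\circ$ convex'' then yields concavity in $\score$, and the same monotonicity is also what licenses replacing $|\inner{\psi_{=}}{\psi_{\neq}}|$ by $g_\score$ in the block bound --- a point your outline uses implicitly but should state.
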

    We provide the proof in~\ifarxiv Appendix~\ref{app:generalisations}\else the {\it Supplemental Material }\fi, together with the generalizations.
	As expected, in the limit $\alpha \to 1$, both $f_{\widetilde{H}^{\uparrow}_\alpha}$ and ${f_{\widetilde{H}^{\downarrow}_\alpha}}$ converge to {Eq.~\eqref{eq:f_H}}. Moreover, by setting $\alpha = 2$, one recovers the fact that $f_{\widetilde{H}^{\downarrow}_2}$ equals the rate function for the min-entropy in {Eq.~\eqref{eq:f_Hmin}}, a result that was first observed in~\cite{murta2018realization}. 
    
    Interestingly, the optimal strategy for Eve that achieves the infimum in Eq.~\eqref{Eq: DefHTradeoffFunction} is the same for all $\alpha \geq  1$ and both entropy families. In particular, for an expected CHSH value, $\score$, the optimal strategy for Eve is to program Alice and Bob's devices to measure the observables 
    \begin{align}
        A_0 = \sigma_{z}, \ A_1 = \sigma_{x}, \ B_{0,1} = \frac{\sigma_{z} \pm  g_{\score}\sigma_{x}}{\sqrt{1+g_{\score}^2}} \, ,
    \end{align}
 where  $g_{\score}=\sqrt{\frac{\score^2}{4} -1}$, on the state	
        \begin{equation}    \label{Eq: AttackSaturatesBound}  \sqrt{P_+}\ket{\phi^+}_{Q_AQ_B}\ket{0}_{E}+\sqrt{P_-}\ket{\phi^-}_{Q_AQ_B}\ket{1}_{E} \, ,
		\end{equation}
		for  ${\ket{\phi^\pm} = \frac{1}{\sqrt{2}} \left(\ket{00} \pm\ket{11}\right)}$, and $P_\pm = \frac{1}{2}\left(1 \pm g_{\score}\right)$.

	\ifarxiv\section{Improved finite-size Key rates}\else\noindent{\it Improved finite-size DIQKD rates.|}\fi \label{Section: GREATBounds}
    \ifarxiv\subsection{Protocol Description}\else\noindent{\it Protocol Description.|}\fi \label{Section: Protocol} 
    In addition to the CHSH set-up from the previous section, for DIQKD one generally allows Bob an extra measurement input, $Y=2$, which he uses to generate secret key (his other settings are used to test the device). This additional measurement improves the key rates by reducing the cost of error correction, and due to non-signaling conditions, will not affect the validity of the rate functions described in the previous section, see e.g.~\cite{pironio2009deviceindependent}. This extended set-up is what will be considered here.

Each round of the DIQKD protocol is either a test or a key-generation round. During test rounds, which occur with some probability $\gamma \in [0,1]$, each party chooses their measurement inputs $X,Y$ uniformly at random from the set $\{0,1\}$ and  generates some measurement outputs $A,B$. During a subsequent public announcement step, Bob will announce a value $\bar{B}$ that is set equal to his output value $B$ whenever it is a test round, which allows Alice to produce an additional test-data value for each round, $\bar{C}$, as follows:
	\begin{align}\label{eq:test-function}
		\bar{C}=\begin{cases}
			0, & \text{if }A \oplus B \neq X \cdot Y  \\
			1, & \text{if }A \oplus B = X \cdot Y \; .
		\end{cases}  
\end{align}
For single-round quantum strategies, the distribution of $\bar{C}$ can directly be related to the corresponding CHSH value, $\score$. As such, the test data encodes the relevant information needed for applying Theorem \ref{Theorem: SandwichBound}. During generation rounds, Alice and Bob use the inputs ${X=0}$ and $Y=2$, respectively. No test data is produced and so Bob sets his public announcement $\bar{B}$ to some arbitrary value (say, $\bar{B}=0$), while Alice sets $\bar{C}$ to a special symbol $\perp$. 

After $n$ rounds, both parties conduct two further classical postprocessing steps. First, they verify that the distribution of the test data lies within some predetermined set of probability distributions, $S_\Omega$. By aborting the protocol whenever the observed distribution lies outside $S_\Omega$, this step essentially ensures that Alice's and Bob's measurements produce (at least) weakly random bits, see e.g.~\cite{Arnon-Friedman:2018aa,TSB+22}.
Afterwards, the protocol concludes with a post-processing step, which converts Alice's and Bob's raw measurement data, into (almost) ideal states according to a suitable DIQKD security definition~\cite{Arnon-Friedman:2018aa,TSB+22}.

For clarity, we present an overall summary of this procedure as Protocol~\ref{Table: Protocol} below. More information regarding the classical post-processing steps can be found in~\ifarxiv Appendix~\ref{app:finitesize}\else the {\it Supplemental Material}\fi. Note that this protocol structure can also accommodate DI randomness expansion with only small changes, mostly in these classical post-processing steps; see e.g.~\cite{hahn2024boundspetzrenyidivergencesapplications} for such a description.
\begin{algorithm}[H]
    \floatname{algorithm}{}
    \caption{Cryptographic Protocol} \label{Table: Protocol}
    \begin{algorithmic}[1] 
        \State For all rounds, $i \in \{1,\dots,n \}$:
        \begin{algsubstates}
            \State Alice and Bob generate a common random bit $T_i$, such that $\operatorname{P}\left(T_i=0\right) = 1-\gamma$ and $\operatorname{P}\left(T_i=1\right) = \gamma$.
            \State If $T_i=0$, both parties will choose generation inputs $\left(X_i,Y_i\right) = \left(0,2\right)$. If $T_i=1$, both parties will choose test inputs $\left(X_i,Y_i\right) \in \{0,1\}^{ 2}$ uniformly at random. They supply the inputs to the devices and obtain outputs $(A_i,B_i)$.
        \end{algsubstates}
        \State Public announcements: Both parties announce all the input values $(X_i , Y_i)$. Bob also announces some other registers $\bar{B}_i$ for $i \in \left[n\right]$ as follows. If $T_i=0$, Bob announces $\bar{B}_i = 0$ and Alice sets $ \bar{C}_i = \perp$. If $T_i=1$, Bob announces $\bar{B}_i = B_i$, then Alice computes $\bar{C}_i$ according to the specified function of $\left(A_i,B_i,X_i,Y_i\right)$. 
        \State Acceptance test: Alice checks if the observed frequency distribution $\freq_{\bar{c}_1^n}$ lies inside some predetermined set $S_\Omega$, and aborts the protocol (via a public announcement) if it does not.
        \State Classical postprocessing: Alice and Bob perform some additional classical operations such as error correction and privacy amplification (see~\ifarxiv Appendix~\ref{app:finitesize}\else the {\it Supplemental Material}\fi \  for details) to generate their final keys.
    \end{algorithmic}
    \label{alg:OriginalProtocol}
\end{algorithm}
\ifarxiv\subsection{{\Renyi} EAT and Key Rate}\else\noindent{\it {\Renyi} EAT and Key Rate.|}\fi     

\begin{figure*}
\subfloat[\label{subfig:smalln}]{%
  \includegraphics[width=0.45\linewidth]{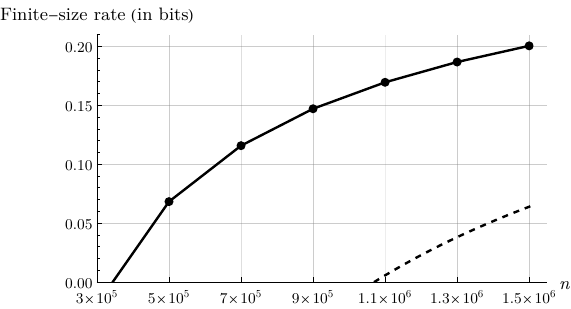}%
}
\subfloat[\label{subfig:largen}]{%
  \includegraphics[width=0.45\linewidth]{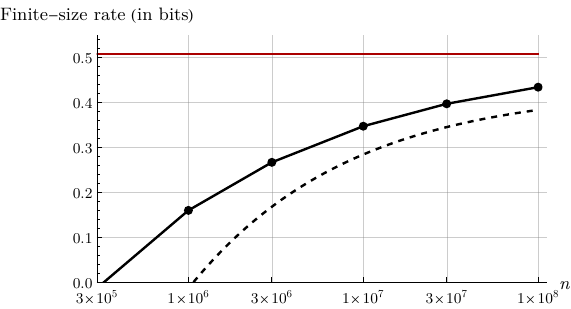}%
}
\caption{Plot of achievable finite-size key rates for the DIQKD experimental demonstration in~\cite{NDN+22}, as a function of number of rounds $n$ in two different ranges. The solid black curves show the results we obtain from our approach, the dashed black curves show the results previously computed in~\cite{NDN+22}, and the solid red line displays the asymptotic rate (omitted in Fig.~\ref{subfig:smalln} due to the lower key rates in that plot). Note that we have kept the protocol and parameter choices (other than the final key length) nearly identical to the one in~\cite{NDN+22}, except for minor changes and improvements we describe in~\ifarxiv Appendix~\ref{app:finitesize}\else the {\it Supplemental Material}\fi. We see there is a significant improvement in both the finite-size rate at the number of rounds used in that experiment ($n=1.5\times10^6$), and the minimum $n$ required to achieve nonzero finite-size rates. }
\label{fig:finitesize}
\end{figure*}

We provide a more detailed description of the finite-size security proof in~\ifarxiv Appendix~\ref{app:finitesize}\else the {\it Supplemental Material}\fi; here, we just outline the key steps of the proof. 
The global $n$-round state in the protocol after the public announcements is of the form $\rho_{{A}_{1}^{n} {B}_{1}^{n} \bar{{C}}_{1}^{n}{X}_{1}^{n}{Y}_{1}^{n}{T}_{1}^{n}\mathbf{E}}$, with $\mathbf{E}$ denoting quantum side-information Eve holds about the states in the devices (she also has access to the public announcements ${X}_{1}^{n}{Y}_{1}^{n}{T}_{1}^{n}$). 
Using the REAT, it was shown in~\cite[Lemmas~5.1~and~6.1]{arx_AHT24} that as long as the set $S_\Omega$ is convex, the total accumulated {\Renyi} entropy (conditioned on the acceptance test accepting, which we shall denote as the event $\OAT$) can be bounded by
    \begin{align}  \label{Eq: GREATBound}   \widetilde{H}_{\alpha}^{\uparrow}\left({A}_{1}^{n}\bar{{C}}_{1}^{n}|
{X}_{1}^{n}{Y}_{1}^{n}{T}_{1}^{n}\mathbf{E}\right)_{\rho_{|\OAT}} \geq n h_{\alpha} 
- \frac{\alpha}{\alpha-1} \log\frac{1}{\Pr[\OAT]} \, ,
    \end{align}
   where $h_{\alpha}$ is a quantity satisfying
   \begin{align}\label{eq:singlerndopt}
     h_{\alpha} \geq \inf_{\strat} \inf_{\vect{q} \in S_\mathrm{acc}}   \frac{1}{\alpha-1}D\left(
     \vect{q} \middle\Vert \vect{p}_\strat
     \right) + q(\perp)\widetilde{H}_{\alpha}^{\downarrow}\left( A|X=0,E\right) \, .
   \end{align}
Here, the optimization takes place over all single-round quantum strategies $\strat$, and probability distributions $\vect{q}$ (on a single-round test-data register $\bar{C}$) within the acceptance set $S_\Omega$. For each single-round strategy,  $\widetilde{H}_{\alpha}^{\downarrow}\left( A|X=0,E\right)$ refers to the corresponding {\Renyi} entropy of the state produced from that strategy, and the $D(\vect{q} \Vert \vect{p}_\strat)$ term denotes the Kullback-Leibler (KL) divergence (see e.g.~\cite{Cover}) between $\vect{q}$ and the distribution $\vect{p}_\strat$ produced by that quantum strategy.
   
Qualitatively, the above optimization has an intuitive informal interpretation, as follows. Observe that if the device behavior across the rounds were independent and identically distributed (IID), then in the asymptotic large-$n$ limit, the best bound we could hope for on the global entropy $\widetilde{H}_{\alpha}^{\uparrow}\left({A}_{1}^{n}\bar{{C}}_{1}^{n}|
{X}_{1}^{n}{Y}_{1}^{n}{T}_{1}^{n}\mathbf{E}\right)$ would be simply $n$ times of the minimal single-round entropy $\widetilde{H}_{\alpha}^{\downarrow}\left( A|X=0,E\right)$ over all strategies $\strat$ ``compatible with'' the accept condition $S_\Omega$ (more formally, such that the distribution $\vect{p}_\strat$ lies in $S_\Omega$).
The above optimization is similar in spirit to computing this minimal value, except that rather than the ``hard'' constraint of requiring $\vect{p}_\strat \in S_\Omega$, the KL divergence term serves to impose a ``soft'' version of this constraint, in that it acts as a ``penalty'' if $\vect{p}_\strat$ is far from $S_\Omega$ --- see~\cite[Sec.~5.2]{arx_AHT24} for more detailed exposition. We emphasize however that while this intuitive interpretation is informal, the bounds~\eqref{Eq: GREATBound}--\eqref{eq:singlerndopt} constitute a \emph{rigorous} lower bound on the global {\Renyi} entropy, against general (non-IID) attacks. 

In~\ifarxiv Appendix~\ref{app:finitesize}\else the {\it Supplemental Material}\fi, we explain how this bound can be used to compute finite-size key rates. Essentially, the LHS of Eq.\ \eqref{Eq: GREATBound} describes the {\Renyi} entropy that Alice generates over all rounds of the protocol (including the test data), conditioned on the side-information registers  ${X}_{1}^{n}{Y}_{1}^{n}{T}_{1}^{n}\mathbf{E}$. Other side-information that Eve obtains, such as $\bar{B}_{1}^{n}$, can be accounted for separately by using appropriate chain rules; see~\ifarxiv Appendix~\ref{app:finitesize}\else the {\it Supplemental Material }\fi~for details. The rate function derived in Theorem \ref{Theorem: SandwichBound} can then be used to provide tight bounds on the $\widetilde{H}_{\alpha}^{\downarrow}\left( A|X=0,E\right)$ term in $h_{\alpha}$, hence allowing us to compute finite-size key rates based on {\Renyi} entropies.

In Fig.~\ref{fig:finitesize}, we show the finite-size key rates obtained from this approach, as applied to the experimental parameters achieved in a DIQKD demonstration in~\cite{NDN+22}. We follow the parameters and implementation choices used in that work as closely as possible, apart from minor modifications we describe in~\ifarxiv Appendix~\ref{app:finitesize}\else the {\it Supplemental Material}\fi. (For Fig.~\ref{subfig:smalln} we also used exactly the same testing probability $\gamma$ as in that work, whereas for Fig.~\ref{subfig:largen} we optimized over the $\gamma$ value; we discuss the details of this choice in that appendix as well.) We see that at the value $n=1.5\times10^6$ used in that experiment, we improve the finite-size key rate by about a factor of $3$. Similarly, we also reduce the minimum $n$ required for nonzero finite-size key by nearly a factor of $3$. Such improvements are critical in the context of practical demonstrations of DIQKD, as they significantly reduce the experimental requirements for a desired length of final key.

As a final remark, we note that in~\cite{inprep_tradeoff}, a framework was developed to prove security for variable-length protocols, which do not simply make a binary accept/abort decision but rather adjust the length of the final key depending on the observed values. The key concept considered in their analysis is a ``weighted'' version of {\Renyi} entropy; refer to e.g.~\cite{fawzi2025additivity} for further details. Our bound in Theorem~\ref{Theorem: SandwichBound} can also be applied to bound these weighted {\Renyi} entropies, and would hence also be able to prove security for variable-length protocols, though we leave a detailed analysis for future work.
\ifarxiv\section{Conclusion}\else\medskip\noindent{\it Conclusion }\fi
This work represents an important step towards a practical implementation of DIQKD which represents the highest level of security, allowing for secret key generation using untrusted hardware. We leverage the R\'{e}nyi Entropy Accumulation Theorem~\cite{arx_AHT24} and demonstrate that it yields significantly tighter finite-size key rates for DIQKD protocols based on the CHSH inequality. To do this, we derive tight analytical bounds on R\'{e}nyi entropies in a device-independent manner, which in turn provide a tight relationship between the CHSH value and the amount of R\'{e}nyi entropy accumulated. Our results can be seen as a generalization of the expression derived in~\cite{pironio2009deviceindependent}, which we recover as a special case. In Figure~\ref{fig:finitesize}, we demonstrate the improvement in finite-size key rates for DIQKD protocols by comparing the rates from the DIQKD experimental demonstration in~\cite{NDN+22} to those that are achievable using our approach. In particular, we show that the work of~\cite{NDN+22} could triple their key-rates by using our technique, with no modifications to the experimental setup. 

Our work prompts several pertinent questions towards the end goal of practical DI cryptography. Firstly, whilst we were also able to derive a tight analytical bound for the ${\widetilde{H}_\alpha^\uparrow}$ entropy, that bound is currently not applicable to improving finite-size analysis, as the current tools (e.g., \cite{arx_AHT24}) use $\widetilde{H}_\alpha^{\downarrow}$ in their key rate expressions. Looking at the analogous results for device-dependent QKD~\cite{inprep_tradeoff,fawzi2025additivity,arq2025marginalconstrained}, the key rates are actually computed in terms of $\widetilde{H}_{\alpha}^\uparrow$, providing tighter bounds on the finite-size rates. If we could develop similar results in the DI setting, we would then be able to use the bound $f_{\widetilde{H}_\alpha^\uparrow}(\score)$ to obtain even higher finite-size key rates. In a second direction, the proofs of the analytical key rate formulae follow closely the work of~\cite{Woodhead_2021} and are fairly generic in nature, hence it is likely that they could be extended to other families of Bell inequalities to obtain further analytical key rate formulae, providing simpler and tighter security proofs for protocols beyond CHSH. 
It may also be of some interest to see whether our techniques apply to the broader family of {\Renyi} conditional entropies in~\cite{rubboli2024quantumconditionalentropies} that unifies the cases considered in this work.

\ifarxiv\acknowledgements\else\medskip\noindent{\it Acknowledgements }\fi
TH acknowledges support from the Peter and Patricia Gruber Award and by the Air Force Office of Scientific Research under award number FA9550-22-1-0391. 
AP acknowledges support from the National Science Centre Poland (Grant No. 2022/46/E/ST2/00115).
    EYZT conducted research at the Institute for Quantum Computing, at the University of Waterloo, which is supported by Innovation, Science, and Economic Development Canada; support was also provided by NSERC under the Discovery Grants Program, Grant No.~341495. PB acknowledges support from the European union’s Horizon
Europe research and innovation programme under the
project “Quantum Secure Networks Partnership” (QSNP,
grant agreement No. 101114043).

\bibliography{References.bib}

\newpage
\onecolumngrid

\ifarxiv\appendix\else\renewcommand{\thesection}{S\arabic{section}}\setcounter{equation}{0}\renewcommand{\theequation}{S\arabic{equation}}   \newpage\begin{center}\large\textbf{Supplemental Material for XXX}\end{center}\fi

\section{Notation and definitions} \label{app:definitions}

We begin by introducing the notation that we will be using. Quantum systems and their associated Hilbert spaces will often be denoted by capital letters, e.g. $A$. Given a space $A$, we denote the set of positive semidefinite operators acting on $A$ by $\Pos(A)$.
An operator $\rho \in \Pos(A)$ is called a \emph{quantum state} if we have $\tr{\rho} = 1$. The set of quantum states on $A$ is denoted by $\States_{=}(A)$. For two operators $\rho, \sigma \in \Pos(A)$ we write $\rho \ll \sigma$ if $\ker{\sigma} \subseteq \ker{\rho}$, where $\ker{\tau} \coloneqq  \{\ket{v} \,:\, \tau \ket{v} = 0\}$. Further, we say $\rho$ is orthogonal to $\sigma$, denoted by $\rho \perp \sigma$, if $\tr{\rho \sigma} = 0$. The function $\log$ denotes the logarithm base 2. We conclude this section with formal definitions of the conditional entropies considered in this work.  
	\begin{defn}
		Given any two positive semi-definite operators $\rho , \sigma\in \text{Pos}(A)$ with $\operatorname{Tr}\left[\rho\right] > 0$, and ${\alpha\in(1,\infty)}$, the \term{sandwiched {\Renyi} divergence} and \term{Petz-{\Renyi} divergence} between $\rho$, $\sigma$ are, respectively,  given by:
		\begin{align}
			\widetilde{D}_\alpha(\rho||\sigma) &\coloneqq \begin{cases}
				\frac{1}{\alpha-1}\log\frac{\operatorname{Tr} \norm{\sigma^{\frac{1-\alpha}{2\alpha}}\rho \sigma^{\frac{1-\alpha}{2\alpha}}}_\alpha^{\alpha}}{\operatorname{Tr}\left[\rho\right]} & \rho \ll \sigma \\
				+\infty & \text{otherwise} \; ,
			\end{cases}
            \intertext{and} 
            \bar{D}_\alpha(\rho||\sigma)
            &\coloneqq \begin{cases}
				\frac{1}{\alpha-1}\log\frac{\operatorname{Tr} \left[\rho^\alpha\sigma^{1-\alpha}\right]}{\operatorname{Tr}\left[\rho\right]} & \rho \ll \sigma \\
				+\infty & \text{otherwise} \; .
			\end{cases}  
		\end{align}
        These definitions are extended to $\alpha=1$ and $\alpha=\infty$ by taking the respective limits.
	\end{defn}	
	\begin{defn}\label{def:condent}
		For any bipartite, normalized state $\rho\in S_{=}(AB)$, and $\alpha\in  [1,\infty]$, we define the following \term{conditional {\Renyi} entropies}:
		\begin{align}
			&\widetilde{H}^{\downarrow}_\alpha(A|B)_\rho\coloneqq -\widetilde{D}_\alpha(\rho_{AB}||\mathds{1}_A\otimes\rho_B) \\
			& \widetilde{H}^{\uparrow}_\alpha(A|B)_\rho\coloneqq \sup_{\sigma \in S_{=}(B)} -\widetilde{D}_\alpha(\rho_{AB}||\mathds{1}_A\otimes\sigma_B) \label{Eq:HUpSandwichDef} \\
			&\Bar{H}^{\downarrow}_\alpha(A|B)_\rho\coloneqq -\bar{D}_\alpha(\rho_{AB}||\mathds{1}_A\otimes\rho_B)  \\
			&\Bar{H}^{\uparrow}_\alpha(A|B)_\rho\coloneqq \sup_{\sigma \in S_{=}(B)} -\bar{D}_\alpha(\rho_{AB}||\mathds{1}_A\otimes\sigma_B)
			\; .\label{Eq:HUpPetzDef}
		\end{align}
	\end{defn}
    It is these four conditional {\Renyi} entropy families that we focus on in this work. For the Petz-{\Renyi} entropies, i.e.\ the latter two expressions, we do not further consider $\alpha  > 2$, as data-processing inequalities do not generally hold in this range. Also note that $\widetilde{H}^{\uparrow}_\infty$ is often referred to as the min-entropy $H_{\min}$ (some works instead refer to $\widetilde{H}^{\downarrow}_\infty$ as the min-entropy, though we shall not use this convention in this work). 

	\section{Analytic Bounds and Proofs}
    \label{app:generalisations}
Our main result, which encompasses the asymmetric CHSH score\footnote{We will refer to the expected value of a Bell-inequality as a ``score'', despite it not being formulated as a nonlocal game, as the term ``value'' is ambiguous in certain places.}
\begin{align}
    \score_{\beta}= \sum_{abxy} (-1)^{xy + a + b} \beta^{1-x}\tr{\rho_{Q_AQ_BE} \left(M_a^{x} \otimes N_b^{y} \otimes \mathds{1}\right)} 
\end{align}
for all $\beta \in \mathbb{R}$~\cite{Woodhead_2021,Sekatski_2021}, is given by the following theorem, which summarizes the results we obtain in the rest of this section. The result for the CHSH inequality is recovered by setting $\beta=1$. We discuss how to include noisy preprocessing in Appendix~\ref{app:noisypreprocessing}.
     
    \begin{thm}\label{thm:allcasesnoNPP}
		Let $\lvert \beta \rvert \geq 1$, $\score_\beta \in \left[2\lvert \beta \rvert, 2\sqrt{1+\beta^2}\right]$, and $g_\score = \sqrt{\tfrac{\score^2_\beta }{4} - \beta^2}$. Then, we have 
		\begin{align}
		      f_{\widetilde{H}^{\downarrow}_\alpha}(\score_{\beta}) &= 1 + \frac{\alpha}{1-\alpha}\log \left[ \left(\frac{1-g_\score}{2}\right)^{\frac{1}{\alpha}} + \left(\frac{1+g_\score}{2} \right)^{\frac{1}{\alpha}}\right]\label{def:down_arrow_sandw} \\
            f_{\widetilde{H}^{\uparrow}_\alpha}(\score_{\beta}) &= 1 + \left( \frac{2\alpha -1}{1-\alpha} \right)\log \left[ \left(\frac{1-g_\score}{2}\right)^{\frac{\alpha}{2\alpha-1}} + \left(\frac{1+g_\score}{2} \right)^{\frac{\alpha}{2\alpha-1}}\right]\label{def:up_arrow_sandw}
		\end{align}
        for all $\alpha \in  (1,\infty)$. Similarly, 
        \begin{align}  
            f_{\Bar{H}^{\downarrow}_\alpha}(\score_{\beta}) &= 1 + \frac{1}{1-\alpha}\log \left[ \left(\frac{1-g_\score}{2}\right)^{2-\alpha} + \left(\frac{1+g_\score}{2} \right)^{2-\alpha}\right] \label{def:down_arrow_petz}\\
    	    f_{\Bar{H}^{\uparrow}_\alpha}(\score_{\beta}) &= 1 + \frac{\alpha}{1-\alpha}\log \left[ \left(\frac{1-g_\score}{2}\right)^{\frac{1}{\alpha}} + \left(\frac{1+g_\score}{2} \right)^{\frac{1}{\alpha}}\right]\label{def:up_arrow_petz} 
    	\end{align}
        for all $\alpha \in  (1,2)$. 
    \end{thm}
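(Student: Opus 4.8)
The plan is to prove each identity by establishing matching upper and lower bounds on the infimum in Eq.~\eqref{Eq: DefHTradeoffFunction}, treating all four entropies in parallel since their right-hand sides share the form $1 + c\log\!\left[P_+^{\mu}+P_-^{\mu}\right]$ with $P_\pm=\tfrac12(1\pm g_\score)$. For \emph{achievability} I would insert the explicit strategy of Eq.~\eqref{Eq: AttackSaturatesBound} together with the stated observables, verify it realizes $\score_\beta$ by evaluating the asymmetric Bell operator on that state, and then note that measuring $A_0=\sigma_z$ collapses everything into a binary classical--quantum state $\rho_{AE}=\tfrac12\sum_{a}\ketbra{a}{a}_A\otimes\ketbra{e_a}{e_a}_E$ whose two conditional vectors are pure with overlap $\inner{e_0}{e_1}=g_\score$, so that $\rho_E=\mathrm{diag}(P_+,P_-)$. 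For such a two-point state each entropy in Definition~\ref{def:condent} reduces to a finite-dimensional computation: for the two ``$\downarrow$'' entropies one evaluates the divergence directly against $\id_A\otimes\rho_E$, while for the two ``$\uparrow$'' entropies one additionally optimizes over $\sigma_E$ in Eqs.~\eqref{Eq:HUpSandwichDef} and~\eqref{Eq:HUpPetzDef}. By the reflection symmetry $e_0\leftrightarrow e_1$ the optimal $\sigma_E$ is diagonal, and the remaining scalar minimization of $P_+ q^{(1-\alpha)/\alpha}+P_-(1-q)^{(1-\alpha)/\alpha}$ over $q\in[0,1]$ is solved in closed form by a Lagrange / weighted-power-mean argument, producing the exponents $\tfrac{\alpha}{2\alpha-1}$ and $\tfrac1\alpha$ of Eqs.~\eqref{def:up_arrow_sandw} and~\eqref{def:up_arrow_petz}; substituting $P_\pm$ reproduces all four formulas.

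For \emph{optimality} I would follow the reduction of~\cite{Woodhead_2021}. By Jordan's lemma, Alice's two binary observables can be simultaneously block-diagonalized into blocks of dimension at most two, and likewise for Bob's. Since handing the classical block label $J$ to Eve can only decrease the conditional entropy, $\mathbb{H}(A|X{=}0,E)\ge \mathbb{H}(A|X{=}0,EJ)$, it suffices to lower-bound the latter, where within each block the strategy is an effective qubit strategy. The core step is then the single-block bound: any \emph{qubit} strategy with per-block score $\score_j$ satisfies $\mathbb{H}(A|X{=}0,E)\ge f_{\mathbb{H}}(\score_j)$, which I would prove by rotating $A_0$ to $\sigma_z$, purifying the two-qubit state into Eve, and showing by a symmetrization argument that Eve's conditional states may be taken pure and coplanar, so that the single overlap parameter is controlled by $\score_j$ exactly as in the achievability construction. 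Finally, using the closed-form expression for a {\Renyi} entropy conditioned on the classical register $J$ --- a generalized $\alpha$-mean of the per-block entropies --- I would recombine the blocks: with the linear constraint $\score_\beta=\sum_j p_j\,\score_j$, a Jensen argument applied to the auxiliary function $g(\score)=2^{(1-\alpha)f_{\mathbb{H}}(\score)}$ (and its analogue for the $\uparrow$ families) gives $\mathbb{H}(A|X{=}0,EJ)\ge f_{\mathbb{H}}(\score_\beta)$ provided $g$ is concave in $\score$, which matches the pure single-block attack and closes the gap.

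I expect the single-block qubit optimization to be the main obstacle: one must rule out any advantage Eve could gain from mixed or higher-dimensional conditional states and pin down the exact relation $g_{\score_j}=\sqrt{\score_j^2/4-\beta^2}$ for the asymmetric inequality, which is precisely where the $\beta$-dependence enters and where the tightness of the bound is decided. A secondary, more routine hurdle is verifying the concavity of $g$ (and its $\uparrow$-analogues, which reduce to convexity of $f_H$ as $\alpha\to1$) across the full parameter ranges $\alpha\in(1,\infty)$ for the sandwiched families and $\alpha\in(1,2)$ for the Petz families, as needed to make the block-combining step rigorous.
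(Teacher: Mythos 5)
Your proposal is correct in substance and shares the paper's overall skeleton: a Jordan-lemma block decomposition with the block label $I$ handed to Eve, a Woodhead-style single-block reduction showing Eve's conditional states may be taken pure with overlap controlled by the score (this is exactly the paper's Lemma~\ref{Lem: QProperties}, proven via a flip-and-mix construction with auxiliary registers and data processing rather than your ``coplanar symmetrization,'' but to the same effect), an explicit entropy computation on the resulting two-pure-state cq state, and a Jensen/concavity recombination over blocks matched against the explicit attack of Eq.~\eqref{Eq: AttackSaturatesBound}. Where you genuinely diverge is the treatment of the two $\uparrow$ families: you propose to solve the supremum over $\sigma_E$ head-on, reducing to diagonal $\sigma_E$ by symmetry and performing the scalar minimization of $P_+ q^{(1-\alpha)/\alpha}+P_-(1-q)^{(1-\alpha)/\alpha}$, which indeed yields the correct exponents (and your optimizer $q\propto P_\pm^{\alpha/(2\alpha-1)}$ is precisely the paper's unproven-optimal ansatz $\tau_E\propto\sigma_E^{\alpha/(2\alpha-1)}$). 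The paper instead sidesteps the supremum entirely: for $\widetilde{H}^{\uparrow}_\alpha$ it lower-bounds via that specific $\tau_E$ and upper-bounds the attack via $\widetilde{H}^{\uparrow}_\alpha \leq \widetilde{H}^{\downarrow}_{2-1/\alpha}$ \cite[Cor.~5.3]{Tomamichel2015QuantumIP}, collapsing the whole case to the $\downarrow$ formula at $\alpha'=2-1/\alpha$; for $\bar{H}^{\uparrow}_\alpha$ it invokes the closed form $\frac{\alpha}{1-\alpha}\log\operatorname{Tr}\bigl[\bigl(\operatorname{Tr}_A\sigma_{AE}^{\alpha}\bigr)^{1/\alpha}\bigr]$ of \cite[Lemma~5.1]{Tomamichel2015QuantumIP}. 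Your route buys a self-contained identification of the optimal conditioning state; the paper's buys less work and cleaner tightness. Note that your symmetrization step does need a one-line justification --- e.g.\ apply data processing under the mixed-unitary channel $\tau\mapsto\frac{1}{2}\bigl(\tau+(X_A\otimes Z_E)\tau(X_A\otimes Z_E)\bigr)$, which fixes $\rho_{AE}$ and maps $\id_A\otimes\sigma_E$ to $\id_A\otimes\bar{\sigma}_E$ with $\bar{\sigma}_E$ diagonal.

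Two points you gloss deserve flagging. First, the Woodhead reduction only yields the inequality $|\langle\psi_{=}|\psi_{\neq}\rangle| \geq g_{\score}$, not equality, so you cannot ``pin down'' the overlap directly; one must additionally prove the closed-form entropy expressions are monotone in the overlap to replace it by $g_{\score}$ (the paper proves $f_i'(x)\leq 0$ in Appendix~\ref{app: ConcavityProperties} precisely for this purpose). Second, the concavity needed for your block-recombination Jensen step is genuinely nontrivial across the full ranges $\alpha\in(1,\infty)$ (sandwiched) and $\alpha\in(1,2)$ (Petz) --- the paper's proofs rest on careful Taylor-remainder estimates --- so your closing remark correctly identifies it as a real hurdle rather than a formality; also, for general $\beta$ the attack must use the asymmetric Bob observables $B_{0,1}=(\beta\sigma_z\pm g_{\score}\sigma_x)/\sqrt{\beta^2+g_{\score}^2}$ rather than the $\beta=1$ ones stated in the main text. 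None of these gaps is fatal; with them filled, your argument goes through.
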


\begin{rem}
The above rate functions can be validly extended to $\alpha=\infty$ (for the first pair of formulas) and $\alpha=2$ (for the second pair of formulas) by taking the limits $\alpha\to\infty$ and $\alpha \nearrow 2$ respectively, as we prove and discuss further in Sec.~\ref{app:discont}. (Though for the case of $f_{\bar{H}^{\uparrow}_2}$, taking this limit is slightly unnecessary, in that one can directly substitute $\alpha=2$ to obtain a well-defined expression which also matches the limiting value.) 
Note that the resulting formulas for  $f_{\widetilde{H}^{\downarrow}_\infty}$ and $f_{\bar{H}^{\downarrow}_2}$ are discontinuous with respect to the CHSH score: more precisely, we have $f_{\widetilde{H}^{\downarrow}_\infty}(S) = f_{\bar{H}^{\downarrow}_2}(S) = 0$ for all $\score < 2\sqrt{2}$ and $f_{\widetilde{H}^{\downarrow}_\infty}(S) = f_{\bar{H}^{\downarrow}_2}(S) = 1$ for $\score = 2\sqrt{2}$. 
\end{rem}

The same bounds hold for $\alpha <1$, in suitable parameter ranges.
However, this regime is not generally useful for QKD, so we omit it for brevity. 
We also note that this approach can be extended to $\lvert \beta \rvert \leq 1$ by instead replacing the expression for $g_{\score}$ with that in Eq.~\eqref{Eq: betaleq1} and then taking a concave envelope at an appropriate point in the formula, essentially the same as what was done in~\cite{Woodhead_2021}. Note that taking this concave envelope is indeed necessary in this regime, as the formula resulting from only replacing the $g_{\score}$ expression does not generally satisfy the required convexity properties.

We note that the rate functions $f_{\widetilde{H}^{\downarrow}_\alpha}$ and $f_{\widetilde{H}^{\uparrow}_\alpha}$ are in fact related and we have $f_{\widetilde{H}^{\uparrow}_\alpha} = f_{\widetilde{H}^{\downarrow}_{2-1/\alpha}}$. From this, it immediately follows that our bounds satisfy $f_{\widetilde{H}^{\uparrow}_\infty}= f_{\widetilde{H}^{\downarrow}_2}$, which gives a reasoning for this special case that was observed in~\cite{murta2018realization}.

    \subsection{Qubit Reductions} \label{Sec: Qubit Reduction}

This section contains several known results, which are necessary for future calculations. When both honest parties are restricted to two-input two-output projective measurements, Jordan's lemma, see e.g.~\cite{pironio2009deviceindependent}, can be used to claim that it is sufficient to consider states and projective measurements of the form
    \begin{align}
        \rho_{IQ_AQ_B} &= \sum_{i} \operatorname{Pr}\left[ I=i\right]\ketbra{i}{i}_{I} \otimes \rho_{Q_{A} Q_{B}}^{i}  \\
        M_a^x &= \sum_{i} \ketbra{i}{i}_{I} \otimes M_a^{i,x}  \\
        N_b^y &= \sum_{i} \ketbra{i}{i}_{I} \otimes N_b^{i,y} \, ,
    \end{align}
    where $Q_{A}$, $Q_{B}$ are single-qubit Hilbert spaces on which the projective measurements $M_a^{i,x},N_b^{i,y}$ act. Moreover, Eve's side-information consists of the classical register $I$, as well as the purification of the state $\rho_{Q_{A} Q_{B}}^{i}$, for any value $I=i$. We denote this $\textit{pure}$ tripartite state by $\rho_{Q_{A} Q_{B} E}^{i}$, and the post-measurement state is given by
    \begin{align}
        \rho_{IABE}^{xy} = \sum_{i} \operatorname{Pr} \left[ I=i\right] \ketbra{i}{i}_{I} \otimes \sum_{ab} \left[ \ketbra{ab}{ab}_{AB} \otimes \rho_E^{iabxy} \right]\, .
    \end{align}
    Using this qubit reduction, the asymmetric CHSH score $\score_{\beta}$ will similarly be expressed as a convex mixture over all $i$, i.e.
    \begin{align}
        \score_{\beta} &= \sum_{i}\operatorname{Pr}\left[ I=i\right] \score^{i}_{\beta}  \\
		\score^{i}_{\beta} &= \sum_{abxy} (-1)^{xy + a + b}  \beta^{1-x}\tr{\rho_{Q_AQ_BE}^i \left(M_a^{i,x} \otimes N_b^{i,y} \otimes \mathds{1}\right)}  \; .
	\end{align} 
    After applying the key-generation measurement, Alice's and Eve's joint post-measurement state is given by
    \begin{align} 
        \rho_{IAE} = \sum_{i} \operatorname{Pr} \left[ I=i\right] \ketbra{i}{i}_{I} \otimes \sum_{a} \left[ \ketbra{a}{a}_{A} \otimes \rho_E^{ia} \right]
        \, ,
    \end{align}
where the measurement input $X=0$ is kept implicit and $\rho_E^{ia} = \ptr{Q_{A}}{\rho_{Q_{A}E}^{i} (M_a^{i,0} \otimes \id_E)} $. 

Our goal is now to lower bound $\mathbb{H}(A|I=i, E)_{\rho}$ for each value $I=i$; or, in other words, to lower bound the entropy for states generated by qubit strategies. We note that by using the methods from~\cite{Woodhead_2021}, one can easily show the following (see Lemma~\ref{Lem: QProperties} below for a general version of this property): each such state satisfies
\begin{align} \label{Eq: SingleISimplification}
    \mathbb{H}(A|I=i, E)_{\rho} \geq \mathbb{H}(A|I=i, E)_{\sigma} \, ,
\end{align}
where 
\begin{align}
    \sigma_{IAE} &= \sum_{i} \operatorname{Pr} \left[ I=i\right] \ketbra{i}{i}_{I} \otimes \sigma_{AE}^{i} \label{Eq: AllISigma}\\
    \sigma_{AE}^{i} &= \frac{1}{2} \ketbra{0}{0} \otimes \ketbra{\psi_{=}}{\psi_{=}} +  \frac{1}{2} \ketbra{1}{1} \otimes \ketbra{\psi_{\neq}}{\psi_{\neq}} \, , \label{Eq: SingleISigma}
\end{align}
for a pair of vectors $\{\ket{\psi_{=}},\ket{\psi_{\neq}} \}$ that can be written in the following form (for some basis vectors $\{\ket{0},\ket{1}\}$ of a two-dimensional subspace containing the span of $\{\ket{\psi_{=}},\ket{\psi_{\neq}} \}$):
\begin{align}
  \ket{\psi_{=}} &= \ket{0} \label{Eq: psieq}\\
    \ket{\psi_{\neq}} &= g_{\score}^i\ket{0} + \sqrt{1-g_{\score}^{i2}}\ket{1} \, , \label{Eq: psineq}
\end{align}
where $g_{\score}^i=\sqrt{\tfrac{\score^{i2}_{\beta} }{4} -\beta^2}$ for $\lvert \beta \rvert \geq 1$. For $\lvert \beta \rvert \leq 1$,  one instead uses $g_{\score}^i = E_{\beta}(\score_\beta^i)$, where
	\begin{align}
		E_{\beta}=\begin{cases}
			\sqrt{\tfrac{\score^{i2}_{\beta} }{4} -\beta^2}, & \text{if }\lvert \score^{i}_{\beta} \rvert  \geq 2\sqrt{1+\beta^2-\beta^4}\\
			\sqrt{1-\left( 1-\frac{1}{|\beta|}\sqrt{\left(1-\beta^2\right)\left(\frac{\score^{i2}_{\beta}}{4}-1\right)}\right)^2}, & \text{if }\lvert \score^{i}_{\beta} \rvert  \leq 2\sqrt{1+\beta^2-\beta^4} \; . \label{Eq: betaleq1}
		\end{cases}  
\end{align}
More generally, the above property of qubit strategies is an instance of the following lemma. Here, one should view $\mathbb{Q}$ as the function for which (depending on which case we are considering) either $\mathbb{H}^{\downarrow}\left(A|B\right) = \frac{1}{1-\alpha}\log \mathbb{Q}\left(A|B\right)$ or $\mathbb{H}^{\uparrow}\left(A|B\right) = \frac{\alpha}{1-\alpha}\log \mathbb{Q}\left(A|B\right)$ holds, using $\mathbb{H}$ to generically represent either Petz or sandwiched entropy.

\begin{lem} \label{Lem: QProperties}
    Suppose $\mathbb{Q}(A|B): \States_{=}(AB) \to \mathbb{R}$ is a function satisfying
    \begin{enumerate}
        \item (Local unitary invariance): For any unitary $V$ on $A$ we have $\mathbb{Q}(A|B)_{\rho_{AB}}= \mathbb{Q}(A|B)_{V \rho_{AB} V^\dagger}$. 
        \item (Classical linearity): For any state $\rho_{ABC} = \sum_c \Pr[C=c] \rho_{AB}^c \otimes \ketbra{c}{c}_C$ classical on $C$, we have  $\mathbb{Q}(A|BC)_{\rho_{ABC}} = \sum_{c} \Pr[C=c] \mathbb{Q}(A|B)_{\rho_{AB}^c}$.
        \item (Data processing): For any $\rho_{ABC} \in \States_{=}(ABC)$ we have $\mathbb{Q}(A|BC)_{\rho_{ABC}} \geq \mathbb{Q}(A|B)_{\rho_{AB}}$.
    \end{enumerate}

     Let $\ket{\psi} \in Q_AQ_BE$ with $Q_A$ and $Q_B$ being qubit systems, let $\{M_a\}_a$ be a rank-one projective measurement on $Q_A$ and let
        \begin{equation}
            \rho_{AE} = \sum_a \ketbra{a}{a}_A \otimes \rho_E^a
        \end{equation}
        be the post-measurement state such that $g_{\score} \geq 0$. Then 
        there exists a state 
        \begin{equation}
                    \sigma_{AE} = \frac{1}{2} \ketbra{0}{0} \otimes \ketbra{\psi_{=}}{\psi_{=}}  +  \frac{1}{2} \ketbra{1}{1} \otimes \ketbra{\psi_{\neq}}{\psi_{\neq}} \label{Eq: AltStateEmbedded}
        \end{equation}
 such that $|\inner{\psi_{=}}{\psi_{\neq}}| \geq g_{\score}$ and 
        \begin{equation}
            \mathbb{Q}(A|E)_{\rho_{AE}} \leq \mathbb{Q}(A|E)_{\sigma_{AE}}  \, , \label{Eq: QUpperBound}
        \end{equation}
where $g_{\score}$ is as defined above and, for any $\beta \in \mathbb{R}$, depends on the asymmetric CHSH score, $\score_\beta$, achieved by the above system.
\end{lem}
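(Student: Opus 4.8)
The plan is to prove the three conclusions of the lemma --- that Eve's conditional states may be taken pure, that they live in a single qubit, and that their overlap is at least $g_{\score}$ --- leaning on the three abstract properties of $\mathbb{Q}$ wherever possible, so that one argument covers all four entropy families simultaneously. I would begin by reducing the (generically mixed) conditional states $\rho_E^a$ to pure ones. Each arises as $\rho_E^a = \ptr{Q_B}{\ketbra{\hat\phi_a}{\hat\phi_a}}$ for the vector $\ket{\hat\phi_a} = (\bra{a}_{Q_A}\otimes\id)\ket{\psi}$, so it has rank at most $\dim Q_B = 2$. I would purify $\rho_E^0,\rho_E^1$ into a common ancilla $E'$ and hand $E'$ to Eve; by data processing (property~3) this can only raise $\mathbb{Q}$, giving $\mathbb{Q}(A|E)_{\rho_{AE}}\leq\mathbb{Q}(A|EE')_{\rho'}$ with $\rho'$ carrying pure conditionals $\ket{\Phi_a}$. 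Choosing Uhlmann-optimal purifications fixes their overlap to the fidelity $F = F(\rho_E^0,\rho_E^1)$, and a unitary on $EE'$ then rotates the span of $\{\ket{\Phi_0},\ket{\Phi_1}\}$ into a fixed qubit. Since a unitary is reversible, property~3 applied in both directions shows $\mathbb{Q}$ is unchanged, so we land on a standard-form state $\sigma_{(p_0,p_1),F}$ with two pure conditionals on a qubit, marginal $(p_0,p_1)$ and overlap $F$, satisfying $\mathbb{Q}(A|E)_{\rho_{AE}}\leq\mathbb{Q}(A|E)_{\sigma_{(p_0,p_1),F}}$.

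It then remains to pass to a uniform marginal while controlling the overlap. The quantitative heart of the argument is the CHSH-to-overlap bound: following the two-qubit observable algebra of~\cite{Woodhead_2021}, writing Alice's rank-one projective measurement and Bob's two observables in Jordan normal form turns the asymmetric score $\score_\beta$ into a function of the distinguishability of Eve's conditional states, which yields the inequality $2\sqrt{p_0 p_1}\,F \geq g_{\score}$ (note $2\sqrt{p_0 p_1}\,F = 2\max_{U}\lvert\langle \ketbra{0}{1}_{Q_A}\otimes U_{Q_B}\rangle_\psi\rvert$ is exactly the coherence--correlation controlled by $\score_\beta$). I would then take $\sigma_{AE}$ to be the \emph{uniform}-marginal standard state whose conditionals $\{\ket{\psi_=},\ket{\psi_{\neq}}\}$ have overlap $2\sqrt{p_0 p_1}\,F$, which by the bound above is $\geq g_{\score}$, using property~1 (the bit flip on $A$) and property~2 (classical linearity) to organize the comparison, and verify that $\mathbb{Q}(A|E)_{\sigma_{(p_0,p_1),F}} \leq \mathbb{Q}(A|E)_{\sigma_{AE}}$.

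The main obstacle is precisely this last verification and its entanglement with the overlap bound. A direct computation (already transparent for the min-entropy, where $\mathbb{Q}$ depends on the conditionals only through $2\sqrt{p_0 p_1}\,F$) shows that at \emph{fixed} overlap the uniform marginal actually \emph{minimizes} $\mathbb{Q}$, so naive symmetrization via properties~1--3 moves $\mathbb{Q}$ in the wrong direction. The uniformization must therefore be coupled to the correct rescaling of the overlap --- replacing $F$ by $2\sqrt{p_0 p_1}\,F$ --- and one must check, through monotonicity of $\mathbb{Q}$ in the overlap, that the uniform state with overlap $g_{\score}$ nonetheless upper bounds $\mathbb{Q}(\rho_{AE})$. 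Making this trade-off rigorous uniformly in $\alpha$ and for both the sandwiched and Petz families, rather than entropy-by-entropy, together with pinning down the CHSH-to-overlap estimate $2\sqrt{p_0 p_1}\,F \geq g_{\score}$, is the delicate work; the reduction to pure conditionals supported on a qubit is the part that follows formally from properties~1--3.
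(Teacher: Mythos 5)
Your opening reduction (purify Eve's conditional states and hand the purifying register to Eve via data processing, then embed the two-dimensional span of the resulting pure conditionals into a qubit) is sound and in the same spirit as the paper, and your inequality $2\sqrt{p_0 p_1}\,F \geq g_{\score}$ is in fact true: $\sqrt{p_0p_1}F$ is the maximal overlap over purifications, which dominates the overlap of the specific $\sigma_X$-twisted purification that \cite[Eqs.~(73) and~(95)]{Woodhead_2021} bounds below by $g_{\score}$. But there is a genuine gap at precisely the step you yourself flag as delicate: you never prove the comparison $\mathbb{Q}(A|E)_{\sigma_{(p_0,p_1),F}} \leq \mathbb{Q}(A|E)_{\sigma_{\mathrm{unif}}}$ where $\sigma_{\mathrm{unif}}$ has uniform marginal and overlap $2\sqrt{p_0p_1}F$, and ``monotonicity of $\mathbb{Q}$ in the overlap'' cannot supply it, since that tool compares states with the \emph{same} marginal and different overlaps, whereas here the marginal and the overlap change simultaneously. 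For the min-entropy the two sides coincide by Helstrom, as you note, but for the sandwiched and Petz families at generic $\alpha$ your chain of inequalities is missing its central link, and properties 1--3 as you deploy them (purify first, uniformize second) do not close it.

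The missing idea --- and how the paper's proof actually runs --- is to reverse your order of operations: symmetrize \emph{before} purifying. One mixes $\rho_{AE}$ with its $A$-bit-flipped copy, together with a flag register $F$ handed to Eve; by local unitary invariance (property 1) and classical linearity (property 2) this leaves $\mathbb{Q}$ \emph{exactly unchanged} while making the $A$-marginal uniform at no cost. Then a single correlated purification does the rest: writing $\ket{\psi} = \ket{0}\ket{\psi_0}_{BE} + \ket{1}\ket{\psi_1}_{BE}$, the symmetrized state extends to one with pure conditionals $\ket{\psi_{=}} = \ket{\psi_0}\ket{00}_{FF'} + \ket{\psi_1'}\ket{11}_{FF'}$ and $\ket{\psi_{\neq}} = \ket{\psi_1'}\ket{00}_{FF'} + \ket{\psi_0}\ket{11}_{FF'}$, with $\ket{\psi_1'} = (\sigma_X\otimes\id)\ket{\psi_1}$; data processing (property 3, giving Eve $BFF'$) then yields in one shot a uniform-marginal state with pure conditionals that is automatically normalized, with overlap $\inner{\psi_{=}}{\psi_{\neq}} = 2\,\mathrm{Re}\braket{\psi_0|(\sigma_X\otimes\id)|\psi_1} \geq g_{\score}$ by the cited Woodhead bounds. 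This turns your ``coupled rescaling of the overlap'' from a trade-off to be verified entropy-by-entropy into a formal consequence of properties 1--3; your worry that symmetrization moves $\mathbb{Q}$ the wrong way dissolves because the flip-and-flag construction does not hold the overlap fixed --- it holds $\mathbb{Q}$ fixed and lets the subsequent purification determine the new overlap. Note also that the lemma itself requires neither Uhlmann-optimal purifications nor monotonicity in the overlap: it only asserts $|\inner{\psi_{=}}{\psi_{\neq}}| \geq g_{\score}$, and the replacement of the actual overlap by $g_{\score}$ is deferred to a separate remark, with the needed monotonicity proven independently in the concavity appendix.
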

 \begin{proof}
            The proof is detailed for the special case of the von Neumann entropy in~\cite{Woodhead_2021}. We extend it to the more general setting considered here but we note that it remains almost exactly the same proof. 
Without loss of generality, let us assume that $\{M_a\}_a$ is a measurement in the computational basis.\footnote{Any two-output projective measurement on a qubit is related to it via a unitary, which we can implicitly apply a priori.}
Let 
\begin{align}
    \rho_{AE} &= \sum_a \ketbra{a}{a}_A \otimes \rho_E^a \\
    \rho^\prime_{AE} &= \sum_a \ketbra{a \oplus 1}{a\oplus 1}_A \otimes \rho_E^a   \, .
\end{align}
Due to local unitary invariance, we have that $\mathbb{Q}(A|E)_{\rho_{AE}} = \mathbb{Q}(A|E)_{\rho_{AE}^\prime}$. Moreover, using classical linearity, it holds that 
\begin{align}
    \mathbb{Q}(A|EF)_{\bar{\rho}_{AEF}} = \mathbb{Q}(A|E)_{\rho_{AE}} \, ,
\end{align}
where
\begin{align}
    \bar{\rho}_{AEF} &= \frac{1}{2} \rho_{AE}\otimes \ketbra{0}{0}_{F} +\frac{1}{2} \rho_{AE}^{\prime}\otimes \ketbra{1}{1}_{F} \\
    &= \frac{1}{2} \ketbra{0}{0}_{A} \otimes \left(\sum_{a}\rho_{E}^{a}\otimes \ketbra{a}{a}_{F} \right) + \frac{1}{2} \ketbra{1}{1}_{A} \otimes \left(\sum_{a}\rho_{E}^{a \oplus 1}\otimes \ketbra{a}{a}_{F} \right)   \, .
\end{align}
Furthermore, the initial state can be written as 
\begin{align}
    \ket{\psi}_{Q_AQ_BE} = \ket{0}_{Q_A}\otimes \ket{\psi_{0}}_{BE}+\ket{1}_{Q_A}\otimes \ket{\psi_{1}}_{BE} \, ,
\end{align}
and $\ket{\psi_{a}}_{BE}$ can be viewed as purifications of $\rho_{E}^{a}$. One potential extension of $\bar{\rho}_{AEF}$ is thus
\begin{align}
    \bar{\rho}_{ABEFF^\prime} = \frac{1}{2} \ketbra{0}{0}_{A} \otimes \ketbra{\psi_{=}}{\psi_{=}}_{BEFF^\prime} + \frac{1}{2} \ketbra{1}{1}_{A} \otimes \ketbra{\psi_{\neq}}{\psi_{\neq}}_{BEFF^\prime} \, ,
\end{align}
where 
\begin{align}
    \ket{\psi_{=}} &= \ket{\psi_{0}}_{BE} \otimes \ket{00}_{FF^\prime} + \ket{\psi_{1}^\prime}_{BE} \otimes \ket{11}_{FF^\prime} \\
    \ket{\psi_{\neq}} &= \ket{\psi_{1}^\prime}_{BE} \otimes \ket{00}_{FF^\prime} + \ket{\psi_{0}}_{BE} \otimes \ket{11}_{FF^\prime} \, ,
\end{align}
and $\ket{\psi_{1}^\prime}_{BE} = \left(\sigma_X \otimes \mathds{1} \right) \ket{\psi_{1}}_{BE}$, where $\sigma_X$ denotes the corresponding Pauli operator. Due to the data processing inequality, it holds that 
\begin{align}
    \mathbb{Q}(A|E)_{\rho_{AE}} \leq \mathbb{Q}(A|BEFF^\prime)_{\bar{\rho}_{ABEFF^\prime}} \, .
\end{align}
We have thus identified a state for which the desired inequality holds. The states $\{\ket{\psi_{=}},\ket{\psi_{\neq}}\}$ span (at most) a two-dimensional subspace, which can be embedded in the Hilbert space $E$, yielding the state in Eq.~\eqref{Eq: AltStateEmbedded}. Moreover, by~\cite[Eqs.~(73) and (95)]{Woodhead_2021} it must hold that $|\inner{\psi_{=}}{\psi_{\neq}}| \geq g_{\score}$.
        \end{proof}

\begin{rem}
    Since $\{\ket{\psi_{=}},\ket{\psi_{\neq}}\}$ span at most a two-dimensional subspace, they
can always be written as 
\begin{align}
     \ket{\psi_{=}} &= e^{i \phi}\ket{0} \\
    \ket{\psi_{\neq}} &= |\inner{\psi_{=}}{\psi_{\neq}}|\ket{0} + \sqrt{1-|\inner{\psi_{=}}{\psi_{\neq}}|^{2}}\ket{1} \, .
\end{align}
However, as can be seen in Eq.~\eqref{Eq: AltStateEmbedded}, the angle $\phi$ disappears. As such, one can without loss of generality set it to $\phi =0$. Moreover, in principle, the rate functions we derive in the following sections should depend on $|\inner{\psi_{=}}{\psi_{\neq}}|$ rather than $g_{\score}$. However, all our rate functions are monotonically increasing in the score (also in $|\inner{\psi_{=}}{\psi_{\neq}}|$), and one can thus further lower bound the entropy by replacing $|\inner{\psi_{=}}{\psi_{\neq}}|$ with $g_{\score}$. This is a property that is also implicitly used in~\cite{Woodhead_2021}, and it is this feature that allows us to simply consider states of the form given by Eq.~\eqref{Eq: SingleISigma}. In our case, monotonicity is proven in Appendix~\ref{app: ConcavityProperties}.
\end{rem}
\begin{rem}
The last two properties we wish to mention are as follows.  Due to the relation between $\mathbb{Q}(A|E)$, and $\mathbb{H}(A|E)$, Eq.~\eqref{Eq: QUpperBound} immediately implies Eq.~\eqref{Eq: SingleISimplification}. Also, we do not consider degenerate measurements, which only have one potential outcome, in Lemma~\ref{Lem: QProperties}. The reason for this is simple, and also discussed in~\cite{Woodhead_2021}: not only can such a measurement never be part of a CHSH set-up that violates the (asymmetric) CHSH inequality, but Alice's output would be deterministic. By choosing $\{\ket{\psi_{=}},\ket{\psi_{\neq}}\}$ such that $|\inner{\psi_{=}}{\psi_{\neq}}|=0$, we have $\mathbb{H}(A|I=i, E)_{\rho} = \mathbb{H}(A|I=i, E)_{\sigma}=0$. Eq.~\eqref{Eq: SingleISimplification} is thus trivially satisfied. Similarly, without loss of generality, we only consider set-ups for which $g_{\score}$ is well-defined. As an example, for $\beta =1 $, a well-defined $g_{\score}$ corresponds to achieving a CHSH score of $\score \geq 2$. These other cases would not violate the desired (asymmetric) CHSH inequality. By instead choosing $\score_{\beta}$ such that $g_{\score}=0$, one achieves a higher score and we still bound any {\Renyi} entropy by $0$. 
\end{rem}
\subsection{Derivation of \texorpdfstring{$f_{\widetilde{H}^{\downarrow}_\alpha}(\score_{\beta})$}{Downarrow Sandwiched {\Renyi} Entropy}}

\begin{thm}
		Let $\alpha \in   (1,\infty)$, $\lvert \beta \rvert \geq 1$,  and $\score_\beta \in \left[2\lvert \beta \rvert, 2\sqrt{1+\beta^2}\right]$. Then
		\begin{align}		f_{\widetilde{H}^{\downarrow}_\alpha}(\score_{\beta}) = 1 + \frac{\alpha}{1-\alpha}\log \left[ \left(\frac{1-g_\score}{2}\right)^{\frac{1}{\alpha}} + \left(\frac{1+g_\score}{2} \right)^{\frac{1}{\alpha}}\right]  \; ,
		\end{align}
        where 
$g_\score = \sqrt{\tfrac{\score^2_\beta }{4} - \beta^2}$.
	\end{thm}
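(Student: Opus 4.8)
The plan is to follow the strategy of~\cite{Woodhead_2021}, which extends to the present setting precisely because $\mathbb{Q}$ satisfies the three properties of Lemma~\ref{Lem: QProperties}. Writing $\widetilde{H}^{\downarrow}_\alpha(A|E) = \frac{1}{1-\alpha}\log \mathbb{Q}(A|E)$ with $\mathbb{Q}(A|E)_\rho = \operatorname{Tr}\big[(\rho_E^{\frac{1-\alpha}{2\alpha}}\rho_{AE}\rho_E^{\frac{1-\alpha}{2\alpha}})^\alpha\big]$, I would first invoke the qubit reduction of Section~\ref{Sec: Qubit Reduction} to restrict to block-diagonal qubit strategies carrying Eve's classical register $I$. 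Classical linearity gives $\mathbb{Q}(A|IE) = \sum_i \Pr[I=i]\,\mathbb{Q}(A|E)_{\rho^i}$, where each block $\rho^i$ comes from a qubit strategy achieving asymmetric CHSH score $\score^i_\beta$; Lemma~\ref{Lem: QProperties} then bounds each block by the canonical state $\sigma^i_{AE}$ of Eq.~\eqref{Eq: AltStateEmbedded} with $|\inner{\psi_{=}}{\psi_{\neq}}| \geq g^i_\score$, so that $\mathbb{Q}(A|E)_{\rho^i} \leq \mathbb{Q}(A|E)_{\sigma^i}$.

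The first concrete computation is to evaluate $\mathbb{Q}(A|E)_\sigma$ for the canonical state. Since $\sigma_{AE}$ is classical on $A$ and $\sigma_E^{\frac{1-\alpha}{2\alpha}}$ acts only on $E$, conjugation preserves the $A$-block structure and each block is rank one, giving $\mathbb{Q}(A|E)_\sigma = 2^{-\alpha}\big[(\bra{\psi_{=}}\sigma_E^{\frac{1-\alpha}{\alpha}}\ket{\psi_{=}})^\alpha + (\bra{\psi_{\neq}}\sigma_E^{\frac{1-\alpha}{\alpha}}\ket{\psi_{\neq}})^\alpha\big]$. I would then diagonalize $\sigma_E = \frac12(\ketbra{\psi_{=}}{\psi_{=}} + \ketbra{\psi_{\neq}}{\psi_{\neq}})$, whose eigenvalues are $\frac{1\pm g_\score}{2}$, and note that the reflection exchanging $\ket{\psi_{=}}$ and $\ket{\psi_{\neq}}$ fixes $\sigma_E$, so the two summands coincide. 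Computing the eigenvector overlaps $|\inner{e_\pm}{\psi_{=}}|^2 = \frac{1\pm g_\score}{2}$ and using $\frac{1-\alpha}{\alpha}+1 = \frac{1}{\alpha}$ yields $\bra{\psi_{=}}\sigma_E^{\frac{1-\alpha}{\alpha}}\ket{\psi_{=}} = (\frac{1-g_\score}{2})^{\frac{1}{\alpha}} + (\frac{1+g_\score}{2})^{\frac{1}{\alpha}}$, hence $\mathbb{Q}(A|E)_\sigma = 2^{1-\alpha}[(\frac{1-g_\score}{2})^{\frac{1}{\alpha}} + (\frac{1+g_\score}{2})^{\frac{1}{\alpha}}]^\alpha$. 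Applying $\frac{1}{1-\alpha}\log(\cdot)$ and $\log 2 = 1$ then reproduces the claimed formula exactly, for a single block with $g = g^i_\score$.

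It remains to combine the blocks and to exhibit a matching attack. Define $\tilde{\mathbb{Q}}(\score) \coloneqq 2^{1-\alpha}[(\frac{1-g}{2})^{\frac{1}{\alpha}} + (\frac{1+g}{2})^{\frac{1}{\alpha}}]^\alpha$ with $g = \sqrt{\score^2/4 - \beta^2}$. Because $\tilde{\mathbb{Q}}$ is decreasing in $g$, the bound $|\inner{\psi_{=}}{\psi_{\neq}}| \geq g^i_\score$ gives $\mathbb{Q}(A|E)_{\sigma^i} \leq \tilde{\mathbb{Q}}(\score^i_\beta)$, so that $\mathbb{Q}(A|IE) \leq \sum_i \Pr[I=i]\,\tilde{\mathbb{Q}}(\score^i_\beta)$. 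If $\tilde{\mathbb{Q}}$ is concave in $\score$, then Jensen's inequality together with $\score_\beta = \sum_i \Pr[I=i]\,\score^i_\beta$ gives $\mathbb{Q}(A|IE) \leq \tilde{\mathbb{Q}}(\score_\beta)$, whence $\widetilde{H}^{\downarrow}_\alpha(A|IE) \geq \frac{1}{1-\alpha}\log\tilde{\mathbb{Q}}(\score_\beta)$, the asserted value. The reverse inequality follows from the explicit single-block attack of Eq.~\eqref{Eq: AttackSaturatesBound} with $g = g_\score$, which attains it. I expect the main obstacle to be exactly the concavity (and the underlying monotonicity) of $\tilde{\mathbb{Q}}$ as a function of the score on $[2|\beta|, 2\sqrt{1+\beta^2}]$: this is a one-variable statement, but the composition with $g(\score) = \sqrt{\score^2/4 - \beta^2}$ makes the sign of the second derivative delicate, and it is precisely the property that is known to fail (necessitating a concave envelope) once $|\beta| < 1$.
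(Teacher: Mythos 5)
Your proposal follows the paper's proof route essentially step for step: the Jordan-lemma qubit reduction together with Lemma~\ref{Lem: QProperties}, explicit evaluation of $\mathbb{Q}(A|E)$ on the canonical two-block states of Eq.~\eqref{Eq: SingleISigma}, then monotonicity in $g_\score$ plus concavity in the score and Jensen's inequality to remove the dependence on Eve's classical register $I$ (the paper implements this last step via Eq.~\eqref{Eq: SingleISimplification} and \cite[Prop.~5.1]{Tomamichel2015QuantumIP}, which is exactly your ``classical linearity of $\mathbb{Q}$''), and tightness from an explicit attack. Your single-block computation is correct and is a slightly slicker packaging of the paper's: you observe that conjugation by $\sigma_E^{\frac{1-\alpha}{2\alpha}}$ preserves the $A$-block structure and leaves each block rank one, so $\mathbb{Q}(A|E)_\sigma = 2^{-\alpha}\big[\big(\bra{\psi_{=}}\sigma_E^{\frac{1-\alpha}{\alpha}}\ket{\psi_{=}}\big)^\alpha + \big(\bra{\psi_{\neq}}\sigma_E^{\frac{1-\alpha}{\alpha}}\ket{\psi_{\neq}}\big)^\alpha\big]$, and the eigendecomposition of $\sigma_E$ with eigenvalues $\frac{1\pm g_\score}{2}$ and overlaps $\frac{1\pm g_\score}{2}$ gives $2^{1-\alpha}\big[\big(\frac{1-g_\score}{2}\big)^{\frac{1}{\alpha}}+\big(\frac{1+g_\score}{2}\big)^{\frac{1}{\alpha}}\big]^{\alpha}$, identical to what the paper extracts from Eq.~\eqref{Eq: SandwichedInterimExpression}.

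The one substantive piece you leave conditional is the concavity (and, to a lesser extent, the monotonicity) of $\tilde{\mathbb{Q}}$ as a function of $\score_\beta$ on $\left[2\lvert\beta\rvert, 2\sqrt{1+\beta^2}\right]$, and you are right that this is the crux rather than a routine verification: the paper devotes Appendix~\ref{app: ConcavityProperties} to it. There the problem is factored as $h_1(\score_\beta) = f_1(\bar{g}(\score_\beta))$ with $\bar{g}(\score_\beta)=\tfrac{\score_\beta^2}{4}-\beta^2$ convex, reducing the claim to showing that $f_1(x)=\big(\big(1-\sqrt{x}\big)^{\frac{1}{\alpha}}+\big(1+\sqrt{x}\big)^{\frac{1}{\alpha}}\big)^{\alpha}$ is decreasing and concave on $[0,1]$; monotonicity is a one-line derivative computation, but the concavity proof splits into the regimes $\alpha\geq 2$ and $1<\alpha\leq 2$ and relies on carefully signed Taylor-remainder inequalities for the second derivative. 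So your proposal is complete modulo exactly that appendix, and your diagnosis that concavity is the property that fails for $\lvert\beta\rvert<1$ (forcing a concave envelope) matches the paper's remark. One small correction: for general $\beta$ the saturating attack is not literally Eq.~\eqref{Eq: AttackSaturatesBound} of the main text, which is the $\beta=1$ case; the shared state and Alice's observables are unchanged, but Bob's observables must be tilted to $B_{0,1} = \frac{\beta\sigma_z \pm g_\score \sigma_x}{\sqrt{\beta^2+g_\score^2}}$, as in Appendix~\ref{Sec: TightnessBounds}.
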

    \begin{proof}
    For any $\rho_{IAE}$ of the form given by Eq.~\eqref{Eq: AllISigma}, we first prove that Alice's measurement outcome after a key-generation measurement satisfies
    \begin{align}
        \widetilde{H}^{\downarrow}_\alpha(A|X=0,IE)_{\rho} \geq 1 + \frac{\alpha}{1-\alpha}\log \left[ \left(\frac{1-g_\score}{2}\right)^{\frac{1}{\alpha}} + \left(\frac{1+g_\score}{2} \right)^{\frac{1}{\alpha}}\right]\, .
    \end{align}
    We now consider Alice's and Eve's bipartite state for some $I=i$. For any such $\sigma_{AE}$ (we omit the index $i$ for now) as in Eq.\ \eqref{Eq: SingleISigma}, Eve's reduced density matrix is given by 
    \begin{equation}\label{Eq:reducedSigma}
        \sigma_{E}=\frac{1-g_\score}{2} \ketbra{v_1}{v_1} + \frac{1+g_\score}{2} \ketbra{v_2}{v_2},
    \end{equation}
    where
    \begin{align}
        \ket{v_1} &= - \sqrt{\frac{1-g_\score}{2}} \ket{0} + \sqrt{\frac{1+g_\score}{2}} \ket{1} \label{Eq: v1}\\
        \ket{v_2} &=\phantom{-} \sqrt{\frac{1+g_\score}{2}} \ket{0} + \sqrt{\frac{1-g_\score}{2}} \ket{1} \label{Eq: v2}\, .
    \end{align}
Plugging this directly into $\sigma_{E}^{\frac{1-\alpha}{2\alpha}}\sigma_{AE}\sigma_{E}^{\frac{1-\alpha}{2\alpha}}$ gives us that
\begin{align}
    & \sigma_{E}^{\frac{1-\alpha}{2\alpha}}\sigma_{AE}\sigma_{E}^{\frac{1-\alpha}{2\alpha}} \\
    & = \frac{1}{2} \ketbra{0}{0}_{A} \otimes \left[ \left( \frac{1-g_\score}{2}\right)^{\frac{1}{\alpha}}\ketbra{v_1}{v_1}_{E} -  \left( \frac{1-g_\score^2}{4}\right)^{\frac{1}{2\alpha}}\left(\ketbra{v_1}{v_2}_{E}+\ketbra{v_2}{v_1}_{E} \right) + \left( \frac{1+g_\score}{2}\right)^{\frac{1}{\alpha}}\ketbra{v_2}{v_2}_{E}
    \right] \notag \\
    &+ \frac{1}{2} \ketbra{1}{1}_{A} \otimes \left[ \left( \frac{1-g_\score}{2}\right)^{\frac{1}{\alpha}}\ketbra{v_1}{v_1}_{E} +  \left( \frac{1-g_\score^2}{4}\right)^{\frac{1}{2\alpha}}\left(\ketbra{v_1}{v_2}_{E}+\ketbra{v_2}{v_1}_{E} \right) + \left( \frac{1+g_\score}{2}\right)^{\frac{1}{\alpha}}\ketbra{v_2}{v_2}_{E}
    \right]  \\
    & = \frac{1}{2} \ketbra{0}{0}_{A} \otimes \left[\left( \left( \frac{1-g_\score}{2}\right)^{\frac{1}{2\alpha}}\ket{v_1}_{E} - \left( \frac{1+g_\score}{2}\right)^{\frac{1}{2\alpha}}\ket{v_2}_{E} \right)\left( \left( \frac{1-g_\score}{2}\right)^{\frac{1}{2\alpha}}\bra{v_1}_{E} - \left( \frac{1+g_\score}{2}\right)^{\frac{1}{2\alpha}}\bra{v_2}_{E} \right)
    \right] \notag \\
    &+ \frac{1}{2} \ketbra{1}{1}_{A} \otimes \left[\left( \left( \frac{1-g_\score}{2}\right)^{\frac{1}{2\alpha}}\ket{v_1}_{E} + \left( \frac{1+g_\score}{2}\right)^{\frac{1}{2\alpha}}\ket{v_2}_{E} \right)\left( \left( \frac{1-g_\score}{2}\right)^{\frac{1}{2\alpha}}\bra{v_1}_{E} + \left( \frac{1+g_\score}{2}\right)^{\frac{1}{2\alpha}}\bra{v_2}_{E} \right)
    \right] \, .
\end{align}
This can alternatively be expressed as
\begin{multline}
    \sigma_{E}^{\frac{1-\alpha}{2\alpha}}\sigma_{AE}\sigma_{E}^{\frac{1-\alpha}{2\alpha}}= \frac{1}{2}\left(\left(\frac{1-g_\score}{2}\right)^{\frac{1}{\alpha}} + \left(\frac{1+g_\score}{2} \right)^{\frac{1}{\alpha}} \right)\ketbra{0}{0}_{A} \otimes \ketbra{w_1}{w_1}_{E}  \\ + \frac{1}{2}\left(\left(\frac{1-g_\score}{2}\right)^{\frac{1}{\alpha}} + \left(\frac{1+g_\score}{2} \right)^{\frac{1}{\alpha}} \right)\ketbra{1}{1}_{A} \otimes \ketbra{w_1^\prime}{w_1^\prime}_{E} \, , \label{Eq: SandwichedInterimExpression}
\end{multline}
where $\{\ket{w_1},\ket{w_1^\prime} \}$ are normalized vectors. It then directly follows from this that
\begin{align}
\widetilde{H}^{\downarrow}_\alpha(A|E)_{\sigma} &=  \frac{1}{1-\alpha}\log\left[ \operatorname{Tr} \left[\left(\sigma_{E}^{\frac{1-\alpha}{2\alpha}}\sigma_{AE}\sigma_{E}^{\frac{1-\alpha}{2\alpha}}\right)^\alpha \right] \right] \\
 & = \frac{1}{1-\alpha}\log \frac{2\left[\left(\frac{1-g_\score}{2}\right)^{\frac{1}{\alpha}} + \left(\frac{1+g_\score}{2} \right)^{\frac{1}{\alpha}} \right]^{\alpha}}{2^{\alpha} } \\
 & = \frac{1}{1-\alpha}\log \frac{\left[\left(\frac{1-g_\score}{2}\right)^{\frac{1}{\alpha}} + \left(\frac{1+g_\score}{2} \right)^{\frac{1}{\alpha}} \right]^{\alpha}}{2^{\alpha-1} } \\
   &= 1 + \frac{\alpha}{1-\alpha}\log \left[ \left(\frac{1-g_\score}{2}\right)^{\frac{1}{\alpha}} + \left(\frac{1+g_\score}{2} \right)^{\frac{1}{\alpha}}\right] \, .
\end{align}
This concludes the calculations for the individual qubit block related to some index $I=i$. We now explicitly write the index for rest of the calculation. Let $h(\score^i_{\beta})$ denote the function
\begin{align}
    h(\score^i_{\beta}) \coloneqq  \left[ \left(\frac{1-g_\score^i}{2}\right)^{\frac{1}{\alpha}} + \left(\frac{1+g_\score^i}{2} \right)^{\frac{1}{\alpha}}\right]^{\alpha} \, .
\end{align}
Using both Eq.~\eqref{Eq: SingleISimplification} and~\cite[Prop.~5.1]{Tomamichel2015QuantumIP}, it holds that
\begin{align}
\widetilde{H}^{\downarrow}_\alpha(A|X=0,IE)_{\rho} &\geq \frac{1}{1-\alpha}\log \left[\sum_{i} \operatorname{Pr} \left[ I=i\right] 2^{\left(1-\alpha\right)\widetilde{H}^{\downarrow}_\alpha(A|I=i,E)_{\sigma^i}} \right]\\
&= 1+ \frac{1}{1-\alpha}\log \left[\sum_{i} \operatorname{Pr} \left[ I=i\right] h(\score^i_{\beta})\right] \, .
\end{align}
We show in Appendix~\ref{app: ConcavityProperties} that $h(\score^i_{\beta})$ is concave for $\alpha > 1$. The desired lower bound then follows from this property, together with the monotonicity of the logarithm, i.e. 
\begin{align}
\widetilde{H}^{\downarrow}_\alpha(A|X=0,IE)_{\rho} &\geq 1+ \frac{1}{1-\alpha}\log \left[ h(\score_{\beta}) \right] \\
&= 1 + \frac{\alpha}{1-\alpha}\log \left[ \left(\frac{1-g_\score}{2}\right)^{\frac{1}{\alpha}} + \left(\frac{1+g_\score}{2} \right)^{\frac{1}{\alpha}}\right] \, ,
\end{align}
where $g_\score = \sqrt{\tfrac{\score^2_\beta }{4} - \beta^2}$ and $\score_{\beta} = \operatorname{Pr} \left[ I=i\right] \score^i_{\beta}$. This inequality is saturated if Alice and Bob share the state
\begin{align}  \sqrt{P_+}\ket{\phi^+}_{Q_AQ_B}\ket{0}_{E}+\sqrt{P_-}\ket{\phi^-}_{Q_AQ_B}\ket{1}_{E} \, ,
		\end{align}
		where $P_\pm = \tfrac{1}{2}\left(1 \pm g_{\score}\right)$,
 and measure the observables $A_0 = \sigma_{z}$, $A_1 = \sigma_{x}$, $B_0 = \frac{\beta\sigma_{z} + g_{\score}\sigma_{x}}{\sqrt{\beta^2+g_{\score}^2}}$, $B_1 = \frac{\beta\sigma_{z} - g_{\score}\sigma_{x}}{\sqrt{\beta^2+g_{\score}^2}}$ (see Appendix~\ref{Sec: TightnessBounds} for more details). The derived rate function is thus tight.

    \end{proof}
\subsection{Derivation of \texorpdfstring{$f_{\widetilde{H}^{\uparrow}_\alpha}(\score_{\beta})$}{Uparrow Sandwiched {\Renyi} Entropy}} 
\label{Uparrow Sandwiched {\Renyi} Entropy}

\begin{thm}
		Let $\alpha \in   (1,\infty)$, $\lvert \beta \rvert \geq 1$,  and $\score_\beta \in \left[2\lvert \beta \rvert, 2\sqrt{1+\beta^2} \right]$. Then
		\begin{align}
			f_{\widetilde{H}^{\uparrow}_\alpha}(\score_{\beta}) =f_{\widetilde{H}^{\downarrow}_{2-\frac{1}{\alpha}}}(\score_{\beta}) \; .
		\end{align}
	\end{thm}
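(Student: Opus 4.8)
The plan is to show that, once one reduces to canonical qubit states, the two entropies agree \emph{pointwise}, so that the two rate functions become infima of the same quantity over the same family of states. First I would check that Lemma~\ref{Lem: QProperties} applies to $\widetilde{H}^{\uparrow}_\alpha$ exactly as it does to $\widetilde{H}^{\downarrow}_\alpha$: the associated quantity $\mathbb{Q}^{\uparrow}$ (normalized so that $\widetilde{H}^{\uparrow}_\alpha=\tfrac{\alpha}{1-\alpha}\log\mathbb{Q}^{\uparrow}$) inherits local-unitary invariance and data processing from the sandwiched divergence, and with this normalization it also satisfies classical linearity. Consequently both $f_{\widetilde{H}^{\uparrow}_\alpha}(\score_\beta)$ and $f_{\widetilde{H}^{\downarrow}_{2-1/\alpha}}(\score_\beta)$ equal the infimum of the corresponding entropy over the canonical block states of Eqs.~\eqref{Eq: AllISigma}--\eqref{Eq: SingleISigma}; note that the target order $2-\tfrac1\alpha$ lies in $(1,2)$, so the already-proven down-arrow formula~\eqref{def:down_arrow_sandw} is available there. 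It then suffices to prove $\widetilde{H}^{\uparrow}_\alpha(A|E)_\sigma=\widetilde{H}^{\downarrow}_{2-1/\alpha}(A|E)_\sigma$ on every such canonical state.

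The crux is this pointwise identity, which I would extract from the duality relation for sandwiched Rényi conditional entropies (see e.g.~\cite{MDS+13,Tomamichel2015QuantumIP}): for a pure state $\ket{\Psi}_{AEF}$ and conjugate orders with $\tfrac1\alpha+\tfrac1\beta=2$, one has $\widetilde{H}^{\uparrow}_\alpha(A|E)_\Psi=-\widetilde{H}^{\downarrow}_\beta(A|F)_\Psi$, where $\beta=\tfrac{\alpha}{2\alpha-1}$. A canonical state $\sigma_{AE}$ from Eq.~\eqref{Eq: SingleISigma} admits a purification onto a single extra qubit, $\ket{\Psi}=\tfrac1{\sqrt2}\ket{0}_A\ket{\psi_{=}}_E\ket{0}_F+\tfrac1{\sqrt2}\ket{1}_A\ket{\psi_{\neq}}_E\ket{1}_F$. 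For this $\ket{\Psi}$ the marginal $\sigma_F$ is maximally mixed and $\sigma_{AF}$ is supported on $\mathrm{span}\{\ket{00},\ket{11}\}$ with eigenvalues $\tfrac{1\pm g_{\score}}{2}$, so $\widetilde{H}^{\downarrow}_\beta(A|F)=-\widetilde{D}_\beta(\sigma_{AF}\Vert\mathbb{I}_A\otimes\tfrac12\mathbb{I}_F)$ evaluates in closed form. Inserting $\beta=\tfrac{\alpha}{2\alpha-1}$ and using $\tfrac{1}{\beta-1}=\tfrac{2\alpha-1}{1-\alpha}$ turns $-\widetilde{H}^{\downarrow}_\beta(A|F)$ into $1+\tfrac{2\alpha-1}{1-\alpha}\log\!\left[\left(\tfrac{1-g_{\score}}{2}\right)^{\alpha/(2\alpha-1)}+\left(\tfrac{1+g_{\score}}{2}\right)^{\alpha/(2\alpha-1)}\right]$, which is exactly the value obtained by substituting $\alpha\mapsto 2-\tfrac1\alpha$ into~\eqref{def:down_arrow_sandw}.

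With the single-block identity in hand, the block index $I$ is handled by classical linearity of the two $\mathbb{Q}$ functions: both $\widetilde{H}^{\uparrow}_\alpha(A|IE)$ and $\widetilde{H}^{\downarrow}_{2-1/\alpha}(A|IE)$ collapse to $1+\tfrac{\alpha}{1-\alpha}\log\sum_i\Pr[I=i]\,h(\score^i_\beta)$ with one and the same $h$, so the two entropies coincide on every block-diagonal canonical state. Since both rate functions are infima of these (equal) quantities over the same family of states, they are equal, establishing $f_{\widetilde{H}^{\uparrow}_\alpha}(\score_\beta)=f_{\widetilde{H}^{\downarrow}_{2-1/\alpha}}(\score_\beta)$; the concavity argument of the $f_{\widetilde{H}^{\downarrow}_\alpha}$ derivation, applied at order $2-\tfrac1\alpha$, then collapses the mixture to a single block and produces the explicit formula.

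The main obstacle I anticipate is the bookkeeping in the duality step. The relation naturally outputs a down-arrow entropy at the conjugate order $\beta=\tfrac{\alpha}{2\alpha-1}$ on the complementary system $F$, whereas the quantity we want is a down-arrow entropy at order $2-\tfrac1\alpha$ on $E$ --- different orders on different systems. The match is not a general entropic identity (indeed $\widetilde{H}^{\uparrow}_\alpha=\widetilde{H}^{\downarrow}_{2-1/\alpha}$ fails for generic classical-quantum states); it emerges only after the explicit evaluation of $\widetilde{H}^{\downarrow}_\beta(A|F)$, and it relies essentially on the uniformity of Alice's marginal that the qubit reduction forces, which is precisely what makes $\sigma_F$ maximally mixed. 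A secondary care point is confirming that $\mathbb{Q}^{\uparrow}$ genuinely satisfies classical linearity under the chosen normalization, since a different normalization would instead yield a nonlinear, $\ell_{1/\alpha}$-type combination rule and break the hypotheses of Lemma~\ref{Lem: QProperties}.
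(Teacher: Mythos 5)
Your overall architecture matches the paper's — reduce to the canonical block states via Lemma~\ref{Lem: QProperties} with $\mathbb{Q}^{\uparrow}$ normalized so that $\widetilde{H}^{\uparrow}_\alpha=\tfrac{\alpha}{1-\alpha}\log\mathbb{Q}^{\uparrow}$ (this is exactly how the paper invokes \cite[Prop.~5.1]{Tomamichel2015QuantumIP}, and your exponent bookkeeping $\tfrac{1-\alpha}{\alpha}=1-\alpha'$ for $\alpha'=2-\tfrac1\alpha$ is right), then collapse the $I$-mixture by concavity at order $2-\tfrac1\alpha$ — and your closed-form evaluation on the purification is numerically correct. But the crux step rests on a misstated duality, and this is a genuine gap. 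The sandwiched-entropy duality at conjugate orders $\tfrac1\alpha+\tfrac1{\hat\alpha}=2$ reads $\widetilde{H}^{\uparrow}_\alpha(A|E)_\Psi=-\widetilde{H}^{\uparrow}_{\hat\alpha}(A|F)_\Psi$, with the \emph{up-arrow} on both sides; the only duality involving $\widetilde{H}^{\downarrow}$ pairs it with the Petz quantity, $\widetilde{H}^{\downarrow}_\alpha(A|E)_\Psi=-\bar{H}^{\uparrow}_{1/\alpha}(A|F)_\Psi$ (see \cite{MDS+13,Tomamichel2015QuantumIP}). There is no general relation $\widetilde{H}^{\uparrow}_\alpha(A|E)=-\widetilde{H}^{\downarrow}_{\hat\alpha}(A|F)$. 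Since $\widetilde{H}^{\uparrow}_{\hat\alpha}(A|F)\geq\widetilde{H}^{\downarrow}_{\hat\alpha}(A|F)$ always, the correct duality turns your chain into $\widetilde{H}^{\uparrow}_\alpha(A|E)\leq-\widetilde{H}^{\downarrow}_{\hat\alpha}(A|F)$, i.e.\ only an \emph{upper} bound on the single-block entropy. That is the achievability direction; the rate function $f_{\widetilde{H}^{\uparrow}_\alpha}$ is an infimum, so the direction you actually need — a lower bound on $\widetilde{H}^{\uparrow}_\alpha(A|E)_\sigma$ valid for every block — is precisely what the misquoted identity was silently supplying. Your closing caveat about the match "not being a general entropic identity" flags the symptom but does not repair the step.

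The gap is fixable, in two ways, and the comparison with the paper is instructive. (a) Within your duality route: for your purification, $\sigma_{AF}$ is supported on $\mathrm{span}\{\ket{00},\ket{11}\}$ and commutes with $X\otimes X$ and $Z\otimes Z$; using joint concavity of the functional $\tau_F\mapsto\operatorname{Tr}\bigl[\bigl(\tau_F^{\frac{1-\hat\alpha}{2\hat\alpha}}\sigma_{AF}\tau_F^{\frac{1-\hat\alpha}{2\hat\alpha}}\bigr)^{\hat\alpha}\bigr]$ for $\hat\alpha=\tfrac{\alpha}{2\alpha-1}\in(\tfrac12,1)$, one can symmetrize the optimizer to conclude it is $\tfrac12\id_F=\sigma_F$, whence $\widetilde{H}^{\uparrow}_{\hat\alpha}(A|F)=\widetilde{H}^{\downarrow}_{\hat\alpha}(A|F)$ and your computation goes through as an equality. (b) The paper's route avoids duality entirely: it lower bounds the supremum in Eq.~\eqref{Eq:HUpSandwichDef} by the explicit choice $\tau_E\propto\sigma_E^{\frac{\alpha}{2\alpha-1}}$, evaluates directly on $\sigma_{AE}$ to get exactly $f_{\widetilde{H}^{\downarrow}_{2-1/\alpha}}(\score_\beta)$, and then obtains the matching upper bound on the explicit attack state via the general inequality $\widetilde{H}^{\uparrow}_\alpha\leq\widetilde{H}^{\downarrow}_{2-1/\alpha}$ \cite[Cor.~5.3]{Tomamichel2015QuantumIP} — note this last inequality is exactly the one-sided version of the identity you asserted. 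One further small point: "both rate functions are infima over the same family of states" needs the attack of Appendix~\ref{Sec: TightnessBounds} to certify that the canonical-state values are attained by genuine quantum strategies; you should invoke it explicitly rather than leave it implicit.
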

\begin{proof}
  For any $\rho_{IAE}$ of the form given by Eq.~\eqref{Eq: AllISigma}, we first consider Alice's and Eve's bipartite state for some $I=i$.  
    Due to Eq.~\eqref{Eq:HUpSandwichDef}, for any such $\sigma_{AE}$ (we omit the index $i$ for now) as in Eq.~\eqref{Eq: SingleISigma}, it must hold that 
    \begin{align}
        \widetilde{H}^{\uparrow}_{\alpha}\left(A|E\right)_\sigma \geq -\widetilde{D}_\alpha(\sigma_{AE}||\mathds{1}_A\otimes\tau_E) \, ,
    \end{align}
    where we pick the following choice of state in the second argument:
    \begin{equation}
        \tau_E = \frac{\sigma_E^{\frac{\alpha}{2\alpha-1}}}{\mathrm{Tr}[\sigma_E^{\frac{\alpha}{2\alpha-1}}]}.
    \end{equation}
    For this state, however, one finds that
    \begin{align}       
        -\widetilde{D}_\alpha(\sigma_{AE}||\mathds{1}_A\otimes\tau_E) &= \frac{1}{1-\alpha}\log \left[ \operatorname{Tr} \left[\left(\tau_{E}^{\frac{1-\alpha}{2\alpha}}\sigma_{AE}\tau_{E}^{\frac{1-\alpha}{2\alpha}} \right)^\alpha \right]\right] \\
        &= \frac{1}{1-\alpha}\log \left[\frac{ \operatorname{Tr} \left[\left(\sigma_{E}^{\frac{1-\alpha}{2\left(2\alpha-1\right)}}\sigma_{AE}\sigma_{E}^{\frac{1-\alpha}{2\left(2\alpha-1\right)}} \right)^\alpha \right]}{\operatorname{Tr}\left[ \sigma_{E}^{\frac{\alpha}{2\alpha -1}}\right]^{1-\alpha} }\right] \\
        &= \frac{1}{1-\alpha}\log \left[\frac{ \operatorname{Tr} \left[\left(\sigma_{E}^{\frac{1-\alpha^\prime}{2\alpha^\prime}}\sigma_{AE}\sigma_{E}^{\frac{1-\alpha^\prime}{2\alpha^\prime}} \right)^\alpha \right]}{\operatorname{Tr}\left[ \sigma_{E}^{\frac{1}{\alpha^\prime }}\right]^{1-\alpha} }\right] \, ,
    \end{align}
    where $\alpha^\prime = 2-\frac{1}{\alpha}$. Moreover, it holds that
    \begin{align}
        \operatorname{Tr} \left[\left(\sigma_{E}^{\frac{1-\alpha^\prime}{2\alpha^\prime}}\sigma_{AE}\sigma_{E}^{\frac{1-\alpha^\prime}{2\alpha^\prime}} \right)^\alpha \right] &  = 2\left(\frac{\left(\frac{ 1-g_\score}{2}\right)^{\frac{1}{\alpha^\prime}} + \left( \frac{1+g_\score}{2}\right)^{\frac{1}{\alpha^\prime}} }{2}\right)^\alpha = \frac{1}{2^{\alpha-1}} \left( \left( \frac{1-g_\score}{2}\right)^{\frac{1}{\alpha^\prime}} + \left( \frac{1+g_\score}{2}\right)^{\frac{1}{\alpha^\prime}} \right)^\alpha \\
        \operatorname{Tr}\left[ \sigma_{E}^{\frac{1}{\alpha^\prime }}\right]^{1-\alpha}&= \left(\left( \frac{1-g_\score}{2}\right)^{\frac{1}{\alpha^\prime}} + \left( \frac{1+g_\score}{2}\right)^{\frac{1}{\alpha^\prime}} \right)^{1-\alpha} \, ,
    \end{align}
     where the first equation directly follows from Eq.~\eqref{Eq: SandwichedInterimExpression} and the second equation is due to the decomposition ${\sigma_{E}=\frac{1-g_\score}{2} \ketbra{v_1}{v_1} + \frac{1+g_\score}{2} \ketbra{v_2}{v_2}}$, where $\{\ket{v_1},\ket{v_2} \}$ are orthonormal vectors given  by Eqs.~\eqref{Eq: v1}--\eqref{Eq: v2}. This then gives us that
    \begin{align}
       \widetilde{H}^{\uparrow}_{\alpha}\left(A|E\right)_\sigma &\geq \frac{1}{1-\alpha}\log \left[\frac{ \operatorname{Tr} \left[\left(\sigma_{E}^{\frac{1-\alpha^\prime}{2\alpha^\prime}}\sigma_{AE}\sigma_{E}^{\frac{1-\alpha^\prime}{2\alpha^\prime}} \right)^\alpha \right]}{\operatorname{Tr}\left[ \sigma_{E}^{\frac{1}{\alpha^\prime }}\right]^{1-\alpha} }\right] \\
        &= \frac{1}{1-\alpha} \log \left[ 2^{1-\alpha} \left(\left( \frac{1-g_\score}{2}\right)^{\frac{1}{\alpha^\prime}} + \left( \frac{1+g_\score}{2}\right)^{\frac{1}{\alpha^\prime}} \right)^{2\alpha-1} \right] \\
        &= 1+ \frac{1}{1-\alpha} \log \left[  \left(\left( \frac{1-g_\score}{2}\right)^{\frac{1}{\alpha^\prime}} + \left( \frac{1+g_\score}{2}\right)^{\frac{1}{\alpha^\prime}} \right)^{2\alpha-1} \right] \\
        &= 1+ \frac{2\alpha-1}{1-\alpha} \log \left[  \left( \frac{1-g_\score}{2}\right)^{\frac{1}{\alpha^\prime}} + \left( \frac{1+g_\score}{2}\right)^{\frac{1}{\alpha^\prime}}  \right] \\
        &= 1+ \frac{\alpha^\prime}{1-\alpha^\prime} \log \left[  \left( \frac{1-g_\score}{2}\right)^{\frac{1}{\alpha^\prime}} + \left( \frac{1+g_\score}{2}\right)^{\frac{1}{\alpha^\prime}}  \right] \, .
    \end{align}
    This, however, is simply $f_{\widetilde{H}^{\downarrow}_{\alpha^\prime}}(\score_{\beta})$. This concludes the calculations for the individual qubit block related to some index $I=i$. We now explicitly write the index for rest of the calculation. Let $h(\score^i_{\beta})$ denote the function
\begin{align}
    h(\score^i_{\beta}) \coloneqq  \left[ \left(\frac{1-g_\score^i}{2}\right)^{\frac{1}{\alpha^\prime}} + \left(\frac{1+g_\score^i}{2} \right)^{\frac{1}{\alpha^\prime}}\right]^{\alpha^\prime} \, .
\end{align}
Using both Eq.~\eqref{Eq: SingleISimplification} and~\cite[Prop.~5.1]{Tomamichel2015QuantumIP}, it holds that
\begin{align}
\widetilde{H}^{\uparrow}_\alpha(A|X=0,IE)_{\rho} &\geq \frac{\alpha}{1-\alpha}\log \left[\sum_{i} \operatorname{Pr} \left[ I=i\right] 2^{\frac{\left(1-\alpha\right)}{\alpha}\widetilde{H}^{\uparrow}_\alpha(A|I=i,E)_{\sigma^i}} \right]\\
&= \frac{1}{1-\alpha^\prime}\log \left[\sum_{i} \operatorname{Pr} \left[ I=i\right] 2^{\left(1-\alpha^\prime\right)\widetilde{H}^{\uparrow}_\alpha(A|I=i,E)_{\sigma^i}} \right]\\
&\geq  1+ \frac{1}{1-\alpha^\prime}\log \left[\sum_{i} \operatorname{Pr} \left[ I=i\right] h(\score^i_{\beta}) \right]\, .
\end{align}
We show in Appendix~\ref{app: ConcavityProperties} that $h(\score^i_{\beta})$ is concave for $\alpha^\prime > 1$. The desired lower bound then follows from this property, together with the monotonicity of the logarithm, i.e. 
\begin{align}
\widetilde{H}^{\uparrow}_\alpha(A|X=0,IE)_{\rho} &\geq 1+ \frac{1}{1-\alpha^\prime}\log \left[ h(\score_{\beta})\right] \\
&= 1 + \frac{\alpha^\prime}{1-\alpha^\prime}\log \left[ \left(\frac{1-g_\score}{2}\right)^{\frac{1}{\alpha^\prime}} + \left(\frac{1+g_\score}{2} \right)^{\frac{1}{\alpha^\prime}}\right] \\
&=f_{\widetilde{H}^{\downarrow}_{\alpha^\prime}}(\score_{\beta}) \, ,
\end{align}
where $g_\score = \sqrt{\tfrac{\score^2_\beta }{4} - \beta^2}$ and $\score_{\beta} = \operatorname{Pr} \left[ I=i\right] \score^i_{\beta}$. This inequality is saturated if Alice and Bob share the state
\begin{align}  \ket{\psi} =\sqrt{P_+}\ket{\phi^+}_{Q_AQ_B}\ket{0}_{E}+\sqrt{P_-}\ket{\phi^-}_{Q_AQ_B}\ket{1}_{E} \, ,
		\end{align}
		where $P_\pm = \tfrac{1}{2}\left(1 \pm g_{\score}\right)$,
 and measure the observables $A_0 = \sigma_{z}$, $A_1 = \sigma_{x}$, $B_0 = \frac{\beta\sigma_{z} + g_{\score}\sigma_{x}}{\sqrt{\beta^2+g_{\score}^2}}$, $B_1 = \frac{\beta\sigma_{z} - g_{\score} \sigma_{x}}{\sqrt{\beta^2+g_{\score}^2}}$. To see this, we first note it can be readily verified that this achieves the desired score, $\score_{\beta}$. Moreover, after Alice applies measurement $A_0$, the classical output (stored in register $A$) satisfies
      \begin{align}        \widetilde{H}^{\uparrow}_{\alpha}\left(A|E\right)_\psi \leq \widetilde{H}^{\downarrow}_{2-\frac{1}{\alpha}} \left(A|E\right)_\psi \, ,
    \end{align}
    due to~\cite[Cor.~5.3]{Tomamichel2015QuantumIP}. The latter, however, is simply equal to $f_{\widetilde{H}^{\downarrow}_{2-\frac{1}{\alpha}}}(\score_{\beta})$ (see Appendix~\ref{Sec: TightnessBounds} for more details). The derived upper and lower bounds are thus equal and the rate function must therefore be tight.
\end{proof}

\subsection{Derivation of \texorpdfstring{$f_{\bar{H}^{\downarrow}_\alpha}(\score_{\beta})$}{Downarrow Petz-{\Renyi} Entropy}}
\begin{thm}
		Let $\alpha \in   (1,2)$, $\lvert \beta \rvert \geq 1$,  and $\score_\beta \in \left[2\lvert \beta \rvert, 2\sqrt{1+\beta^2}\right]$. Then
		\begin{align}		f_{\bar{H}^{\downarrow}_\alpha}(\score_\beta) = 1 + \frac{1}{1-\alpha}\log \left[ \left(\frac{1-g_\score}{2}\right)^{2-\alpha} + \left(\frac{1+g_\score}{2} \right)^{2-\alpha}\right]  \; ,
		\end{align}
        where 
$g_\score = \sqrt{\tfrac{\score^2_\beta }{4} - \beta^2}$.
	\end{thm}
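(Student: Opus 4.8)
The plan is to mirror the two preceding derivations almost verbatim. First I would reduce to a single qubit block: for any $\rho_{IAE}$ of the form in Eq.~\eqref{Eq: AllISigma}, I invoke Lemma~\ref{Lem: QProperties}. Its three hypotheses---local unitary invariance, classical linearity, and data processing---all hold for $\bar{H}^{\downarrow}_\alpha$ precisely on the range $\alpha\in(1,2)$, and it is the failure of Petz data processing beyond $\alpha=2$ that forces this restricted range. The Lemma, through Eq.~\eqref{Eq: SingleISimplification}, lets me replace each block $\rho^i_{AE}$ by the canonical state $\sigma^i_{AE}$ of Eq.~\eqref{Eq: SingleISigma}, so the problem reduces to evaluating $\bar{H}^{\downarrow}_\alpha(A|E)_\sigma$ for a single such $\sigma_{AE}$.

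For this single-block quantity I would write $\bar{H}^{\downarrow}_\alpha(A|E)_\sigma = \frac{1}{1-\alpha}\log \operatorname{Tr}\!\left[\sigma_{AE}^\alpha(\mathds{1}_A\otimes\sigma_E)^{1-\alpha}\right]$. Because $\sigma_{AE}$ is block-diagonal in $A$ with rank-one $E$-blocks, $\sigma_{AE}^\alpha = 2^{-\alpha}\left(\ketbra{0}{0}_A\otimes\ketbra{\psi_{=}}{\psi_{=}} + \ketbra{1}{1}_A\otimes\ketbra{\psi_{\neq}}{\psi_{\neq}}\right)$, and the trace collapses to $2^{-\alpha}\left(\bra{\psi_{=}}\sigma_E^{1-\alpha}\ket{\psi_{=}} + \bra{\psi_{\neq}}\sigma_E^{1-\alpha}\ket{\psi_{\neq}}\right)$. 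Using the spectral decomposition of $\sigma_E$ in Eq.~\eqref{Eq:reducedSigma} with the orthonormal eigenvectors of Eqs.~\eqref{Eq: v1}--\eqref{Eq: v2}, the overlaps $|\inner{v_j}{\psi_{=}}|^2$ and $|\inner{v_j}{\psi_{\neq}}|^2$ are elementary, and the pleasant simplification is that both diagonal terms evaluate to the \emph{same} quantity $\left(\frac{1-g_\score}{2}\right)^{2-\alpha} + \left(\frac{1+g_\score}{2}\right)^{2-\alpha}$. Hence the trace equals $2^{1-\alpha}\left[\left(\frac{1-g_\score}{2}\right)^{2-\alpha} + \left(\frac{1+g_\score}{2}\right)^{2-\alpha}\right]$, and peeling off the $2^{1-\alpha}$ as an additive $1$ (since $\log 2 = 1$) yields the single-block value $1 + \frac{1}{1-\alpha}\log\left[\left(\frac{1-g_\score}{2}\right)^{2-\alpha} + \left(\frac{1+g_\score}{2}\right)^{2-\alpha}\right]$.

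To lift this to the full classical--quantum state I would use the direct-sum identity for the Petz $\downarrow$ entropy (the analogue of \cite[Prop.~5.1]{Tomamichel2015QuantumIP} used in the sandwiched case), which follows immediately from block-diagonality in $I$ and gives $\bar{H}^{\downarrow}_\alpha(A|X=0,IE)_\rho = \frac{1}{1-\alpha}\log\left[\sum_i\Pr[I=i]\,2^{(1-\alpha)\bar{H}^{\downarrow}_\alpha(A|I=i,E)_{\rho^i}}\right]$. Combining this with Eq.~\eqref{Eq: SingleISimplification} (replacing each $\rho^i$ by $\sigma^i$, while carefully tracking the two sign flips arising from $1-\alpha<0$ and from the monotone $\log$) produces $\bar{H}^{\downarrow}_\alpha(A|X=0,IE)_\rho \geq 1 + \frac{1}{1-\alpha}\log\left[\sum_i\Pr[I=i]\,h(\score^i_\beta)\right]$ with $h(\score^i_\beta) \coloneqq \left(\frac{1-g_\score^i}{2}\right)^{2-\alpha} + \left(\frac{1+g_\score^i}{2}\right)^{2-\alpha}$. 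A final Jensen step, using the concavity of $h$ in the score---deferred to Appendix~\ref{app: ConcavityProperties}, where the relevant exponent is now $2-\alpha\in(0,1)$---collapses the average over $i$ into $h(\score_\beta)$ and yields the claimed rate function. I would close by verifying tightness with the saturating attack of Eq.~\eqref{Eq: AttackSaturatesBound}, using observables $A_0=\sigma_z$, $A_1=\sigma_x$, $B_{0,1}=(\beta\sigma_z\pm g_\score\sigma_x)/\sqrt{\beta^2+g_\score^2}$ (details in Appendix~\ref{Sec: TightnessBounds}).

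The main obstacle I anticipate is simply the single-block matrix computation: correctly raising $\sigma_E$ to the negative power $1-\alpha$ and bookkeeping the non-orthogonal frame $\{\ket{\psi_{=}},\ket{\psi_{\neq}}\}$ against the eigenbasis $\{\ket{v_1},\ket{v_2}\}$ of $\sigma_E$. Everything else is structurally identical to the two preceding proofs; the only genuinely new features are that the relevant exponent is $2-\alpha$ rather than $1/\alpha$, and that admissibility shrinks to $\alpha\in(1,2)$ because the data-processing hypothesis of Lemma~\ref{Lem: QProperties} fails for Petz entropies above $\alpha=2$.
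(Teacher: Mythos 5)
Your proposal is correct and matches the paper's own proof essentially step for step: the qubit reduction via Lemma~\ref{Lem: QProperties} and Eq.~\eqref{Eq: SingleISimplification}, the single-block evaluation yielding $1+\frac{1}{1-\alpha}\log\left[\left(\frac{1-g_\score}{2}\right)^{2-\alpha}+\left(\frac{1+g_\score}{2}\right)^{2-\alpha}\right]$, the lifting over the Jordan index $I$ via \cite[Prop.~5.1]{Tomamichel2015QuantumIP} combined with the concavity of $h$ from Appendix~\ref{app: ConcavityProperties}, and tightness via the explicit attack of Appendix~\ref{Sec: TightnessBounds}. Your only deviation---computing $\operatorname{Tr}\left[\sigma_{AE}^{\alpha}\sigma_E^{1-\alpha}\right]$ directly through the overlaps of $\{\ket{\psi_{=}},\ket{\psi_{\neq}}\}$ with the eigenbasis $\{\ket{v_1},\ket{v_2}\}$ rather than via the symmetrized product $\sigma_E^{\frac{1-\alpha}{2}}\sigma_{AE}^{\alpha}\sigma_E^{\frac{1-\alpha}{2}}$---is an alternative the paper itself explicitly notes, and your claim that both diagonal terms evaluate to the same quantity is correct.
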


\begin{proof}
For any $\rho_{IAE}$ of the form given by Eq.~\eqref{Eq: AllISigma}, we first prove that Alice's measurement outcome after a key-generation measurement satisfies
\begin{align}
\bar{H}^{\downarrow}_\alpha(A|X=0,IE)_{\rho} \geq 1 + \frac{1}{1-\alpha}\log \left[ \left(\frac{1-g_\score}{2}\right)^{2-\alpha} + \left(\frac{1+g_\score}{2} \right)^{2-\alpha}\right]  \, .
\end{align}
We now consider Alice's and Eve's bipartite state for some $I=i$. For any such $\sigma_{AE}$ (we omit the index $i$ for now) as in Eq.\ \eqref{Eq: SingleISigma}, one can readily verify that
 \begin{align}
     \sigma_{AE}^{\alpha} &= \frac{1}{2^\alpha} \ketbra{0}{0} \otimes \ketbra{\psi_{=}}{\psi_{=}} +  \frac{1}{2^\alpha} \ketbra{1}{1} \otimes \ketbra{\psi_{\neq}}{\psi_{\neq}} \\
     \sigma_{E}^{\frac{1-\alpha}{2}}&=\left(\frac{1-g_\score}{2}\right)^{\frac{1-\alpha}{2}} \ketbra{v_1}{v_1} + \left(\frac{1+g_\score}{2} \right)^{\frac{1-\alpha}{2}}\ketbra{v_2}{v_2} \, ,
 \end{align}
where $\{ \ket{\psi_{=}},\ket{\psi_{\neq}} \}$ are given by Eqs.~\eqref{Eq: psieq}--\eqref{Eq: psineq} and $\{ \ket{v_1},\ket{v_2} \}$  by Eqs.~\eqref{Eq: v1}--\eqref{Eq: v2}. Using the fact that $\sqrt{1-g_\score^2}\sqrt{\frac{1\pm g_\score}{2}} = \left( 1 \pm g_\score\right)\sqrt{\frac{1 \mp g_\score}{2}}$, we thus find that 
\begin{align}
    \sigma_{E}^{\frac{1-\alpha}{2}}\sigma_{AE}^{\alpha}\sigma_{E}^{\frac{1-\alpha}{2}} &= \frac{1}{2^\alpha}\left(\frac{1-g_\score}{2}\right)^{2-\alpha}\left( \ketbra{0}{0} +   \ketbra{1}{1}\right) \otimes \ketbra{v_1}{v_1} + \frac{1}{2^\alpha}\left(\frac{1+g_\score}{2}\right)^{2-\alpha}\left( \ketbra{0}{0} +   \ketbra{1}{1}\right) \otimes \ketbra{v_2}{v_2} \, .
\end{align}
We note that one could alternatively have calculated $\sigma_{AE}^{\alpha}\sigma_{E}^{1-\alpha}$. Using this, we can express the Petz-{\Renyi} entropy as
\begin{align}\label{Eq: PetzInterimExpression}
\bar{H}^{\downarrow}_\alpha(A|E)_{\sigma} &=  \frac{1}{1-\alpha}\log\left[ \operatorname{Tr} \left[\sigma_{AE}^{\alpha}\sigma_{E}^{1-\alpha} \right] \right] \\
&=  \frac{1}{1-\alpha}\log\left[ \operatorname{Tr}\left[ \sigma_{E}^{\frac{1-\alpha}{2}}\sigma_{AE}^{\alpha}\sigma_{E}^{\frac{1-\alpha}{2}}\right] \right] \\
&= \frac{1}{1-\alpha}\log \left[  \frac{1}{2^{\alpha-1}}\left(\frac{1-g_\score}{2}\right)^{2-\alpha} +  \frac{1}{2^{\alpha-1}}\left(\frac{1+g_\score}{2}\right)^{2-\alpha}\right] \\
&= 1 + \frac{1}{1-\alpha}\log \left[ \left(\frac{1-g_\score}{2}\right)^{2-\alpha} + \left(\frac{1+g_\score}{2} \right)^{2-\alpha}\right]
\end{align}
This concludes the calculations for the individual qubit block related to some index $I=i$. We now explicitly write the index for rest of the calculation. Let $h(\score^i_\beta)$ denote the function
\begin{align}
    h(\score^i_\beta) \coloneqq \left(\frac{1-g_\score^i}{2}\right)^{2-\alpha} + \left(\frac{1+g_\score^i}{2} \right)^{2-\alpha} \, .
\end{align}
Using both Eq.~\eqref{Eq: SingleISimplification} and~\cite[Prop.~5.1]{Tomamichel2015QuantumIP}, it holds that
\begin{align}
\bar{H}^{\downarrow}_\alpha(A|X=0,IE)_{\rho} &\geq \frac{1}{1-\alpha}\log \left[\sum_{i} \operatorname{Pr} \left[ I=i\right] 2^{\left(1-\alpha\right)\bar{H}^{\downarrow}_\alpha(A|I=i,E)_{\sigma^i}} \right]\\
&= 1+ \frac{1}{1-\alpha}\log \left[ \sum_{i} \operatorname{Pr} \left[ I=i\right] h(\score^i_\beta)\right] \, .
\end{align}
We show in Appendix~\ref{app: ConcavityProperties} that $h(\score^i_{\beta})$ is concave for $\alpha > 1$. The desired lower bound then follows from this property, together with the monotonicity of the logarithm, i.e. 
\begin{align}
\bar{H}^{\downarrow}_\alpha(A|X=0,IE)_{\rho} &\geq 1+ \frac{1}{1-\alpha}\log \left[ h(\score_{\beta})\right] \\
&= 1 + \frac{1}{1-\alpha}\log \left[ \left(\frac{1-g_\score}{2}\right)^{2-\alpha} + \left(\frac{1+g_\score}{2} \right)^{2-\alpha}\right] \, ,
\end{align}
where $g_\score = \sqrt{\tfrac{\score^2_\beta }{4} - \beta^2}$ and $\score_{\beta} = \operatorname{Pr} \left[ I=i\right] \score^i_{\beta}$. This inequality is saturated if Alice and Bob share the state
\begin{align}  \sqrt{P_+}\ket{\phi^+}_{Q_AQ_B}\ket{0}_{E}+\sqrt{P_-}\ket{\phi^-}_{Q_AQ_B}\ket{1}_{E} \, ,
		\end{align}
		where $P_\pm = \tfrac{1}{2}\left(1 \pm g_{\score}\right)$,
 and measure the observables $A_0 = \sigma_{z}$, $A_1 = \sigma_{x}$, $B_0 = \frac{\beta\sigma_{z} + g_{\score}\sigma_{x}}{\sqrt{\beta^2+g_{\score}^2}}$, $B_1 = \frac{\beta\sigma_{z} - g_{\score}\sigma_{x}}{\sqrt{\beta^2+g_{\score}^2}}$ (see Appendix~\ref{Sec: TightnessBounds} for more details).  The derived rate function is thus tight. 
\end{proof}

\subsection{Derivation of \texorpdfstring{$f_{\bar{H}^{\uparrow}_\alpha}(\score_{\beta})$}{Uparrow Petz-{\Renyi} Entropy}}

\begin{thm}
		Let $\alpha \in   (1,2)$, $\lvert \beta \rvert \geq 1$,  and $\score_\beta \in \left[2\lvert \beta \rvert, 2\sqrt{1+\beta^2} \right]$. Then
		\begin{align}
			f_{\bar{H}^{\uparrow}_\alpha}(\score_\beta) = 1 + \frac{\alpha}{1-\alpha}\log \left[ \left(\frac{1-g_\score}{2}\right)^{\frac{1}{\alpha}} + \left(\frac{1+g_\score}{2} \right)^{\frac{1}{\alpha}}\right]  \;,
		\end{align}
        where $g_\score = \sqrt{\tfrac{\score^2_\beta }{4} - \beta^2}$.
	\end{thm}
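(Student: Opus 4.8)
The plan is to follow the same template used to derive $f_{\widetilde{H}^{\downarrow}_\alpha}(\score_\beta)$, since the target formula is \emph{identical}; the point is that the Petz up-arrow entropy of the relevant qubit states coincides with the sandwiched down-arrow entropy already computed. First I would reduce to canonical qubit blocks by invoking Eq.~\eqref{Eq: SingleISimplification}. Concretely, the function $\mathbb{Q}(A|E)=\tr{(\ptr{A}{\rho_{AE}^\alpha})^{1/\alpha}}$, for which $\bar{H}^{\uparrow}_\alpha=\frac{\alpha}{1-\alpha}\log\mathbb{Q}$ by the standard closed form for the Petz up-arrow entropy, satisfies the three hypotheses of Lemma~\ref{Lem: QProperties}: local unitary invariance on $A$ and classical linearity are immediate from the block structure of $\rho_{AE}^\alpha$ and the partial trace, while data processing is exactly the (restricted) Petz data-processing inequality valid on $\alpha\in(1,2)$. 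Hence it suffices to evaluate $\bar{H}^{\uparrow}_\alpha(A|E)_\sigma$ on the canonical states $\sigma_{AE}$ of Eq.~\eqref{Eq: SingleISigma}.

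For a single block I would use the optimizer of the second argument in Eq.~\eqref{Eq:HUpPetzDef}, namely $\tau_E\propto(\ptr{A}{\sigma_{AE}^\alpha})^{1/\alpha}$ (the reverse-Hölder minimizer of $\tr{\sigma_{AE}^\alpha(\id_A\otimes\tau_E)^{1-\alpha}}$, whose minimum value is $\tr{(\ptr{A}{\sigma_{AE}^\alpha})^{1/\alpha}}^{\alpha}$). Because $\sigma_{AE}$ is block diagonal on $A$ with rank-one blocks, $\sigma_{AE}^\alpha=\frac{1}{2^\alpha}\ketbra{0}{0}\otimes\ketbra{\psi_{=}}{\psi_{=}}+\frac{1}{2^\alpha}\ketbra{1}{1}\otimes\ketbra{\psi_{\neq}}{\psi_{\neq}}$, so $\ptr{A}{\sigma_{AE}^\alpha}=\frac{1}{2^\alpha}(\ketbra{\psi_{=}}{\psi_{=}}+\ketbra{\psi_{\neq}}{\psi_{\neq}})$ has eigenvalues $\frac{1\pm g_\score}{2^\alpha}$ in the $\{\ket{v_1},\ket{v_2}\}$ basis of Eqs.~\eqref{Eq: v1}--\eqref{Eq: v2}. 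Taking the $1/\alpha$-power and tracing gives the exact value $\bar{H}^{\uparrow}_\alpha(A|E)_\sigma=1+\frac{\alpha}{1-\alpha}\log\left[\left(\frac{1-g_\score}{2}\right)^{1/\alpha}+\left(\frac{1+g_\score}{2}\right)^{1/\alpha}\right]$, where the constant $+1$ comes from the overall $2^{1/\alpha-1}$ prefactor; this is precisely the single-block expression obtained for $\widetilde{H}^{\downarrow}_\alpha$.

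Next I would lift this to the full classical-quantum state. Combining the reduction of Eq.~\eqref{Eq: SingleISimplification} with \cite[Prop.~5.1]{Tomamichel2015QuantumIP} (equivalently, classical linearity of $\mathbb{Q}$) and the exponent $\frac{1-\alpha}{\alpha}$ appropriate to the up-arrow, one gets $\bar{H}^{\uparrow}_\alpha(A|X=0,IE)_\rho\geq 1+\frac{\alpha}{1-\alpha}\log\sum_i\Pr[I=i]\,\tilde{h}(\score^i_\beta)$ with $\tilde{h}(\score)=\left(\frac{1-g_\score}{2}\right)^{1/\alpha}+\left(\frac{1+g_\score}{2}\right)^{1/\alpha}$. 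Since $\frac{\alpha}{1-\alpha}<0$, the monotonicity of $\log$, together with $\score_\beta=\sum_i\Pr[I=i]\score^i_\beta$ and the concavity of $\tilde{h}$ in the score, then yields the claimed lower bound. The required concavity of $\tilde{h}$ for $\alpha\in(1,2)$ follows from Appendix~\ref{app: ConcavityProperties}: either directly for exponent $1/\alpha\in(\tfrac12,1)$, or by writing $\tilde{h}=(\tilde{h}^\alpha)^{1/\alpha}$ as the composition of the concave nondecreasing map $y\mapsto y^{1/\alpha}$ with the already-established concave $\tilde{h}^\alpha$.

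Finally, for tightness I would exhibit the attack of Eq.~\eqref{Eq: AttackSaturatesBound} with $A_0=\sigma_z$, $A_1=\sigma_x$, $B_{0,1}=\frac{\beta\sigma_z\pm g_\score\sigma_x}{\sqrt{\beta^2+g_\score^2}}$, which realizes the score $\score_\beta$ using a single block with overlap exactly $g_\score$; because the single-block computation above is \emph{exact} (we used the optimal $\tau_E$, so no slack remains and there is no Jensen gap when $I$ is trivial), this strategy attains the formula with equality, giving $f_{\bar{H}^{\uparrow}_\alpha}(\score_\beta)$ equal to the stated expression. The main obstacle, and the only genuinely new ingredient relative to the $f_{\widetilde{H}^{\downarrow}_\alpha}$ proof, is justifying the optimal $\tau_E$/closed form for $\bar{H}^{\uparrow}_\alpha$: it is exactly this exactness that both makes the single-block value equal the $\widetilde{H}^{\downarrow}_\alpha$ value and pins down tightness without needing a separate duality argument of the kind used for $f_{\widetilde{H}^{\uparrow}_\alpha}$.
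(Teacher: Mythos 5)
Your proposal is correct and follows essentially the same route as the paper: reduce to the canonical states of Eq.~\eqref{Eq: SingleISigma} via Lemma~\ref{Lem: QProperties}, evaluate $\bar{H}^{\uparrow}_\alpha(A|E)_\sigma$ through the closed form $\frac{\alpha}{1-\alpha}\log\operatorname{Tr}\bigl[(\operatorname{Tr}_A \sigma_{AE}^\alpha)^{1/\alpha}\bigr]$ (the paper cites \cite[Lemma~5.1]{Tomamichel2015QuantumIP} where you re-derive it via the optimal $\tau_E$), lift over $I$ using \cite[Prop.~5.1]{Tomamichel2015QuantumIP} plus the concavity of $f_3$ from Appendix~\ref{app: ConcavityProperties}, and conclude tightness with the same explicit attack, whose single-block entropy is computed exactly. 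The only differences are cosmetic: your reverse-H\"older derivation of the closed form and your alternative composition argument for concavity are both valid substitutes for the paper's citations.
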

\begin{proof}
    From \cite[Lemma~5.1]{Tomamichel2015QuantumIP}, we know that
    \begin{align}
        \bar{H}^{\uparrow}_{\alpha}\left(A|E\right)_\sigma = \frac{\alpha}{1-\alpha} \log \left[\operatorname{Tr}\left[ \operatorname{Tr}_A\left(\sigma^\alpha_{AE}\right)^{\frac{1}{\alpha}}\right] \right] \, .
    \end{align}
   For any $\rho_{IAE}$ of the form given by Eq.~\eqref{Eq: AllISigma}, we first consider Alice's and Eve's bipartite state for some $I=i$. Recall that, for any $\sigma_{AE}$ (we omit the index $i$ for now) as in Eq.\ \eqref{Eq: SingleISigma}, one can readily verify that
    \begin{align}
         \sigma_{AE}^{\alpha} &= \frac{1}{2^\alpha} \ketbra{0}{0} \otimes \ketbra{\psi_{=}}{\psi_{=}} +  \frac{1}{2^\alpha} \ketbra{1}{1} \otimes \ketbra{\psi_{\neq}}{\psi_{\neq}} \\
         \operatorname{Tr}_A\left(\sigma^\alpha_{AE}\right)&= \frac{1}{2^\alpha}  \ketbra{\psi_{=}}{\psi_{=}} +  \frac{1}{2^\alpha}\ketbra{\psi_{\neq}}{\psi_{\neq}}= \frac{1}{2^{\alpha-1}}\left(\frac{1-g_\score}{2} \ketbra{v_1}{v_1} + \frac{1+g_\score}{2} \ketbra{v_2}{v_2}\right) \, ,
    \end{align}
    where $\{ \ket{\psi_{=}},\ket{\psi_{\neq}} \}$ are given by Eqs.~\eqref{Eq: psieq}--\eqref{Eq: psineq} and $\{ \ket{v_1},\ket{v_2} \}$  by Eqs.~\eqref{Eq: v1}--\eqref{Eq: v2}.
    \begin{align}
       \bar{H}^{\uparrow}_{\alpha}\left(A|E\right)_\sigma & = \frac{\alpha}{1-\alpha} \log \left[\operatorname{Tr}\left[ \operatorname{Tr}_A\left(\sigma^\alpha_{AE}\right)^{\frac{1}{\alpha}}\right] \right]\\
        &= \frac{\alpha}{1-\alpha} \log \left[ 2^{\frac{1-\alpha}{\alpha}} \left(\left( \frac{1-g_\score}{2}\right)^{\frac{1}{\alpha}} + \left( \frac{1+g_\score}{2}\right)^{\frac{1}{\alpha}} \right) \right] \\
        &= 1+ \frac{\alpha}{1-\alpha} \log \left[\left( \frac{1-g_\score}{2}\right)^{\frac{1}{\alpha}} + \left( \frac{1+g_\score}{2}\right)^{\frac{1}{\alpha}}\right] \, .
    \end{align}
    This concludes the calculations for the individual qubit block related to some index $I=i$. We now explicitly write the index for rest of the calculation. Let $h(\score^i_{\beta})$ denote the function
    \begin{align}
        h(\score^i_{\beta}) \coloneqq   \left(\frac{1-g_\score^i}{2}\right)^{\frac{1}{\alpha}} + \left(\frac{1+g_\score^i}{2} \right)^{\frac{1}{\alpha}}\, .
    \end{align}
    Using both Eq.~\eqref{Eq: SingleISimplification} and~\cite[Prop.~5.1]{Tomamichel2015QuantumIP}, it holds that
    \begin{align}
        \bar{H}^{\uparrow}_\alpha(A|X=0,IE)_{\rho} &\geq \frac{\alpha}{1-\alpha}\log \left[\sum_{i} \operatorname{Pr} \left[ I=i\right] 2^{\frac{\left(1-\alpha\right)}{\alpha}\bar{H}^{\uparrow}_\alpha(A|I=i,E)_{\sigma^i}} \right]\\
        &= 1+ \frac{\alpha}{1-\alpha}\log \left[\sum_{i} \operatorname{Pr} \left[ I=i\right] h(\score^i_{\beta}) \right]\, .
    \end{align}
    We show in Appendix~\ref{app: ConcavityProperties} that $h(\score^i_{\beta})$ is concave for $\alpha> 1$. The desired lower bound then follows from this property, together with the monotonicity of the logarithm, i.e. 
    \begin{align}
        \bar{H}^{\uparrow}_\alpha(A|X=0,IE)_{\rho} &\geq 1+ \frac{\alpha}{1-\alpha}\log \left[ h(\score_{\beta}) \right] \\
        &= 1 + \frac{\alpha}{1-\alpha}\log \left[ \left(\frac{1-g_\score}{2}\right)^{\frac{1}{\alpha}} + \left(\frac{1+g_\score}{2} \right)^{\frac{1}{\alpha}}\right] \, ,
    \end{align}
   where $g_\score = \sqrt{\tfrac{\score^2_\beta }{4} - \beta^2}$ and $\score_{\beta} = \operatorname{Pr} \left[ I=i\right] \score^i_{\beta}$. This inequality is saturated if Alice and Bob share the state
    \begin{align}  \ket{\psi} =\sqrt{P_+}\ket{\phi^+}_{Q_AQ_B}\ket{0}_{E}+\sqrt{P_-}\ket{\phi^-}_{Q_AQ_B}\ket{1}_{E} \, ,
    		\end{align}
    		where $P_\pm = \tfrac{1}{2}\left(1 \pm g_{\score}\right)$,
     and measure the observables $A_0 = \sigma_{z}$, $A_1 = \sigma_{x}$, $B_0 = \frac{\beta\sigma_{z} + g_{\score}\sigma_{x}}{\sqrt{\beta^2+g_{\score}^2}}$, $B_1 = \frac{\beta\sigma_{z} - g_{\score}\sigma_{x}}{\sqrt{\beta^2+g_{\score}^2}}$ (see Appendix~\ref{Sec: TightnessBounds} for more details).  The derived rate function is thus tight. 
\end{proof}
\subsection{Monotonicity and Concavity Properties} \label{app: ConcavityProperties}
In this section, we aim to prove the monotonicity and concavity of the following three functions for certain regions of $\alpha>1$ and $|\beta| \geq 1$:
\begin{align}
 h_{1}(\score_{\beta}) &= \left[ \left(\frac{1-g_\score}{2}\right)^{\frac{1}{\alpha}} + \left(\frac{1+g_\score}{2} \right)^{\frac{1}{\alpha}}\right]^{\alpha} \\
 h_{2}(\score_{\beta}) &=\left(\frac{1-g_\score}{2}\right)^{2-\alpha} + \left(\frac{1+g_\score}{2} \right)^{2-\alpha} \\
  h_{3}(\score_{\beta}) &=  \left(\frac{1-g_\score}{2}\right)^{\frac{1}{\alpha}} + \left(\frac{1+g_\score}{2} \right)^{\frac{1}{\alpha}}  \, ,
\end{align}
where $g_\score = \sqrt{\tfrac{\score^2_\beta }{4} - \beta^2}$. These functions can alternatively be expressed as 
\begin{align}
     h_{i}(\score_{\beta}) = f_{i}(\bar{g}(\score_{\beta})) \, ,
\end{align}
where 
\begin{align}
    f_{1}(x) &= \left[ \left(\frac{1-\sqrt{x}}{2}\right)^{\frac{1}{\alpha}} + \left(\frac{1+\sqrt{x}}{2} \right)^{\frac{1}{\alpha}}\right]^{\alpha} \\
 f_{2}(x) &=\left(\frac{1-\sqrt{x}}{2}\right)^{2-\alpha} + \left(\frac{1+\sqrt{x}}{2} \right)^{2-\alpha} \\
  f_{3}(x) &=  \left(\frac{1-\sqrt{x}}{2}\right)^{\frac{1}{\alpha}} + \left(\frac{1+\sqrt{x}}{2} \right)^{\frac{1}{\alpha}}  \, ,
\end{align}
and $\bar{g}(\score_{\beta}) =\tfrac{\score^2_\beta }{4} - \beta^2$. The first derivative of $h(\score_{\beta})$ is given by $h_{i}^\prime(\score_{\beta}) = f_{i}^\prime(\bar{g} (\score_{\beta})) \cdot \bar{g}^\prime (\score_{\beta})$. Since $\bar{g}^\prime (\score_{\beta}) > 0$ in the regime $\score_\beta \in \left[2\lvert \beta \rvert, 2\sqrt{1+\beta^2} \right]$, the monotonicity of $h_{i}(\score)$ is purely determined by $f_{i}(x)$. We now show that $f_{i}^\prime (x) \leq 0$ in the relevant regimes $\alpha >1$. In particular this ensures that the rate functions are monotonically increasing, a property which is relevant for the discussion in Appendix~\ref{Sec: Qubit Reduction}.
\begin{prop}
 For all $x \in [0,1]$, the functions $f_{1}(x)$ and $f_{3}(x)$ are monotonically decreasing for $\alpha \in (1,\infty)$. Moreover, for all $x \in [0,1]$, the function $f_{2}(x)$ is monotonically decreasing for $\alpha \in (1,2)$.
\end{prop}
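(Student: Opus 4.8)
The plan is to unify all three functions through a single building block. Writing $u=\sqrt{x}\in[0,1]$ and, for a parameter $p>0$,
\begin{align}
  \psi_p(u) \coloneqq \left(\frac{1-u}{2}\right)^{p} + \left(\frac{1+u}{2}\right)^{p} \, ,
\end{align}
one has $f_{1}(x)=\left[\psi_{1/\alpha}(\sqrt{x})\right]^{\alpha}$, $f_{2}(x)=\psi_{2-\alpha}(\sqrt{x})$, and $f_{3}(x)=\psi_{1/\alpha}(\sqrt{x})$. Since $u=\sqrt{x}$ is increasing in $x$ on $[0,1]$, it suffices to establish that each function is nonincreasing in $u$; the chain rule then transfers this to the variable $x$.

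The central step I would prove is that $\psi_p$ is nonincreasing on $[0,1]$ whenever $p\in(0,1)$. Differentiating on the open interval $(0,1)$ gives
\begin{align}
  \psi_p'(u) = \frac{p}{2}\left[\left(\frac{1+u}{2}\right)^{p-1} - \left(\frac{1-u}{2}\right)^{p-1}\right] \, .
\end{align}
Because $p-1<0$, the map $t\mapsto t^{p-1}$ is strictly decreasing on $(0,\infty)$, and since $\tfrac{1+u}{2}\ge \tfrac{1-u}{2}$ for $u\ge 0$, the bracket is nonpositive; hence $\psi_p'(u)\le 0$. As $\psi_p$ is continuous on the closed interval $[0,1]$, monotonicity on $(0,1)$ extends to the endpoints.

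It then remains only to check the exponents. For $f_{3}$ we take $p=1/\alpha$, which lies in $(0,1)$ precisely when $\alpha>1$; for $f_{2}$ we take $p=2-\alpha$, which lies in $(0,1)$ precisely when $\alpha\in(1,2)$ --- this is the source of the narrower range claimed for $f_{2}$. Finally, $f_{1}=\psi_{1/\alpha}^{\alpha}$ is the composition of the nonnegative nonincreasing function $\psi_{1/\alpha}$ with the increasing map $y\mapsto y^{\alpha}$ on $[0,\infty)$ (valid since $\alpha>0$), so $f_{1}$ is nonincreasing as well.

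I do not anticipate a substantive obstacle; the only point requiring care is the boundary $u=1$ (equivalently $x=1$), where $\tfrac{1-u}{2}=0$ and $\psi_p'(u)$ diverges to $-\infty$ because $p-1<0$. This is harmless for the conclusion, since a negative (even unbounded) derivative on the open interval together with continuity on the closed interval still yields a nonincreasing function; one argues on $(0,1)$ and invokes continuity at the endpoints rather than evaluating the derivative there.
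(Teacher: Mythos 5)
Your proof is correct and follows essentially the same route as the paper's: differentiate, observe that the negative exponent $p-1<0$ makes the bracket $\left(\tfrac{1+u}{2}\right)^{p-1}-\left(\tfrac{1-u}{2}\right)^{p-1}$ nonpositive, reduce $f_1$ to $f_3$ (the paper via the sign of $f_1'=\alpha f_3^{\alpha-1}f_3'$, you via monotone composition with $y\mapsto y^\alpha$, which is equivalent), and extend to the endpoints by continuity. Your unification of $f_2$ and $f_3$ through the single family $\psi_p$ with $p=2-\alpha$ versus $p=1/\alpha$ is a tidy presentational touch, and your explicit remark about the divergent derivative at $u=1$ makes the endpoint handling more careful than the paper's brief continuity appeal, but the substance is identical.
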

\begin{proof}
    Note that because $f_1(x) = f_3(x)^{\alpha}$, 
    \begin{align}
        f_1^\prime(x) = \alpha f_3(x)^{\alpha-1} f_3^\prime(x) \, .
    \end{align}
However, because $f_3(x) \geq 0$, $f_1^\prime(x)$ must have the same sign as $f_3^\prime(x)$. We will now prove that $f_3^\prime(x) \leq 0$ for all $x \in (0,1)$. The continuity of $f_3(x)$ then ensures that $f_3^\prime(x)$ is monotonic over the entire regime $x \in [0,1]$. For $\alpha > 1$, it follows that
\begin{align}
    f_3^\prime(x) &= \frac{\left(1+\sqrt{x}\right)^{\frac{1}{\alpha}-1}-\left(1-\sqrt{x}\right)^{\frac{1}{\alpha}-1}}{2\alpha\sqrt{x}} \\
    &\leq 0 \, ,
\end{align}
thus ensuring monotonicity. For $f_2(x)$, one can either prove monotonicity by noting that it is equivalent to $f_3(x)$ after a suitable modification of the {\Renyi} parameter $\alpha$, or alternatively one can see it via 
\begin{align}
    f_2^\prime(x) &= \frac{\left(2-\alpha\right)\left(\left(1+\sqrt{x}\right)^{1-\alpha}-\left(1-\sqrt{x}\right)^{1-\alpha}\right)}{2\alpha\sqrt{x}} \\
    &\leq 0 \, ,
\end{align}  
which holds for all $x \in (0,1)$. Again one can extend the result to $x \in [0,1]$ via a continuity argument.
\end{proof}
As was argued in~\cite{Woodhead_2021}, to prove concavity of $h_{i}(\score_{\beta})$, it is sufficient to prove that $f_{i}(x)$ is concave and monotonically decreasing and that $\bar{g}(\score_{\beta})$ is convex. Since $\bar{g}(\score_{\beta})$ is clearly convex and we have shown that all $f_{i}(x)$ are decreasing, it simply remains to show that each $f_{i}(x)$ is concave.
\begin{prop} \label{Prop: ConcavF3}
    For all $x \in [0,1]$, the function
    \begin{align}
        f_{3}(x) = \left(1-\sqrt{x}\right)^{\frac{1}{\alpha}} + \left(1+\sqrt{x}\right)^{\frac{1}{\alpha}}
    \end{align}
    is concave for $\alpha \in (1,\infty)$.
\end{prop}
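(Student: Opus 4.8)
The plan is to show directly that $f_3''(x) \le 0$ on the open interval $(0,1)$ and then extend concavity to the closed interval $[0,1]$ by continuity of $f_3$. Writing $\mu = 1/\alpha \in (0,1)$ and substituting $s = \sqrt{x}$, we have $f_3(x) = H(s)$ with $H(s) = (1-s)^{\mu} + (1+s)^{\mu}$. A short chain-rule computation gives $f_3''(x) = \tfrac{1}{4s^3}\bigl(sH''(s) - H'(s)\bigr)$, so that, since $s>0$, concavity of $f_3$ in $x$ is equivalent to the one-variable inequality $sH''(s) \le H'(s)$ for $s\in(0,1)$. I would carry out this reduction first, as it trades the awkward $\sqrt{x}$-dependence for a clean inequality in $s$.

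Next I would compute $H'(s) = \mu\bigl[(1+s)^{\mu-1}-(1-s)^{\mu-1}\bigr]$ and $H''(s) = \mu(\mu-1)\bigl[(1-s)^{\mu-2}+(1+s)^{\mu-2}\bigr]$, substitute into $H'(s) - sH''(s) \ge 0$, divide through by $\mu>0$, and use $(1\pm s)^{\mu-1}=(1\pm s)(1\pm s)^{\mu-2}$ to collect the $(1+s)^{\mu-2}$ and $(1-s)^{\mu-2}$ terms. Setting $\lambda := 2-\mu \in (1,2)$, the inequality rearranges to
\begin{equation}
\lambda s \;\ge\; \frac{(1-s)^{-\lambda}-(1+s)^{-\lambda}}{(1-s)^{-\lambda}+(1+s)^{-\lambda}}\,.
\end{equation}
Dividing numerator and denominator by $(1-s)^{-\lambda}$ and setting $\rho = \bigl(\tfrac{1-s}{1+s}\bigr)^{\lambda}=e^{-w}$ with $w = 2\lambda\operatorname{artanh}(s)$, the right-hand side becomes $\tfrac{1-\rho}{1+\rho}=\tanh(w/2)=\tanh\!\bigl(\lambda\operatorname{artanh}(s)\bigr)$. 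Hence the claim reduces to showing $\tanh\!\bigl(\lambda\theta\bigr)\le \lambda\tanh\theta$ for $\theta = \operatorname{artanh}(s)>0$ and $\lambda>1$.

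Finally, this last inequality follows from an elementary property of $\tanh$: since $\tanh''(\theta) = -2\tanh\theta\,\operatorname{sech}^2\theta < 0$ for $\theta>0$, the function $\tanh$ is concave on $[0,\infty)$ with $\tanh(0)=0$, and for any concave function vanishing at the origin the ratio $\tanh(\theta)/\theta$ is non-increasing; applying this at $\lambda\theta \ge \theta$ yields $\tanh(\lambda\theta)\le\lambda\tanh\theta$. I expect the main obstacle to be the bookkeeping in the second step: correctly simplifying $sH''-H'$ and recognizing the resulting ratio as a hyperbolic tangent. Note that the summand $(1-\sqrt{x})^{\mu}$ is by itself \emph{not} concave on all of $[0,1]$, since one checks its second derivative changes sign near $x=1$; this is why we cannot simply invoke ``a sum of concave functions is concave'' and must instead treat the two terms jointly. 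The $\tanh$ reformulation is precisely the device that captures the required cancellation between them.
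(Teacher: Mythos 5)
Your proposal is correct, and after the initial reduction it takes a genuinely different route from the paper. Both arguments start from the same place: your identity $f_3''(x) = \frac{1}{4s^3}\left(sH''(s)-H'(s)\right)$ with $s=\sqrt{x}$ reproduces exactly the paper's factored form of the second derivative, namely that $f_3''\le 0$ is equivalent to $\left(1+\sqrt{x}\right)^{\frac{1}{\alpha}-2}\left(1+\left(2-\tfrac{1}{\alpha}\right)\sqrt{x}\right) \geq \left(1-\sqrt{x}\right)^{\frac{1}{\alpha}-2}\left(1-\left(2-\tfrac{1}{\alpha}\right)\sqrt{x}\right)$. From there the paper bounds the two sides using second-order Taylor expansions of $(1\mp y)^{2-\frac{1}{\alpha}}$, controlling the sign of the remainder via the third derivative (Lagrange form), and then cancels terms; its alternative proof in the appendix instead splits cases around the root of $1-(2-\tfrac{1}{\alpha})\sqrt{x}$ and shows a ratio is monotone. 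You instead rearrange to $\lambda s \geq \tanh\left(\lambda \operatorname{artanh}(s)\right)$ with $\lambda = 2-\tfrac{1}{\alpha}$, which via $s=\tanh\theta$ becomes the classical inequality $\tanh(\lambda\theta)\leq\lambda\tanh\theta$, an immediate consequence of concavity of $\tanh$ on $[0,\infty)$ with $\tanh(0)=0$. I verified the algebra: the rearrangement to $\lambda s \geq \frac{(1-s)^{-\lambda}-(1+s)^{-\lambda}}{(1-s)^{-\lambda}+(1+s)^{-\lambda}}$ is legitimate since the denominator is positive, and the identification $\frac{1-\rho}{1+\rho}=\tanh(w/2)$ with $\rho=e^{-w}$ is exact. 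Your route buys several things: it needs no Taylor-remainder bookkeeping, no case split (since $\tanh(\lambda\theta)<1$ always while $\lambda s$ may exceed $1$, the inequality is uniform), it works for every $\lambda>1$ and hence all $\alpha\in(1,\infty)$ in one stroke (the paper's alternative appendix proof only covers $\alpha\in(1,2)$), and the same lemma with $\lambda=\alpha$ would also dispose of the paper's companion proposition on $f_2(x)=\left(1-\sqrt{x}\right)^{2-\alpha}+\left(1+\sqrt{x}\right)^{2-\alpha}$. One trivial imprecision in your closing remark: the summand $\left(1-\sqrt{x}\right)^{\frac{1}{\alpha}}$ fails concavity near $x=0$ (where the $x^{-3/2}$ term in its second derivative dominates with positive sign), not near $x=1$; but this aside plays no role in the logic.
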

\begin{proof}
    Due to continuity arguments, it is sufficient to prove it for all $x \in (0,1)$. The second derivative of this function is given by
    \begin{align}
        f_{3}^{\prime \prime} (x) &= \frac{\left(1-\sqrt{x}\right)^{\frac{1}{\alpha}-1}}{4\alpha x^{\frac{3}{2}}} - \frac{\left(1+\sqrt{x}\right)^{\frac{1}{\alpha}-1}}{4\alpha x^{\frac{3}{2}}} - \frac{\left(\alpha-1\right)\left(\left(1+\sqrt{x}\right)^{\frac{1}{\alpha}-2}+\left(1-\sqrt{x}\right)^{\frac{1}{\alpha}-2}\right)}{4\alpha^2 x} \\
        &= \frac{1}{4\alpha x^{\frac{3}{2}}} \left( \left(1-\sqrt{x}\right)^{\frac{1}{\alpha}-2} \left( 1-\sqrt{x} - \frac{\alpha-1}{\alpha}\sqrt{x}\right)-\left(1+\sqrt{x}\right)^{\frac{1}{\alpha}-2} \left( 1+\sqrt{x} + \frac{\alpha-1}{\alpha}\sqrt{x}\right)\right) \\
        &=\frac{1}{4\alpha x^{\frac{3}{2}}} \left( \left(1-\sqrt{x}\right)^{\frac{1}{\alpha}-2} \left( 1 - \left(2-\frac{1}{\alpha}\right)\sqrt{x}\right)-\left(1+\sqrt{x}\right)^{\frac{1}{\alpha}-2} \left( 1+\left( 2-\frac{1}{\alpha}\right)\sqrt{x}\right)\right) \, .
    \end{align}
    Note that, for any $\alpha>1$ and $y \in \left[ 0,1\right)$, 
   \begin{align}
       (1-y)^{2-\frac{1}{\alpha}} &\geq 1 - \left( 2-\frac{1}{\alpha}\right)y + \frac{\left(\alpha-1\right)\left(2\alpha-1\right)}{2\alpha^2}y^2  \label{Eq:TaylorPetzup1}\\
        (1+y)^{2-\frac{1}{\alpha}} &\leq 1 - \left( 2+\frac{1}{\alpha}\right)y + \frac{\left(\alpha-1\right)\left(2\alpha-1\right)}{2\alpha^2}y^2  \, . \label{Eq:TaylorPetzup2}
   \end{align}
   The right-hand-side are simply Taylor approximations of the left-hand-side. These inequalities follow from the fact that the error arising from these Taylor approximations are proportional to the third derivatives of the left-hand-side at some value $\xi \in [0,y]$; see e.g.~\cite[Eq.~(1.3.2)]{Hildebrand56}. The third derivative, i.e.
   \begin{align}
       \frac{d^3}{dy^3} (1-y)^{2-\frac{1}{\alpha}} &= \frac{\left(2\alpha-1\right)\left(\alpha-1\right)}{\alpha^3}\left(1-y\right)^{-1-\frac{1}{\alpha}} \\
       \frac{d^3}{dy^3} (1+y)^{2-\frac{1}{\alpha}} &= - \frac{\left(2\alpha-1\right)\left(\alpha-1\right)}{\alpha^3}\left(1+y\right)^{-1-\frac{1}{\alpha}} \, ,
   \end{align}
 are positive and negative, respectively, for all $y \in [0,1)$, thus ensuring that the bounds hold. Using this and setting $y =\sqrt{x}$, we find that for all $\alpha>1$,
   \begin{align}
       f_{3}^{\prime \prime} (x) &\leq \frac{1}{4\alpha x^{\frac{3}{2}}}\left(1-\sqrt{x}\right)^{\frac{1}{\alpha}-2} \left( \left(1-\sqrt{x}\right)^{2-\frac{1}{\alpha}}-\frac{\left(\alpha-1\right)\left(2\alpha-1\right)}{2\alpha^2}x\right)\notag\\
       &\qquad\qquad\qquad\qquad-\frac{1}{4\alpha x^{\frac{3}{2}}}\left(1+\sqrt{x}\right)^{\frac{1}{\alpha}-2} \left( \left(1+\sqrt{x}\right)^{2-\frac{1}{\alpha}}- \frac{\left(\alpha-1\right)\left(2\alpha-1\right)}{2\alpha^2}x\right) \\
       &= \frac{1}{4\alpha x^{\frac{3}{2}}}  \left(\left( 1-\left(1-\sqrt{x}\right)^{\frac{1}{\alpha}-2}\frac{\left(\alpha-1\right)\left(2\alpha-1\right)}{2\alpha^2}x\right)-\left( 1-\left(1+\sqrt{x}\right)^{\frac{1}{\alpha}-2} \frac{\left(\alpha-1\right)\left(2\alpha-1\right)}{2\alpha^2}x\right) \right) \\
       &= \frac{1}{4\alpha x^{\frac{3}{2}}} \frac{\left(\alpha-1\right)\left(2\alpha-1\right)}{2\alpha^2}x \left( \left(1+\sqrt{x}\right)^{\frac{1}{\alpha}-2} - \left(1-\sqrt{x}\right)^{\frac{1}{\alpha}-2}\right) \\
       &\leq 0 \, .
   \end{align}
   This concludes the concavity proof for $\alpha>1$.

   
\end{proof}
\begin{prop}
For all $x \in [0,1]$, the function
    \begin{align}
        f_{1}(x) = \left(\left(1-\sqrt{x}\right)^{\frac{1}{\alpha}} + \left(1+\sqrt{x}\right)^{\frac{1}{\alpha}}\right)^{\alpha}
    \end{align}
    is concave for $\alpha \in (1,\infty)$.
\end{prop}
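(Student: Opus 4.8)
The plan is to exploit the relation $f_{1}=f_{3}^{\alpha}$, since the concavity and monotonicity of $f_{3}$ have already been established (Proposition~\ref{Prop: ConcavF3} and the preceding monotonicity proposition). Differentiating twice gives
\[
f_{1}''(x) = \alpha\, f_{3}(x)^{\alpha-2}\left[(\alpha-1)\,f_{3}'(x)^{2} + f_{3}(x)\,f_{3}''(x)\right],
\]
and because $\alpha>1$ and $f_{3}(x)>0$ on $[0,1]$, concavity of $f_{1}$ reduces to the bracketed inequality $(\alpha-1)f_{3}'(x)^{2}+f_{3}(x)f_{3}''(x)\le 0$. I would stress up front that a naive sign count does \emph{not} suffice: concavity of $f_{3}$ only yields $f_{3}f_{3}''\le 0$, whereas $(\alpha-1)f_{3}'^{2}\ge 0$, so the two terms compete and a quantitative estimate is needed to show that the (negative) curvature term dominates the (positive) gradient term.

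It is conceptually clarifying to note that $f_{1}(x)=N_{p}(1-\sqrt{x},\,1+\sqrt{x})$, where $p=1/\alpha\in(0,1)$ and $N_{p}(u,v)=(u^{p}+v^{p})^{1/p}$ is the $\ell_{p}$ quasi-norm, which is a standard example of a concave function on the positive quadrant for $0<p<1$. Equivalently, under the substitution $s=\sqrt{x}$ the inner map $s\mapsto(1-s,1+s)$ is affine, so $F(s)\coloneqq N_{p}(1-s,1+s)$ is concave in $s$ as the restriction of a concave function to a segment, and concavity of $f_{1}$ in $x$ is exactly the condition $s\,F''(s)\le F'(s)$ on $(0,1)$. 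I would flag, however, that this structural remark does not close the argument by itself: since $F$ is decreasing and $x\mapsto\sqrt{x}$ is concave, the reparametrization $x=s^{2}$ works \emph{against} concavity, so the same quantitative domination is unavoidable.

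To carry out the estimate I would substitute the explicit forms of $f_{3},f_{3}',f_{3}''$ already computed, writing everything in terms of $a=1-\sqrt{x}$, $b=1+\sqrt{x}$ and the powers $a^{p-1},b^{p-1},a^{p-2},b^{p-2}$, and using $b-a=2\sqrt{x}$ and $a^{p}+b^{p}=f_{3}$ to clear the stray $x^{-3/2}$ factor. After cancelling positive prefactors, the target inequality takes the shape of a comparison between $(a^{p-1}-b^{p-1})^{2}/(a^{p}+b^{p})$ together with the difference-quotient term $2(a^{p-1}-b^{p-1})/(b-a)$ on one side, and $(1-p)^{-1}$ times $\,(a^{p-2}+b^{p-2})$ on the other. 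I would then bound these quantities using the convexity of $t\mapsto t^{p-1}$ and $t\mapsto t^{p-2}$ (both convex, as $p-1,p-2<0$) via Taylor remainders about $t=1$ and a mean-value estimate, in the same spirit as the bounds~\eqref{Eq:TaylorPetzup1}--\eqref{Eq:TaylorPetzup2} used for $f_{3}$. Establishing this inequality uniformly in $\alpha>1$ and over $x\in(0,1)$ is the main obstacle, since the competing terms cancel delicately: a Taylor expansion near $x=0$ shows the low-order contributions cancel and the leading surviving term is negative, with coefficient proportional to $-(1-p)(3-2p)(1+p)<0$. A continuity argument then extends the conclusion to the closed interval $[0,1]$.
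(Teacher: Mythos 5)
Your setup is sound: the identity $f_1''(x) = \alpha f_3(x)^{\alpha-2}\bigl[(\alpha-1)f_3'(x)^2 + f_3(x) f_3''(x)\bigr]$ is correct, you rightly note that concavity of $f_3$ alone cannot close the argument (the nonnegative gradient term $(\alpha-1)f_3'^2$ competes against $f_3 f_3'' \leq 0$), and your $\ell_p$ quasi-norm reformulation --- with the honest caveat that the substitution $x = s^2$ composes a \emph{decreasing} concave function with a concave one, so the standard composition rules do not apply --- is genuinely clarifying. Your local computation at $x=0$ also checks out: with $p = 1/\alpha$, the quadratic coefficient of $f_1$ is $-2^{1/p}(1-p)(3-2p)(1+p)/24 < 0$, so $f_1$ is indeed strictly concave near the origin.

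The genuine gap is that the central inequality $(\alpha-1)f_3'(x)^2 + f_3(x)f_3''(x) \leq 0$ is never actually established on $(0,1)$. What you provide is (i) a \emph{plan} to bound the competing terms ``via Taylor remainders about $t=1$ and a mean-value estimate,'' with no inequality stated or verified, and (ii) a Taylor expansion near $x=0$. The expansion only gives $f_1''<0$ in a neighborhood of the origin; the continuity argument you invoke can extend a sign condition from the open interval to its closure, but it cannot propagate a local statement at one endpoint across the interior --- for that you need a global estimate, which is precisely what you yourself flag as ``the main obstacle'' and then leave open. For comparison, the paper's proof works directly with the explicit $f_1''$ after factoring off a nonnegative prefactor $k(x)$, and it requires a case split at $\alpha = 2$: for $\alpha \geq 2$ it proves the global inequality $\alpha\left(1-y^2\right)^{2-\frac{1}{\alpha}}\left(\left(1+y\right)^{\frac{2}{\alpha}-1} - \left(1-y\right)^{\frac{2}{\alpha}-1}\right) \geq \left(4-2\alpha\right)y$ by showing the difference has nonnegative derivative on $[0,1)$, while for $1<\alpha\leq 2$ it needs a sharper third-order Taylor lower bound whose Lagrange remainder is controlled by the sign of an explicit fourth derivative on the whole interval. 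That two different-order estimates are needed in the two regimes is evidence that the cancellation you describe is delicate, and that the single uniform Taylor scheme you sketch cannot simply be waved through: until those bounds are written down and shown to dominate uniformly in $\alpha>1$ and $x\in(0,1)$, the proposition is not proved.
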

\begin{proof}
     Due to continuity arguments, it is sufficient to prove it for all $x \in (0,1)$. The second derivative of this function is given by
\begin{multline}
    f_{1}^{\prime \prime}(x) =  \frac{k(x)}{(1-x)^2}\left( \left( 1-x\right)^{\frac{1}{\alpha}} \left(\left(4-2\alpha \right)\sqrt{x}-2\alpha x^{\frac{3}{2}}\right)-\alpha \left(1-\sqrt{x}\right)^{\frac{2}{\alpha}} \left( -1- \sqrt{x}+ x+ x^{\frac{3}{2}}\right)\right.\\
    \left.-\alpha \left(1+\sqrt{x}\right)^{\frac{2}{\alpha}} \left( 1- \sqrt{x}- x+ x^{\frac{3}{2}}\right)
    \right) \, ,
\end{multline}
where $k(x) = \frac{\left( \left(1-\sqrt{x}\right)^{\frac{1}{\alpha}}+\left(1+\sqrt{x}\right)^{\frac{1}{\alpha}}\right)^{\alpha-2}}{4\alpha x^{\frac{3}{2}}} \geq 0$. This can be simplified as follows.
\begin{align}
    f_{1}^{\prime \prime}(x) &=  \frac{k(x)}{(1-x)^2}\left( \left( 1-x\right)^{\frac{1}{\alpha}} \left(\left(4-2\alpha \right)\sqrt{x}-2\alpha x^{\frac{3}{2}}\right)+\alpha\left(1-\sqrt{x}\right)^{\frac{2}{\alpha}} \left( 1+\sqrt{x}\right) \left( 1 -x\right)-\alpha \left(1+\sqrt{x}\right)^{\frac{2}{\alpha}} \left(1-\sqrt{x}\right)\left( 1- x\right)
    \right)\notag \\
    &= \frac{k(x)}{(1-x)^2}\left( \left( 1-x\right)^{\frac{1}{\alpha}} \left(\left(4-2\alpha \right)\sqrt{x}-2\alpha x^{\frac{3}{2}}\right)+\alpha\left( 1 -x\right)^2 \left(\left(1-\sqrt{x}\right)^{\frac{2}{\alpha}-1} - \left(1+\sqrt{x}\right)^{\frac{2}{\alpha}-1} \right)
    \right) \\
    &= k(x)\left( \left( 1-x\right)^{\frac{1}{\alpha}-2} \left(\left(4-2\alpha \right)\sqrt{x}-2\alpha x^{\frac{3}{2}}\right)+\alpha \left(\left(1-\sqrt{x}\right)^{\frac{2}{\alpha}-1} - \left(1+\sqrt{x}\right)^{\frac{2}{\alpha}-1} \right)
    \right)\label{eqn:borrowing_2}
\end{align}
For the regime $\alpha \geq 2$, we find that for any $y \in \left[ 0,1\right)$
\begin{align}
    \alpha \left( 1-y^2\right)^{2-\frac{1}{\alpha}}\left(\left(1+y\right)^{\frac{2}{\alpha}-1} - \left(1-y\right)^{\frac{2}{\alpha}-1} \right) \geq \left(4-2\alpha \right)y \, . \label{Eq: Downarrowalphageq2taylor}
\end{align}
Note that equality holds when $y=0$. The inequality then follows from the fact that
\begin{align}
    &\frac{d}{dy} \left(\alpha \left( 1-y^2\right)^{2-\frac{1}{\alpha}}\left(\left(1+y\right)^{\frac{2}{\alpha}-1} - \left(1-y\right)^{\frac{2}{\alpha}-1} \right) -\left(4-2\alpha \right)y \right)\\
    &= \left( 1-y^2\right)^{1-\frac{1}{\alpha}} \left( \left( 3\alpha y -\alpha +2\right)\left( 1-y\right)^{\frac{2}{\alpha}-1} - \left( 3\alpha y +\alpha -2\right)\left( 1+y\right)^{\frac{2}{\alpha}-1}\right) - \left( 4-2\alpha\right) \\
    &\geq \left( 1-y^2\right)^{1-\frac{1}{\alpha}} \left( \left(  2-\alpha \right)\left( 1-y\right)^{\frac{2}{\alpha}-1} + \left( 2-\alpha \right)\left( 1+y\right)^{\frac{2}{\alpha}-1}\right) - \left( 4-2\alpha\right) \\
    &= \left( 2-\alpha\right)\left( 1-y^2\right)^{1-\frac{1}{\alpha}} \left( \left( 1-y\right)^{\frac{2}{\alpha}-1} +\left( 1+y\right)^{\frac{2}{\alpha}-1}\right) - \left( 4-2\alpha\right) \\
    &\geq \left( 2-\alpha\right)\left( 1-y^2\right)^{1-\frac{2}{\alpha}} \left( \left( 1-y\right)^{\frac{2}{\alpha}-1} +\left( 1+y\right)^{\frac{2}{\alpha}-1}\right) - \left( 4-2\alpha\right) \\
    &= \left( 2-\alpha\right) \left( \left( 1+y\right)^{1-\frac{2}{\alpha}} +\left( 1-y\right)^{1-\frac{2}{\alpha}}\right) - \left( 4-2\alpha\right) \\
    &\geq 2\left( 2-\alpha\right)  - \left( 4-2\alpha\right)\\
    &= 0 \, .
\end{align}
The second-to-last line holds with equality for $y=0$. For $y >0$, the inequality holds because ${\left( 1+y\right)^{1-\frac{2}{\alpha}} +\left( 1-y\right)^{1-\frac{2}{\alpha}} \geq 0}$ and 
\begin{align}
    \frac{d}{dy} \left(\left( 1+y\right)^{1-\frac{2}{\alpha}} +\left( 1-y\right)^{1-\frac{2}{\alpha}} \right) &= - \frac{\left(\alpha -2 \right) \left( \left( 1-y\right)^{-\frac{2}{\alpha}} - \left( 1+y\right)^{-\frac{2}{\alpha}} \right)}{\alpha} \\
    &\leq 0 \, .
\end{align}
Using Eq.~\eqref{Eq: Downarrowalphageq2taylor} and setting $y = \sqrt{x}$, it follows that
\begin{align}
    f_{1}^{\prime \prime}(x) &\leq -2\alpha x^{\frac{3}{2}}k(x) \left( 1-x\right)^{1/a-2} \\
    &\leq 0 \, .
\end{align}
For $1<\alpha \leq 2$, we instead use the inequality 
\begin{align}
    \alpha \left( 1-y^2\right)^{2-\frac{1}{\alpha}}\left(\left(1+y\right)^{\frac{2}{\alpha}-1} - \left(1-y\right)^{\frac{2}{\alpha}-1} \right) \geq \left(4-2\alpha \right)y + \frac{2\left(\alpha-2\right)\left(3\alpha^2+2\alpha-2\right)}{3\alpha^2}y^3 \, .
\end{align}
The right-hand-side should be viewed as a third-order Taylor approximation, and the bound holds because the error arising from this Taylor approximation is proportional to the fourth derivative at some $\xi \in [0,y]$, see~\cite[Eq.~(1.3.2)]{Hildebrand56}, and
\begin{align}
    &\frac{d^4}{dy^4} \left(\alpha \left( 1-y^2\right)^{2-\frac{1}{\alpha}}\left(\left(1+y\right)^{\frac{2}{\alpha}-1} - \left(1-y\right)^{\frac{2}{\alpha}-1} \right)\right) \\
    &= \frac{16\left( \alpha -1\right)\left( \alpha +1\right)\left( 2\alpha -1\right)\left( 1-y^2\right)^{-\frac{1}{\alpha}-2}\left( \left( 1+y\right)^{\frac{2}{\alpha}-1}-\left( 1-y\right)^{\frac{2}{\alpha}-1}\right)}{a^4} \\
    &\geq 0 \,.
\end{align}
It then holds that
\begin{align}
    f_{1}^{\prime \prime}(x) &\leq  x^{\frac{3}{2}}k(x) \left( 1-x\right)^{1/a-2}\left(\frac{2\left(2-\alpha\right)\left(3\alpha^2+2\alpha-2\right)}{3\alpha^2} -2\alpha \right) \leq 0 \, .
\end{align}
The last inequality follows from the fact that 
\begin{align}
2\left(2-\alpha\right)\left(3\alpha^2+2\alpha-2\right) = 6\alpha^3
\end{align}
has the three solutions $\alpha \in \{ -1, \frac{2}{3},1\}$. The fraction 
\begin{align}
    \left(\frac{2\left(2-\alpha\right)\left(3\alpha^2+2\alpha-2\right)}{3\alpha^2} -2\alpha \right)
\end{align}
must thus have the same sign for all $1<\alpha \leq 2$. Explicitly verifying the sign for one such $\alpha$ then suffices to prove the claim. This concludes the proof that $f_{1}(x)$ is concave for all $\alpha >1$.
\end{proof}

\begin{prop}
For all $x \in [0,1]$, the function
    \begin{align}
        f_{2}(x) = \left(1-\sqrt{x}\right)^{2-\alpha} + \left(1+\sqrt{x}\right)^{2-\alpha}
    \end{align}
    is concave for $\alpha \in (1,2)$.
\end{prop}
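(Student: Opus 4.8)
The plan is to avoid any direct second-derivative computation by reducing $f_2$ to the already-resolved function $f_3$ through a change of the Rényi parameter, exactly as foreshadowed in the monotonicity argument above. Recall that $f_3(x) = (1-\sqrt{x})^{1/\alpha} + (1+\sqrt{x})^{1/\alpha}$, which Proposition~\ref{Prop: ConcavF3} shows is concave on $[0,1]$ for every $\alpha \in (1,\infty)$. The key observation is that $f_2$ has exactly the same functional form, only with the exponent $2-\alpha$ in place of $1/\alpha$, so the whole question is one of matching parameters.

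First I would introduce the substitution $\tilde{\alpha} \coloneqq \frac{1}{2-\alpha}$, so that $1/\tilde{\alpha} = 2-\alpha$ and hence $f_2(x) = (1-\sqrt{x})^{1/\tilde{\alpha}} + (1+\sqrt{x})^{1/\tilde{\alpha}}$ is literally $f_3$ evaluated with Rényi parameter $\tilde{\alpha}$. The only thing left to verify is that $\tilde{\alpha}$ lands in the range where Proposition~\ref{Prop: ConcavF3} applies: for $\alpha \in (1,2)$ we have $2-\alpha \in (0,1)$, whence $\tilde{\alpha} = \frac{1}{2-\alpha} \in (1,\infty)$. Concavity of $f_2$ on $[0,1]$ then follows immediately from concavity of $f_3$ at parameter $\tilde{\alpha}$, with no further calculation required.

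Should one instead prefer a self-contained computation paralleling the proofs for $f_3$ and $f_1$, the route would be to compute $f_2''(x)$ explicitly, factor out a manifestly nonnegative prefactor of the form $\frac{1}{4\alpha x^{3/2}}$, and reduce the sign question to comparing the two monomial terms $(1\pm\sqrt{x})^{-\alpha}$ against a Taylor remainder, using that the relevant third derivative has a definite sign on $[0,1)$, just as in Eqs.~\eqref{Eq:TaylorPetzup1}--\eqref{Eq:TaylorPetzup2}. In that approach the main (and only real) obstacle is bookkeeping the sign of the Taylor remainder in the exponent range $2-\alpha \in (0,1)$; but since the parameter-substitution argument disposes of this for free, I would present the reduction as the primary proof and mention the direct estimate only as an independent check. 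A short continuity remark then extends the conclusion from the open interval $(0,1)$ to the closed interval $[0,1]$, exactly as in the preceding propositions.
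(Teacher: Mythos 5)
Your proposal is correct, and its primary argument is precisely the route the paper itself flags in the opening sentence of its proof (``one can either prove concavity by noting that this statement is equivalent to Proposition~\ref{Prop: ConcavF3} after a suitable modification of the R\'{e}nyi parameter $\alpha$'') but declines to spell out; the paper instead details the direct computation, differentiating $f_2$ twice and controlling the sign of $f_2''(x)$ via Taylor bounds on $(1\pm y)^{\alpha}$ with third-derivative remainder estimates (and it gives yet another self-contained proof in the ``Alternative Concavity Proofs'' appendix via monotonicity of a ratio). Your substitution $\tilde{\alpha} = 1/(2-\alpha)$ is exactly right: $f_2$ at parameter $\alpha$ is literally $f_3$ at parameter $\tilde{\alpha}$, and the range check $\alpha \in (1,2) \Rightarrow 2-\alpha \in (0,1) \Rightarrow \tilde{\alpha} \in (1,\infty)$ places it squarely within the hypotheses of Proposition~\ref{Prop: ConcavF3}, so concavity transfers with no further calculation (the constant rescalings relating the normalized and unnormalized forms of $f_2, f_3$ are immaterial to concavity, and there is no circularity since the earlier proposition is proved independently for all $\alpha \in (1,\infty)$). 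What your reduction buys is the elimination of the sign bookkeeping in the Taylor remainders --- the only delicate part of the paper's direct proof --- at the cost of making the result dependent on the earlier proposition rather than self-contained; your concluding continuity remark extending from $(0,1)$ to $[0,1]$ is not even needed in this route, since Proposition~\ref{Prop: ConcavF3} already asserts concavity on the closed interval. There is no gap in your argument.
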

\begin{proof}
For this function, one can either prove concavity by noting that this statement is equivalent to Proposition~\ref{Prop: ConcavF3} after a suitable modification of the {\Renyi} parameter $\alpha$, or alternatively one can prove it as follows. Due to continuity arguments, it is sufficient to prove it for all $x \in (0,1)$.
The second derivative is given by
\begin{align}
    f_{2}^{\prime \prime}(x) =  \frac{2-\alpha}{4x^{\frac{3}{2}}} \left[\left(1-\alpha \sqrt{x} \right) \left( 1-\sqrt{x}\right)^{-\alpha}   - \left(1+\alpha \sqrt{x} \right) \left( 1+\sqrt{x}\right)^{-\alpha}\right] \, .
\end{align}   
Note that, for any $\alpha \in (1,2)$ and $y \in \left[ 0,1\right)$, 
   \begin{align}
          (1-y)^{\alpha} &\geq 1 - \alpha y + \frac{\alpha \left(\alpha +1 \right)}{2}y^2   \label{Eq: Petzdown2ineq} \\
       (1+y)^{\alpha} &\leq 1 + \alpha y + \frac{\alpha \left(\alpha +1 \right)}{2}y^2 \label{Eq: Petzdown1ineq} \, .
   \end{align}
   The right-hand-side are simply Taylor approximations of the left-hand-side,  and the bound holds because the errors arising from such Taylor approximations are proportional to the third derivative at some $\xi \in [0,y]$, see~\cite[Eq.~(1.3.2)]{Hildebrand56}, and
   \begin{align}
   \frac{d^3}{dy^3} (1-y)^{\alpha} &=  \alpha \left( \alpha-1 \right) \left( 2-\alpha\right)\left(1+y\right)^{\alpha -3} \\
       \frac{d^3}{dy^3} (1+y)^{\alpha} &= -\alpha \left( \alpha-1 \right) \left( 2-\alpha\right)\left(1-y\right)^{\alpha -3} 
       \, ,
   \end{align}
   which are positive and negative, respectively, for all $y \in [0,1)$. Setting $y = \sqrt{x}$ and using the inequalities from Eqs.~\eqref{Eq: Petzdown1ineq}--\eqref{Eq: Petzdown2ineq}, then yields 
   \begin{align}
        f_{2}^{\prime \prime}(x) &\leq \frac{\left(2-\alpha\right)\alpha \left(\alpha +1 \right)}{8\sqrt{x}\left(1-x \right)^{\alpha}}  \left[  \left( 1-\sqrt{x}\right)^{\alpha} - \left( 1+\sqrt{x}\right)^{\alpha}\right] \\
        &\leq 0
   \end{align}
This concludes the proof that $f_{2}(x)$ is concave for all $\alpha  \in \left(1,2\right)$.
\end{proof}
\subsection{Tightness of Rate Bounds} \label{Sec: TightnessBounds}

 It can be shown that all inequalities are tight by considering the following attack which saturates the bound. First, it is easy to verify that measuring that state
 \begin{align}  \sqrt{P_+}\ket{\phi^+}_{Q_AQ_B}\ket{0}_{E}+\sqrt{P_-}\ket{\phi^-}_{Q_AQ_B}\ket{1}_{E} \, ,  \label{Eq: OptAttackEve}
		\end{align}
 where $P_\pm = \tfrac{1}{2}\left(1 \pm g_{\score}\right)$ and $g_\score = \sqrt{\tfrac{\score^2_\beta }{4} - \beta^2}$, in the observables $A_0 = \sigma_{z}$, $A_1 = \sigma_{x}$, $B_0 = \frac{\beta\sigma_{z} + g_{\score}\sigma_{x}}{\sqrt{\beta^2+g_{\score}^2}}$, $B_1 = \frac{\beta \sigma_{z} - g_{\score}\sigma_{x}}{\sqrt{\beta^2+g_{\score}^2}}$ achieves a score of $\score_\beta$.
 Moreover, if Alice measures this state
 in the observable $A_0$, one finds that the post-measurement state is given by
\begin{align}\label{def:attack_tight}
    \rho_{AE} = \frac{1}{2} \ketbra{0}{0}_{A} \otimes \ketbra{\psi_0}{\psi_0}  + \frac{1}{2} \ketbra{1}{1}_{A} \otimes \ketbra{\psi_1}{\psi_1}  \, ,
\end{align}
 where
\begin{align}
    \ket{\psi_0} & = \sqrt{P_+} \ket{0} + \sqrt{P_-} \ket{1} \\
   \ket{\psi_1} & = \sqrt{P_+} \ket{0} - \sqrt{P_-} \ket{1} \, . 
\end{align}
In the basis
\begin{align}
    \ket{0^\prime} &\coloneqq  \ket{\psi_0} = \sqrt{P_+} \ket{0} + \sqrt{P_-} \ket{1}\\
    \ket{1^\prime} &\coloneqq  \sqrt{P_-} \ket{0} - \sqrt{P_+} \ket{1} \, ,
\end{align}
one finds that
\begin{align}
   \ket{\psi_0} &= \ket{0^\prime} \\
 \ket{\psi_1} &= |\braket{\psi_0 | \psi_1}|\ket{0^\prime} + \sqrt{1-|\braket{\psi_0 | \psi_1}|^2}\ket{1^\prime} \, ,
\end{align}

where
\begin{align}
    |\braket{\psi_0 | \psi_1}| &= P_+ -P_- \\
    &=g_\score \, .
\end{align}
This post-measurement state thus has the same form as those from Eq.~\eqref{Eq: SingleISigma}. However, it is precisely on these states that we prove explicit bounds on $\mathbb{H}\left(A|E \right)$. Whenever we provide an exact expression for $\mathbb{H}\left(A|E \right)$ for these states, the same proof must also hold for Eq.~\eqref{Eq: OptAttackEve}. The only key difference is that the score of Eq.~\eqref{Eq: OptAttackEve} is already $\score_{\beta}$. We note that although one can technically calculate $\widetilde{H}^{\uparrow}_\alpha (A|E)$ for such states, we use a slight variation of this argument to prove tightness in Appendix~\ref{Uparrow Sandwiched {\Renyi} Entropy}.

The proofs for deriving tight rate functions require that $\mathbb{Q}$ satisfies the properties mentioned in Lemma~\ref{Lem: QProperties} and that the bounds we derive for $\mathbb{Q}$ are monotonically decreasing in $g_\score$. These two conditions are necessary to reduce the analysis to states of the form of Eq.~\eqref{Eq: SingleISigma}. For $\alpha>1$, concavity is necessary to derive an expression which no longer depends on the index $I$. Whenever these properties hold and one derives an exact expression for $\mathbb{H}\left(A|E \right)$ for states of the form of Eq.~\eqref{Eq: SingleISigma}, then the bounds on the corresponding rate function must be tight. We additionally note that if the bound on $\mathbb{Q}$ were not concave for $\alpha >1$, but satisfied the properties of Lemma~\ref{Lem: QProperties} and monotonicity, then one can still achieve tight bounds; however they would depend on the bound's concave envelope. We discuss and explicitly use this property in Appendix~\ref{app:noisypreprocessing}, when incorporating noisy preprocessing into the analysis.

\subsection{Discontinuity Behavior for Edge Cases}\label{app:discont}
In this subsection, we justify the claim that the Theorem~\ref{thm:allcasesnoNPP} bounds can be extended to the right-endpoints of the $\alpha$ ranges by taking the respective limits from below. As briefly mentioned previously, some of the resulting formulas are discontinuous with respect to $\score_{\beta}$ (for $|\beta|\geq1$).
Specifically, $f_{\widetilde{H}^{\downarrow}_\infty}(\score_{\beta})$ and $f_{\bar{H}^{\downarrow}_2}(\score_{\beta})$ have a discontinuity at $\score_{\beta}=2\sqrt{1+\beta^2}$, taking the value $1$ at that point and $0$ elsewhere. We further note that these discontinuities are a genuine property of $\widetilde{H}^{\downarrow}_\infty(A\vert E)_{\rho}$ and $\bar{H}^{\downarrow}_2(A\vert E)_{\rho}$, respectively, and not a result of the methods used to obtain these bounds --- we shall show this by constructing a family of states saturating these discontinuous bounds. 

To prove the claim, we simply note that for all of the {\Renyi} entropies $\mathbb{H}_{\alpha}$ in Definition~\ref{def:condent}, for any $\alpha^\star \in (1,\infty]$ we have $\mathbb{H}_{\alpha^\star} = \lim_{\alpha \nearrow \alpha^\star} \mathbb{H}_{\alpha} = \inf_{\alpha \in (1,\alpha^\star)} \mathbb{H}_{\alpha}$ because they are monotone decreasing with respect to $\alpha$. Given this, we can freely interchange the infimum over $\alpha$ with the infimum over quantum strategies in the definition of the rate functions, from which we can conclude the desired claim $f_{\mathbb{H}_{\alpha^\star}} (\score_\beta) = \inf_{\alpha \in (1,\alpha^\star)} f_{\mathbb{H}_{\alpha}} (\score_\beta)  = \lim_{\alpha \nearrow \alpha^\star} f_{\mathbb{H}_{\alpha}} (\score_\beta) $ (the second equality holds because $f_{\mathbb{H}_{\alpha}}$ immediately inherits the monotonicity in $\alpha$ from $\mathbb{H}_{\alpha}$).

In fact, for the cases $f_{\widetilde{H}^{\downarrow}_\infty}$ and $f_{\bar{H}^{\downarrow}_2}$, we can instead prove this via a more direct analysis of the optimal attack for each $\score_\beta$, which also shows that the discontinuities are a genuine feature of the bounds. Specifically, to calculate $\widetilde{H}^{\downarrow}_\alpha(A\vert E)_{\rho}$ for the optimal attack from Eq.~\eqref{def:attack_tight}, we need to first calculate $\rho_{E}^{\frac{1-\alpha}{2\alpha}}\rho_{AE}\rho_{E}^{\frac{1-\alpha}{2\alpha}}$. From Eq.~\eqref{Eq: SandwichedInterimExpression}, we see that as $\alpha\rightarrow\infty$ for $\score_{\beta}<2\sqrt{1+\beta^2}$,
\begin{equation}
    \rho_{E}^{\frac{1-\alpha}{2\alpha}}\rho_{AE}\rho_{E}^{\frac{1-\alpha}{2\alpha}}\rightarrow \ketbra{0}{0}_{A} \otimes \ketbra{w_1}{w_1}_{E} + \ketbra{1}{1}_{A} \otimes \ketbra{w_1^\prime}{w_1^\prime}_{E},
\end{equation}
where $\{\ket{w_1},\ket{w_1^\prime} \}$ are normalized vectors. Thus,
\begin{align}
\lim_{\alpha \to \infty}\widetilde{H}^{\downarrow}_\alpha(A|E)_{\rho} &=  \lim_{\alpha \to \infty}\frac{1}{1-\alpha} \cdot \lim_{\alpha \to \infty}\log\left[ \operatorname{Tr} \left[\left(\sigma_{E}^{\frac{1-\alpha}{2\alpha}}\sigma_{AE}\sigma_{E}^{\frac{1-\alpha}{2\alpha}}\right)^\alpha \right] \right] \\
 & = \lim_{\alpha \to \infty}\frac{1}{1-\alpha}\log 2 \\
   &=0 \, ,
\end{align}
where the first lines follows from the fact that both limits are finite and well-defined.
Hence, we get that $\widetilde{H}^{\downarrow}_\infty(A\vert E)_{\rho}=0$ is indeed an achievable bound. Conversely, for the maximal score, one can lower bound $\widetilde{H}^{\downarrow}_\infty(A\vert X=0, IE)_{\rho}$ as follows.  Recall that
\begin{align}
\bar{H}^{\downarrow}_\alpha(A|X=0,IE)_{\rho} &\geq \frac{1}{1-\alpha}\log \left[\sum_{i} \operatorname{Pr} \left[ I=i\right] 2^{\left(1-\alpha\right)\bar{H}^{\downarrow}_\alpha(A|I=i,E)_{\sigma^i}}\right]
\end{align}
were the states $\sigma_{AE}^{i}$ are defined in Eq.~\eqref{Eq: SingleISigma}. Each $\sigma_{AE}^{i}$ has to achieve the maximum score and therefore must be of the form
\begin{align} \label{Eq: MaxScoreSigma}
   \sigma_{AE}^{i} =  \frac{1}{2} \ketbra{0}{0}_{A} \otimes \ketbra{0}{0}_{E} +  \frac{1}{2} \ketbra{1}{1}_{A} \otimes \ketbra{0}{0}_{E} \, .
\end{align}
Using
\begin{equation}
   \left( \sigma_{E}^{i} \right)^{ \frac{1-\alpha}{2\alpha}}\sigma_{AE}^{i}\left( \sigma_{E}^{i} \right)^{\frac{1-\alpha}{2\alpha}}= \frac{1}{2}\ketbra{0}{0}_{A} \otimes \ketbra{0}{0}_{E} + \frac{1}{2}\ketbra{1}{1}_{A} \otimes \ketbra{0}{0}_{E} \, ,
\end{equation}
we get that 
\begin{align}
\lim_{\alpha \to \infty}\widetilde{H}^{\downarrow}_\alpha(A|E)_{\sigma^i} &=  \lim_{\alpha \to \infty}\frac{1}{1-\alpha}\log\left[ \operatorname{Tr} \left[\left(\left( \sigma_{E}^{i} \right)^{ \frac{1-\alpha}{2\alpha}}\sigma_{AE}^{i}\left( \sigma_{E}^{i} \right)^{\frac{1-\alpha}{2\alpha}}\right)^\alpha \right] \right] \\
 & = \lim_{\alpha \to \infty}\frac{1}{1-\alpha}\log 2^{1-\alpha} \\
 & = \lim_{\alpha \to \infty}1 \\
   &=1 \, .
\end{align}
It follows from this that $\widetilde{H}^{\downarrow}_\infty(A\vert X=0, IE)_{\rho}=1$ as well.

To calculate $\bar{H}^{\downarrow}_\alpha(A\vert E)_{\rho}$ for the optimal attack from Eq.~\eqref{def:attack_tight}, we need to calculate $\rho_{E}^{\frac{1-\alpha}{2}}\rho_{AE}^{\alpha}\rho_{E}^{\frac{1-\alpha}{2}}$. From Eq.~\eqref{Eq: PetzInterimExpression}, we see that as $\alpha \nearrow 2$ for $\score_{\beta}<2\sqrt{1+\beta^2}$,
\begin{equation}
    \rho_{E}^{\frac{1-\alpha}{2}}\rho_{AE}^{\alpha}\rho_{E}^{\frac{1-\alpha}{2}}\rightarrow \frac{1}{4}\left( \ketbra{0}{0} +   \ketbra{1}{1}\right) \otimes \ketbra{v_1}{v_1} + \frac{1}{4}\left( \ketbra{0}{0} +   \ketbra{1}{1}\right) \otimes \ketbra{v_2}{v_2} \, .
\end{equation}
where $\{\ket{v_1},\ket{w_2} \}$ are normalized vectors. Thus,
\begin{align}
\lim_{\alpha \nearrow 2}\bar{H}^{\downarrow}_\alpha(A|E)_{\rho} &=  \lim_{\alpha \nearrow 2}\frac{1}{1-\alpha} \cdot \lim_{\alpha \nearrow 2}\log\left[ \operatorname{Tr} \left[\rho_{E}^{\frac{1-\alpha}{2}}\rho_{AE}^{\alpha}\rho_{E}^{\frac{1-\alpha}{2}} \right] \right] \\
 & = \lim_{\alpha \nearrow 2}\frac{1}{1-\alpha}\log 1 \\
 & = 0 \, .
\end{align}
Conversely, if we witness a maximal score, then we again simply need to consider the state from Eq.~\eqref{Eq: MaxScoreSigma}. For this state, we find that
\begin{equation}
     \left( \sigma_{E}^{i} \right)^{\frac{1-\alpha}{2}}\left( \sigma_{AE}^{i} \right)^{\alpha}\left( \sigma_{E}^{i} \right)^{\frac{1-\alpha}{2}}=\frac{1}{2^{\alpha}}\ketbra{0}{0}_{A} \otimes \ketbra{0}{0}_{E} + \frac{1}{2^{\alpha}}\ketbra{1}{1}_{A} \otimes \ketbra{0}{0}_{E} \, ,
\end{equation}
and
\begin{align}
\lim_{\alpha \nearrow 2}\bar{H}^{\downarrow}_\alpha(A|E)_{\sigma^i} &=  \lim_{\alpha \nearrow 2}\frac{1}{1-\alpha}\log\left[ \operatorname{Tr} \left[\left( \sigma_{E}^{i} \right)^{\frac{1-\alpha}{2}}\left( \sigma_{AE}^{i} \right)^{\alpha}\left( \sigma_{E}^{i} \right)^{\frac{1-\alpha}{2}} \right] \right] \\
 & = \lim_{\alpha \nearrow 2}\frac{1}{1-\alpha}\log 2^{1-\alpha} \\
 & = 1
\end{align}
It again follows that $\bar{H}^{\downarrow}_\infty(A\vert X=0, IE)_{\rho}=1$. 

From the calculations presented above, it is evident that the discontinuities of  $f_{\widetilde{H}^{\downarrow}_\infty}(\score_{\beta})$ and $f_{\bar{H}^{\downarrow}_2}(\score_{\beta})$ at ${\score_{\beta}=2\sqrt{1+\beta^2}}$ 
arise due to the edge-case behaviors of $\widetilde{H}^{\downarrow}_\infty(A\vert E)_{\rho}$ and $\bar{H}^{\downarrow}_2(A\vert E)_{\rho}$. Hence, our bounds are tight even at these discontinuities. Such behavior can also be found in the Belavkin-Staszewski conditional entropy which is discontinuous on states that are not full-rank~\cite{bluhm2023continuity}. One can then show that this implies it has the same rate function as $\widetilde{H}^{\downarrow}_\infty$ and $\bar{H}^{\downarrow}_2$.
Finally, we note that the functions $f_{\widetilde{H}^{\uparrow}_\alpha}(\score_{\beta})$ and $f_{\bar{H}^{\uparrow}_\alpha}(\score_{\beta})$ have no discontinuities for $\alpha >1$ (within their respective domains of validity, i.e.~$\alpha\leq2$ for the latter).

\subsection{Alternative Concavity Proofs} \label{app:Concavity2}

\begin{prop}\label{thm:f_pr_up_int}
    For all $x \in [0,1]$, the function
    \begin{align}\label{def:f_pr_up_int}
       f_{3}(x) = \left(1-\sqrt{x}\right)^{\frac{1}{\alpha}}+\left(1+\sqrt{x}\right)^{\frac{1}{\alpha}}
    \end{align}
    is concave for $\alpha \in (1,2)$.
\end{prop}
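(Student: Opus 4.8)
The plan is to give a proof via the binomial power series, replacing the Taylor-remainder estimates used in Proposition~\ref{Prop: ConcavF3} with a transparent sign computation on binomial coefficients. Writing $p = 1/\alpha$, the only feature of the exponent I will use is that $\alpha \in (1,2)$ forces $p \in (\tfrac12,1) \subset (0,1)$. First I would expand each summand using the generalized binomial series, valid for $\sqrt{x} \in [0,1)$,
\begin{align}
(1\pm\sqrt{x})^{p} = \sum_{n=0}^{\infty}\binom{p}{n}(\pm\sqrt{x})^{n}\, ,
\end{align}
and add the two expansions. All odd powers of $\sqrt{x}$ cancel and the even powers double, leaving a genuine power series in $x$:
\begin{align}
f_{3}(x) = 2\sum_{k=0}^{\infty}\binom{p}{2k}\,x^{k}\, .
\end{align}

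The key step is then a sign analysis of the coefficients. Writing $\binom{p}{2k} = \tfrac{1}{(2k)!}\prod_{j=0}^{2k-1}(p-j)$, the factor $p-0 = p$ is positive, while each of the remaining $2k-1$ factors $p-1,\dots,p-(2k-1)$ is negative because $p < 1$. Hence the product carries the sign $(-1)^{2k-1} = -1$, so $\binom{p}{2k} < 0$ for every $k \geq 1$. Differentiating the power series twice term by term (legitimate on the open interval, since the radius of convergence is $1$) gives
\begin{align}
f_{3}''(x) = 2\sum_{k=2}^{\infty}\binom{p}{2k}\,k(k-1)\,x^{k-2}\, ,
\end{align}
a series all of whose terms are nonpositive for $x \in (0,1)$; therefore $f_{3}''(x) \leq 0$ there, which is concavity on the open interval.

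Finally I would upgrade concavity on $(0,1)$ to concavity on the closed interval $[0,1]$ by invoking continuity of $f_{3}$ at the endpoints: at $x=1$ one has $(1-\sqrt{x})^{p} \to 0$ since $p>0$, and at $x=0$ the value is $2$. A continuous function that is concave on $(0,1)$ is then concave on $[0,1]$, which is the claim.

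The main obstacle I anticipate is a matter of care rather than depth: justifying the term-by-term second differentiation up to, but not including, the endpoint $x=1$, where $(1-\sqrt{x})^{p}$ has unbounded higher derivatives and the differentiated series need not converge. Restricting the second-derivative computation to $(0,1)$ and closing up by continuity sidesteps this cleanly. I also note that this argument works verbatim for any $p \in (0,1)$, i.e.\ any $\alpha > 1$, so the stated range $\alpha \in (1,2)$ is merely the one needed here, and the power-series route recovers Proposition~\ref{Prop: ConcavF3} as well.
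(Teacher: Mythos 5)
Your proposal is correct, and it takes a genuinely different route from the paper's. The paper proves this proposition by computing $f_3''$ in closed form, locating the root $x'$ of $1-(2-1/\alpha)\sqrt{x}$, and reducing nonpositivity of $f_3''$ on $(0,x')$ to showing that the ratio $\left(\tfrac{1+(2-1/\alpha)\sqrt{x}}{1-(2-1/\alpha)\sqrt{x}}\right)\left(\tfrac{1-\sqrt{x}}{1+\sqrt{x}}\right)^{2-\frac{1}{\alpha}}$ exceeds $1$, via a monotonicity computation in $t=\sqrt{x}$; the paper's other proof of the same concavity fact (Proposition~\ref{Prop: ConcavF3}) instead uses second-order Taylor bounds with third-derivative remainder control. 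You bypass both: after symmetrizing the binomial series you get $f_3(x)=2\sum_{k\geq 0}\binom{p}{2k}x^k$ with $p=1/\alpha$, and the single observation that $\binom{p}{2k}<0$ for all $k\geq 1$ when $p\in(0,1)$ (one positive factor $p$ and $2k-1$ negative factors, which you compute correctly) makes every term of the twice-differentiated series nonpositive on $(0,1)$; term-by-term differentiation is legitimate inside the radius of convergence, which is indeed $1$ since $\lvert\binom{p}{2k}\rvert$ decays only polynomially in $k$, and the extension to $[0,1]$ by continuity is standard. What your approach buys: it is shorter and more transparent than either of the paper's arguments, it proves the claim for all $\alpha\in(1,\infty)$ rather than just $(1,2)$ (thereby subsuming Proposition~\ref{Prop: ConcavF3}, as you note), and as a free by-product the same expansion gives $f_3'(x)=2\sum_{k\geq 1}k\binom{p}{2k}x^{k-1}\leq 0$, i.e., the monotone decrease of $f_3$ that the paper establishes in a separate proposition. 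What the paper's route buys in exchange is a template of elementary derivative-and-ratio manipulations that the authors reuse nearly verbatim for the companion function $f_2$, where the analogous series argument would require rerunning the sign analysis with the exponent $2-\alpha$.
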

\begin{proof}
    Due to continuity arguments, it is sufficient to consider $x \in (0,1)$. To prove that Eq.~\eqref{def:f_pr_up_int} is concave, we need to calculate its second derivative. The second derivative of Eq.~\eqref{def:f_pr_up_int} is as follows:
	\begin{equation}\label{eqn:f_pr_up_sec_der}
		\frac{1}{4\alpha x^{\frac{3}{2}}} \left(-\left(1+\sqrt{x}\right)^{\frac{1}{\alpha}-2} \left[1+\left( 2-\frac{1}{\alpha}\right)\sqrt{x}\right]+\left(1-\sqrt{x}\right)^{\frac{1}{\alpha}-2}\left[1-\left(2-\frac{1}{\alpha}\right)\sqrt{x}\right]\right) \, .
	\end{equation}
	To show that Eq.~\eqref{def:f_pr_up_int} is concave, we need to show that Eq.~\eqref{eqn:f_pr_up_sec_der} is non-positive. The first term in the above expression is negative. The second term is negative when $x>x'$, where $x'$ is the root of ${d(x) = 1-\left(2-1/\alpha\right)\sqrt{x}}$ between $x=0$ and $x=1$. We can see that the $x'$ exists by observing that $d(x)$ switches signs and is a decreasing between $x=0$ and $x=1$. Let us consider the case when $\alpha>1$ and $x<x'$. For concavity of Eq.~\eqref{def:f_pr_up_int},
	\begin{equation}
		\frac{1}{4\alpha x^{\frac{3}{2}}} \left(-\left(1+\sqrt{x}\right)^{\frac{1}{\alpha}-2} \left[1+\left( 2-\frac{1}{\alpha}\right)\sqrt{x}\right]+\left(1-\sqrt{x}\right)^{\frac{1}{\alpha}-2}\left[1-\left(2-\frac{1}{\alpha}\right)\sqrt{x}\right]\right)\leq 0
	\end{equation}
	\begin{equation}
		\implies\left(1+\sqrt{x}\right)^{\frac{1}{\alpha}-2} \left[1+\left( 2-\frac{1}{\alpha}\right)\sqrt{x}\right]\geq \left(1-\sqrt{x}\right)^{\frac{1}{\alpha}-2}\left[1-\left(2-\frac{1}{\alpha}\right)\sqrt{x}\right]
	\end{equation}
	\begin{equation}
		\implies \left(\frac{1+\left(2-1/\alpha\right)\sqrt{x}}{1-\left(2-1/\alpha\right)\sqrt{x}}\right)\left(\frac{1-\sqrt{x}}{1+\sqrt{x}}\right)^{2-\frac{1}{\alpha}}\geq 1 \, .
	\end{equation}
	Note that as $x\rightarrow 0$, the left-hand-side tends to $1$. Also when $x\rightarrow x'$, the left-hand-side tends to $\infty$. The above inequality holds for $x<x'$ if, additionally, 
\begin{equation}\label{eqn:need_2b_monotone_1}
			\left(\frac{1+\left(2-1/\alpha\right)\sqrt{x}}{1-\left(2-1/\alpha\right)\sqrt{x}}\right)\left(\frac{1-\sqrt{x}}{1+\sqrt{x}}\right)^{2-\frac{1}{\alpha}}
		\end{equation}
		is a monotone function. To show this, let $t \coloneq \sqrt{x}$, where $t\in[0,1)$. Then 
		\begin{equation}
			\left(\frac{1+\left(2-1/\alpha\right)\sqrt{x}}{1-\left(2-1/\alpha\right)\sqrt{x}}\right)\left(\frac{1-\sqrt{x}}{1+\sqrt{x}}\right)^{2-\frac{1}{\alpha}}=\left(\frac{1+\left(2-1/\alpha\right)t}{1-\left(2-1/\alpha\right)t}\right)\left(\frac{1-t}{1+t}\right)^{2-\frac{1}{\alpha}} \, ,
		\end{equation}
		and the first derivative of Eq.~\eqref{eqn:need_2b_monotone_1} is as follows:
		\begin{align}
			&\frac{d}{dt}\left[\left(\frac{1+\left(2-1/\alpha\right)t}{1-\left(2-1/\alpha\right)t}\right)\left(\frac{1-t}{1+t}\right)^{2-\frac{1}{\alpha}}\right]\\
			&=\frac{[\left(2-1/\alpha\right)(1-t)^{2-\frac{1}{\alpha}}-(1+\left(2-1/\alpha\right)t)(2-1/\alpha)(1-t)^{1-\frac{1}{\alpha}}](1-\left(2-1/\alpha\right)t)(1+t)^{2-\frac{1}{\alpha}})}{(1-\left(2-1/\alpha\right)t)^2(1+t)^{4-\frac{2}{\alpha}}} \notag \\
			&\qquad+\frac{[\left(2-1/\alpha\right)(1+t)^{2-\frac{1}{\alpha}}-(1-\left(2-1/\alpha\right)t)(2-1/\alpha)(1+t)^{1-\frac{1}{\alpha}}](1+\left(2-1/\alpha\right)t)(1-t)^{2-\frac{1}{\alpha}})}{(1-\left(2-1/\alpha\right)t)^2(1+t)^{4-\frac{2}{\alpha}}}\\
            &=\frac{2\left(2-1/\alpha\right)(1-t^2)^{2-\frac{1}{\alpha}}-2(1-\left(2-1/\alpha\right)^2t^2)(2-1/\alpha)(1-t^2)^{1-\frac{1}{\alpha}}}{(1-\left(2-1/\alpha\right)t)^2(1+t)^{4-\frac{2}{\alpha}}}\\
            &=\left(4-\frac{2}{\alpha}\right)(1-t^2)^{1-\frac{1}{\alpha}}\frac{((2-1/\alpha)^2-1)t^2}{(1-\left(2-1/\alpha\right)t)^2(1+t)^{4-\frac{2}{\alpha}}}\geq 0 \, ,
		\end{align}
        when $t\in[0,\sqrt{x'})$. Hence, Eq.~\eqref{eqn:need_2b_monotone_1} is monotone when $x \in [0,x')$. Therefore, we have shown that Eq.~\eqref{eqn:f_pr_up_sec_der} is non-positive and thus Eq.~\eqref{def:f_pr_up_int} is concave.
\end{proof}

\begin{prop}
    For all $x \in [0,1]$, the function
    \begin{equation}\label{def:f_pr_down_int}
		f_{2} (x)=\left(1-\sqrt{x}\right)^{2-\alpha} + \left(1+\sqrt{x}\right)^{2-\alpha}
	\end{equation}
    is concave for $\alpha \in (1,2)$.
\end{prop}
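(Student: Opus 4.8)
The plan is to mirror the structure of the alternative argument for $f_3$ given in Proposition~\ref{thm:f_pr_up_int}, exploiting the fact that $f_2$ and $f_3$ have the same functional shape $(1-\sqrt x)^{c}+(1+\sqrt x)^{c}$, now with $c=2-\alpha$ in place of $c=1/\alpha$. One could in principle simply invoke Proposition~\ref{thm:f_pr_up_int} (or Proposition~\ref{Prop: ConcavF3}) after the parameter substitution $1/\tilde\alpha=2-\alpha$, but since that alternative statement is proved only on $(1,2)$ while the substitution sends $\alpha\in(1,2)$ to $\tilde\alpha\in(1,\infty)$, I would instead give a direct proof valid on the whole range. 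By continuity it suffices to treat $x\in(0,1)$. I would start from the second derivative, which (as in the main-text computation) is
\[
 f_2''(x)=\frac{2-\alpha}{4x^{3/2}}\Big[(1-\alpha\sqrt x)(1-\sqrt x)^{-\alpha}-(1+\alpha\sqrt x)(1+\sqrt x)^{-\alpha}\Big].
\]
Since $\alpha\in(1,2)$ the prefactor $(2-\alpha)/(4x^{3/2})$ is strictly positive, so the sign of $f_2''$ is governed entirely by the bracketed term, and concavity reduces to showing that this bracket is nonpositive.

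Next I would split on the sign of $1-\alpha\sqrt x$. When $\sqrt x\ge 1/\alpha$ the first summand is $\le 0$ while the second is strictly negative, so the bracket is negative and there is nothing more to do. The substantive case is $\sqrt x<1/\alpha$, where both $1-\alpha\sqrt x$ and $1+\alpha\sqrt x$ are positive; rearranging the desired inequality $(1-\alpha\sqrt x)(1-\sqrt x)^{-\alpha}\le (1+\alpha\sqrt x)(1+\sqrt x)^{-\alpha}$ into a ratio, this becomes equivalent to
\[
 P(t)\coloneqq\left(\frac{1+\alpha t}{1-\alpha t}\right)\left(\frac{1-t}{1+t}\right)^{\alpha}\ge 1,\qquad t\coloneqq\sqrt x\in[0,1/\alpha),
\]
which is exactly the analogue of the inequality appearing in Proposition~\ref{thm:f_pr_up_int}, now with the common coefficient/exponent $c=\alpha$.

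Finally I would establish $P(t)\ge1$ by monotonicity. Since $P(0)=1$, it suffices that $P$ be nondecreasing on $[0,1/\alpha)$, and this is cleanest via the logarithmic derivative: a short computation gives $P'(t)/P(t)=2\alpha\,(\alpha^2-1)t^2/\big[(1-\alpha^2t^2)(1-t^2)\big]$, where pairing $\tfrac{1}{1+\alpha t}+\tfrac{1}{1-\alpha t}=\tfrac{2}{1-\alpha^2 t^2}$ against $\tfrac{1}{1-t}+\tfrac{1}{1+t}=\tfrac{2}{1-t^2}$ removes all the cross terms. On $t\in(0,1/\alpha)$ the denominator is positive and, because $\alpha>1$ forces $\alpha^2-1>0$, the whole expression is nonnegative; hence $P$ is increasing and $P(t)\ge P(0)=1$, which closes the nontrivial case. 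I expect the main obstacle to be the bookkeeping around the sign change of $1-\alpha\sqrt x$ at $\sqrt x=1/\alpha$ (which forces the case split and blocks a single global manipulation), together with getting the partial-fraction cancellation in $P'/P$ right; note that the two hypotheses enter in complementary ways, with $\alpha<2$ guaranteeing the positive prefactor of $f_2''$ and $\alpha>1$ guaranteeing $\alpha^2-1>0$ so that $P$ is increasing.
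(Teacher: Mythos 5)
Your proposal is correct and follows essentially the same route as the paper's own proof of this proposition: compute $f_2''$, split on the sign of $1-\alpha\sqrt{x}$ at the root of $1-\alpha\sqrt{x}$, reduce the nontrivial case to showing the ratio $\bigl(\tfrac{1+\alpha t}{1-\alpha t}\bigr)\bigl(\tfrac{1-t}{1+t}\bigr)^{\alpha}\geq 1$, and establish this via monotonicity from the value $1$ at $t=0$. The only difference is cosmetic: you compute $P'/P$ by logarithmic differentiation (a slightly cleaner bookkeeping), whereas the paper applies the quotient rule directly, arriving at the same sign conclusion $2\alpha(\alpha^2-1)t^2\geq 0$.
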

\begin{proof}
    To prove that Eq.~\eqref{def:f_pr_down_int} is concave, we need to calculate its second derivative. Due to continuity arguments, it is sufficient to consider $x \in [0,1)$.
    The second derivative is given by
    \begin{align}\label{eqn:f_pr_down_sec_der}
        \frac{2-\alpha}{4x^{\frac{3}{2}}} \left[- \left(1+\alpha \sqrt{x} \right) \left( 1+\sqrt{x}\right)^{-\alpha}+\left(1-\alpha \sqrt{x} \right) \left( 1-\sqrt{x}\right)^{-\alpha}\right] \, .
    \end{align}   
	To show that Eq.~\eqref{def:f_pr_down_int} is concave, we need to show that Eq.~\eqref{eqn:f_pr_down_sec_der} is non-positive. The first term in the above expression is negative. The second term is negative when $x>x'$, where $x'$ is the root of $h(x) = 1-\alpha \sqrt{x} $ between $x=0$ and $x=1$. We can see that the $x'$ exists by observing that $h(x)$ switches signs and is a decreasing between $x=0$ and $x=1$. Let us consider the case when, $\alpha>1$ and $x<x'$
    \begin{align}
        &\frac{2-\alpha}{4x^{\frac{3}{2}}} \left[- \left(1+\alpha \sqrt{x} \right) \left( 1+\sqrt{x}\right)^{-\alpha}+\left(1-\alpha \sqrt{x} \right) \left( 1-\sqrt{x}\right)^{-\alpha}\right] \leq 0\\
        &\implies \left(1+\alpha \sqrt{x} \right) \left( 1+\sqrt{x}\right)^{-\alpha}\geq \left(1-\alpha \sqrt{x} \right) \left( 1-\sqrt{x}\right)^{-\alpha}\\
        &\implies \frac{\left(1+\alpha \sqrt{x} \right) \left(1-\sqrt{x}\right)^{\alpha}}{\left(1-\alpha \sqrt{x} \right) \left(1+\sqrt{x}\right)^{\alpha}}\geq 1 \, .
    \end{align}
	Note that as $x\rightarrow 0$, the left-hand-side tends to $1$. Also when $x\rightarrow x'$, the left-hand-side tends to $\infty$. The above inequality holds for $x<x'$ if, additionally, 
		\begin{equation}\label{eqn:need_2b_monotone_4}
			\frac{\left(1+\alpha \sqrt{x} \right) \left(1-\sqrt{x}\right)^{\alpha}}{\left(1-\alpha \sqrt{x} \right) \left(1+\sqrt{x}\right)^{\alpha}}
		\end{equation}
		is a monotone function. To show this, let $t \coloneq \sqrt{x}$, where $t\in[0,1)$. Then 
		\begin{equation}
			\frac{\left(1+\alpha \sqrt{x} \right) \left(1-\sqrt{x}\right)^{\alpha}}{\left(1-\alpha \sqrt{x} \right) \left(1+\sqrt{x}\right)^{\alpha}}=\frac{\left(1+\alpha t \right) \left(1-t\right)^{\alpha}}{\left(1-\alpha t \right) \left(1+t\right)^{\alpha}} \, ,
		\end{equation}
		and the first derivative of Eq.~\eqref{eqn:need_2b_monotone_4} is as follows:
		\begin{align}
			&\frac{d}{dt}\left[\frac{\left(1+\alpha t \right) \left(1-t\right)^{\alpha}}{\left(1-\alpha t \right) \left(1+t\right)^{\alpha}}\right]\\
			&=\frac{[\alpha(1-t)^{\alpha}-\alpha(1+\alpha t)(1-t)^{\alpha-1}]((1-\alpha t )(1+t)^{\alpha})}{(1-\alpha t )^2(1+t)^{2\alpha}}+\frac{[\alpha(1+t)^{\alpha}-\alpha(1-\alpha t )(1+t)^{\alpha-1}]((1+\alpha t )(1-t)^{\alpha})}{(1-\alpha t )^2(1+t)^{2\alpha}}\notag\\
			&=\frac{2\alpha(1-t^2)^{\alpha}+2\alpha(1-t^2)^{\alpha-1}(\alpha^2 t^2 -1)}{(1-\alpha t )^2(1+t)^{2\alpha}}= 2\alpha(1-t^2)^{\alpha-1}\frac{(\alpha^2-1)t^2}{(1-\alpha t )^2(1+t)^{2\alpha}}\geq 0 \, .
		\end{align}
    when $t\in[0,\sqrt{x'})$. Hence, Eq.~\eqref{eqn:need_2b_monotone_4} is monotone when $x \in [0,x')$. Therefore, we have shown that Eq.~\eqref{eqn:f_pr_down_sec_der} is non-positive and thus Eq.~\eqref{def:f_pr_down_int} is concave.
\end{proof}
\section{Incorporating Noisy Preprocessing} \label{app:noisypreprocessing}
Noisy preprocessing is a technique to boost key rates by injecting randomness into Alice's raw string during generation rounds~\cite{RGK05}. The intuition is that in certain settings this can decrease Eve's information about Alice's raw string more than it decreases Bob's information, overall leading to an increase in the key rate. It has already been shown for the CHSH-based DIQKD protocol that this can lower the minimal detection efficiencies~\cite{HSTRBS20}. To implement noisy preprocessing, Alice simply flips her output bit with some probability $q$ in generation rounds, using private local randomness. After incorporating such a step, the resulting classical-quantum state $\rho_{IAE}$ is no longer given by Eq.~\eqref{Eq: AllISigma}; rather it is of the form
\begin{align} \label{Eq: rhopreprocessing}
    \rho_{IAE} = \sum_{i} \operatorname{Pr} \left[ I=i\right] \ketbra{i}{i}_{I} \otimes \sum_{a} \left[ \ketbra{a}{a}_{A} \otimes \left( (1-q)\rho_E^{ia}  + q\rho_E^{i(a\oplus 1)} \right) \right] \; .
\end{align}
Following the same arguments as those used to prove Lemma~\ref{Lem: QProperties}, to lower bound $\mathbb{H}(A|I, E)_{\rho}$ it suffices to consider states of the form
\begin{align}
    \sigma_{IAE} &= \sum_{i} \operatorname{Pr} \left[ I=i\right] \ketbra{i}{i}_{I} \otimes \sigma_{AE}^{i} \label{Eq: AllISigmaNPP}\\
    \sigma_{AE}^{i} &= \frac{1}{2} \ketbra{0}{0} \otimes \left( (1-q)\ketbra{\psi_{=}}{\psi_{=}}  + q\ketbra{\psi_{\neq}}{\psi_{\neq}} \right) +  \frac{1}{2} \ketbra{1}{1} \otimes \left( q\ketbra{\psi_{=}}{\psi_{=}}  + (1-q)\ketbra{\psi_{\neq}}{\psi_{\neq}} \right)\, , \label{Eq: SingleISigmaNPP}
\end{align}
such that $ |\inner{\psi_{=}}{\psi_{\neq}}| \geq g_{\score}$. In the following theorem, we will use the notation that for $g_\score = \sqrt{\tfrac{\score^2_\beta }{4} - \beta^2}$, we define the following quantities:
\begin{align}
    N_{1}&=\sqrt{\left(g_{\score}-\sqrt{g^2_{\score}+\left(1-2q\right)^2(1-g^2_{\score})}\right)^2 + \left(1-2q\right)^2(1-g^2_{\score})} \\
N_{2}&=\sqrt{\left(g_{\score}+\sqrt{g^2_{\score}+\left(1-2q\right)^2(1-g^2_{\score})}\right)^2 + \left(1-2q\right)^2(1-g^2_{\score})} \, .
\end{align}
 \begin{thm} \label{Thm: NPPBounds}
		Let $\lvert \beta \rvert \geq 1$, $\score_\beta \in \left[2\lvert \beta \rvert, 2\sqrt{1+\beta^2}\right]$, and $g_\score = \sqrt{\tfrac{\score^2_\beta }{4} - \beta^2}$. Then, for any $q \in [0,1]$, we have (writing $\invbreve{h}$ to denote the concave envelope of an arbitrary function $h: \left[2\lvert \beta \rvert, 2\sqrt{1+\beta^2}\right] \to \mathbb{R}$):
		\begin{align}
		      f_{\widetilde{H}^{\downarrow}_\alpha}(\score_{\beta}) &= 1 + \frac{1}{1-\alpha}\log \left[ \invbreve{h}_{\widetilde{H}^{\downarrow}_\alpha}(\score_{\beta})\right]\label{def:down_arrow_sandwNPP} 
		\end{align}
        for all $\alpha \in  (1,\infty)$, where
\begin{multline}
 h_{\widetilde{H}^{\downarrow}_\alpha}(\score_{\beta}) =  \frac{1}{2^{\alpha+1}}\left[\left(\left( 1-g_\score\right)^{\frac{1}{\alpha}} + \left( 1+g_\score\right)^{\frac{1}{\alpha}} + \sqrt{\left(\left( 1-g_\score\right)^{\frac{1}{\alpha}} + \left( 1+g_\score\right)^{\frac{1}{\alpha}} \right)^2 - 16\left( 1-g_\score^2\right)^{\frac{1}{\alpha}} \left( q-q^2\right) } \right)^{\alpha} \right. \\
 \left. + \left(\left( 1-g_\score\right)^{\frac{1}{\alpha}} + \left( 1+g_\score\right)^{\frac{1}{\alpha}} - \sqrt{\left(\left( 1-g_\score\right)^{\frac{1}{\alpha}} + \left( 1+g_\score\right)^{\frac{1}{\alpha}} \right)^2 - 16\left( 1-g_\score^2\right)^{\frac{1}{\alpha}} \left( q-q^2\right) } \right)^{\alpha} \right]\, . \label{eq:down_arrow_sandwNPP} 
\end{multline}
        Similarly, 
        \begin{align}  
            f_{\Bar{H}^{\downarrow}_\alpha}(\score_{\beta}) &= 1 + \frac{1}{1-\alpha}\log \left[ \invbreve{h}_{\Bar{H}^{\downarrow}_\alpha}(\score_{\beta})\right] \label{def:down_arrow_petzNPP}\\
    	    f_{\Bar{H}^{\uparrow}_\alpha}(\score_{\beta}) &= 1 + \frac{\alpha}{1-\alpha}\log \left[ \invbreve{h}_{\Bar{H}^{\uparrow}_\alpha}(\score_{\beta})\right]\label{def:up_arrow_petzNPP} 
    	\end{align}
        for all $\alpha \in  (1,2]$, where
 \begin{multline}
 h_{\Bar{H}^{\downarrow}_\alpha}(\score_{\beta}) = \sum_{j=1}^{2}\frac{1}{2}\left(1+(-1)^{j+1}\sqrt{g^2_{\score}+(2q-1)^2(1-g^2_{\score})}\right)^{\alpha}\left(1-g_{\score}\right)^{1-\alpha}\left(\frac{g_{\score}+(-1)^{j}\sqrt{g^2_{\score}+(2q-1)^2(1-g^2_{\score})}}{N_{j}}\right)^2\\
 +  \sum_{j=1}^{2}\frac{1}{2}\left(1+(-1)^{j+1}\sqrt{g^2_{\score}+(2q-1)^2(1-g^2_{\score})}\right)^{\alpha}\left(1+g_{\score}\right)^{1-\alpha}\left(\frac{(2q-1)\sqrt{1-g^2_{\score}}}{N_{j}}\right)^2
\end{multline}
        and
        \begin{multline}
 h_{\Bar{H}^{\uparrow}_\alpha}(\score_{\beta}) =  \left[ \sum_{j=1}^{2}\left(\frac{1+\left(-1\right)^{j+1}\sqrt{g^2_{\score}+(2q-1)^2(1-g^2_{\score})}}{2}\right)^{\alpha}\left(1-\left(\frac{(2q-1)\sqrt{1-g^2_{\score}}}{N_{j}}\right)^2\right) \right]^{\frac{1}{\alpha}}\\
 + \left[ \sum_{j=1}^{2}\left(\frac{1+\left(-1\right)^{j+1}\sqrt{g^2_{\score}+(2q-1)^2(1-g^2_{\score})}}{2}\right)^{\alpha}\left(\frac{(2q-1)\sqrt{1-g^2_{\score}}}{N_{j}}\right)^2 \right]^{\frac{1}{\alpha}}\, .
\end{multline}
    \end{thm}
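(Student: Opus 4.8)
The plan is to follow the same four-stage template as the noiseless derivations in the preceding subsections, modifying each stage to absorb the bit-flip channel. First I would establish the noisy analog of the qubit reduction: starting from the post-processing state $\rho_{IAE}$ of Eq.~\eqref{Eq: rhopreprocessing}, the local-unitary-invariance, classical-linearity and data-processing properties used in Lemma~\ref{Lem: QProperties} still apply verbatim (the flip is a classical map that commutes with conditioning on $A$), so it suffices to lower bound $\mathbb{H}(A|I,E)$ on the reduced family $\sigma_{IAE}$ of Eqs.~\eqref{Eq: AllISigmaNPP}--\eqref{Eq: SingleISigmaNPP}. The crucial simplification I would record here is that Eve's marginal is \emph{unchanged} by the noise: averaging over $A$, the flip merely swaps the two conditional states with probability $q$, so $\sigma_E=\tfrac{1-g_\score}{2}\ketbra{v_1}{v_1}+\tfrac{1+g_\score}{2}\ketbra{v_2}{v_2}$ with $\ket{v_1},\ket{v_2}$ exactly as in Eqs.~\eqref{Eq: v1}--\eqref{Eq: v2}. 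This keeps every $\sigma_E^{s}$ factor identical to the noiseless case and localizes the new difficulty entirely to the $A$-conditional blocks.

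Next, for a single index $i$ I would evaluate each entropy explicitly. Writing $\rho_0=(1-q)\ketbra{\psi_=}{\psi_=}+q\ketbra{\psi_{\neq}}{\psi_{\neq}}$ and $\rho_1$ (its $q\leftrightarrow1-q$ counterpart) for the two $A$-blocks, a direct $2\times2$ diagonalization gives eigenvalues $\tfrac12\!\left(1\pm\sqrt{g_\score^2+(1-2q)^2(1-g_\score^2)}\right)$ for both blocks, which is the source of the recurring radical in the statement. For $\bar H^{\downarrow}_\alpha$ I would expand $\operatorname{Tr}[\sigma_{AE}^\alpha\,\sigma_E^{1-\alpha}]$ by raising each block to the power $\alpha$ and tracing against $\sigma_E^{1-\alpha}$ in the $\{\ket{v_1},\ket{v_2}\}$ basis; the overlap factors $\big(\tfrac{g_\score+(-1)^{j}\sqrt{\cdots}}{N_j}\big)^2$ and $\big(\tfrac{(2q-1)\sqrt{1-g_\score^2}}{N_j}\big)^2$ are precisely the squared components $|\inner{v_{1}}{e_j}|^2,|\inner{v_{2}}{e_j}|^2$ of the block eigenvectors $\ket{e_j}$, with $N_1,N_2$ the stated normalizations. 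For $\bar H^{\uparrow}_\alpha$ I would instead use $\bar H^{\uparrow}_{\alpha}=\tfrac{\alpha}{1-\alpha}\log\operatorname{Tr}[(\operatorname{Tr}_A\sigma_{AE}^\alpha)^{1/\alpha}]$ from~\cite[Lemma~5.1]{Tomamichel2015QuantumIP}, diagonalizing $\operatorname{Tr}_A\sigma_{AE}^\alpha=2^{-\alpha}(\rho_0^\alpha+\rho_1^\alpha)$ and taking the $1/\alpha$-power. For $\widetilde H^{\downarrow}_\alpha$ the operator $\sigma_E^{\frac{1-\alpha}{2\alpha}}\sigma_{AE}\sigma_E^{\frac{1-\alpha}{2\alpha}}$ stays block-diagonal in $A$ with blocks $\tfrac12\left(\begin{smallmatrix} p^2 & \pm(2q-1)pr\\ \pm(2q-1)pr & r^2\end{smallmatrix}\right)$ in the $v$-basis, where $p^2=(\tfrac{1-g_\score}{2})^{1/\alpha}$ and $r^2=(\tfrac{1+g_\score}{2})^{1/\alpha}$; diagonalizing these and summing $\operatorname{Tr}[M_0^\alpha]+\operatorname{Tr}[M_1^\alpha]$ produces the nested-radical expression $h_{\widetilde H^{\downarrow}_\alpha}$ after peeling off a factor $2^{\alpha-1}$, which accounts for the additive $1$ in the rate function.

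Then I would combine the single-index bounds. By classical linearity together with~\cite[Prop.~5.1]{Tomamichel2015QuantumIP}, the total entropy is lower bounded by $1+\tfrac{1}{1-\alpha}\log\!\big[\sum_i\Pr[I{=}i]\,h(\score_\beta^i)\big]$ (with the $\tfrac{\alpha}{1-\alpha}$ prefactor in the $\uparrow$ case). Since $\tfrac{1}{1-\alpha}<0$, I need an \emph{upper} bound on this convex combination. Unlike the noiseless derivations, the functions $h$ need not be concave once $q\neq0$, so Jensen no longer applies directly; instead I would bound $\sum_i\Pr[I{=}i]\,h(\score_\beta^i)\le\sum_i\Pr[I{=}i]\,\invbreve h(\score_\beta^i)\le\invbreve h(\score_\beta)$, using that the concave envelope $\invbreve h$ dominates $h$ and is itself concave. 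This is exactly what replaces the concavity propositions of Appendix~\ref{app: ConcavityProperties} and yields the stated formulas with $\invbreve h$ in place of $h$.

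Finally, for tightness I would exhibit the saturating attack: the same state and observables as in Eq.~\eqref{Eq: OptAttackEve}, now with Alice applying the flip of probability $q$, and one checks that the resulting classical-quantum state is precisely of the form~\eqref{Eq: SingleISigmaNPP} with the correct score, so the single-block evaluation holds with equality. Where $h$ already coincides with $\invbreve h$ this gives tightness pointwise; at scores where the envelope strictly exceeds $h$, tightness follows by time-sharing between the two strategies at the endpoints of the linear segment of $\invbreve h$, exactly the concave-envelope argument flagged at the end of Appendix~\ref{Sec: TightnessBounds}. The main obstacle is the single-block computation of the second stage, and in particular the $\widetilde H^{\downarrow}_\alpha$ block diagonalization: the off-diagonal entry $(2q-1)pr$ couples the two $\sigma_E$-eigendirections, so the block eigenvalues carry a discriminant that does not factor, and verifying that the resulting $h_{\widetilde H^{\downarrow}_\alpha}$ reduces correctly to the noiseless expression at $q=0$ is the delicate bookkeeping step.
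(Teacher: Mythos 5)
Your outline reproduces the paper's proof in nearly all of its moves: the reduction of the noisy state \eqref{Eq: rhopreprocessing} to the family \eqref{Eq: AllISigmaNPP}--\eqref{Eq: SingleISigmaNPP} via the Lemma~\ref{Lem: QProperties} argument, the single-block diagonalizations (including the correct eigenvalues $\tfrac{1}{2}\bigl(1\pm\sqrt{g^2+(1-2q)^2(1-g^2)}\bigr)$ and the $2\times 2$ block of $\sigma_E^{\frac{1-\alpha}{2\alpha}}\sigma_{AE}\sigma_E^{\frac{1-\alpha}{2\alpha}}$ with off-diagonal $(2q-1)pr$), the use of \cite[Lemma~5.1 and Prop.~5.1]{Tomamichel2015QuantumIP}, the substitution of the concave envelope for the Jensen step, and the tightness-by-mixing argument in which Eve time-shares the Appendix~\ref{Sec: TightnessBounds} attacks at the endpoints of a linear segment of $\invbreve{h}_{\mathbb{H}}$ and stores the choice in $I$. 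However, there is one genuine gap: you write Eve's marginal directly with $g_{\score}$, whereas the qubit reduction only guarantees $|\inner{\psi_{=}}{\psi_{\neq}}| \geq g_{\score}$, not equality. In the noiseless derivations the replacement of the overlap by $g_{\score}$ was licensed by the explicit monotonicity proofs of Appendix~\ref{app: ConcavityProperties}; with $q \neq 0$ neither monotonicity nor concavity of the functions $h_{\mathbb{H}}$ themselves is available (the paper leaves their concavity open, which is precisely why the envelope appears in the statement at all). Consequently your chain $\sum_i \Pr[I{=}i]\,h \leq \sum_i \Pr[I{=}i]\,\invbreve{h} \leq \invbreve{h}(\score_\beta)$ is applied at the wrong argument: the single-block computation produces $h$ evaluated at the overlap, and without a monotonicity statement you can neither pass from the overlap to the score-derived $g_{\score^i}$ nor invoke Jensen at the level of the scores $\score_\beta^i$ (whose average is the only quantity constrained to equal $\score_\beta$).

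The paper closes this with a dedicated lemma you omit (Appendix~\ref{Sec: MonotonicityPropNPP}): it proves that the concave \emph{envelopes} $\invbreve{h}_{\mathbb{H}}$ are monotonically decreasing on $\left[2\lvert\beta\rvert, 2\sqrt{1+\beta^2}\right]$, and the proof is not a derivative computation on $h_{\mathbb{H}}$. One first shows $h_{\mathbb{H}}(\score_\beta) \leq h_{\mathbb{H}}(2\lvert\beta\rvert)$ pointwise via the entropy chain $\mathbb{H}(Q) = \mathbb{H}(Q|A^\prime E) = \mathbb{H}(AQ|A^\prime E) = \mathbb{H}(A|A^\prime E) \leq \mathbb{H}(A|E)$ --- at $g_{\score}=0$ Eve need only guess the preprocessing bit, and no larger score can do better for her --- so the constant function at the left endpoint is a concave majorant of $h_{\mathbb{H}}$ that is tight at $2\lvert\beta\rvert$; hence $\invbreve{h}_{\mathbb{H}}(2\lvert\beta\rvert) = h_{\mathbb{H}}(2\lvert\beta\rvert)$, and concavity of the envelope then forces it to be non-increasing on the whole interval. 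This is exactly the ``appropriate point'' at which the paper replaces the overlap variable by $g_{\score}$ (it deliberately carries a separate symbol for the overlap through the block computations for this reason). With that lemma supplied, the rest of your proposal goes through as written and coincides with the paper's proof.
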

A proof of Theorem~\ref{Thm: NPPBounds} can be found in Appendix~\ref{Sec: NPPBound}. 
Note that the final bounds $f_{\mathbb{H}}$ are presented in terms of the concave envelopes $\invbreve{h}_{\mathbb{H}}$ of the functions $h_{\mathbb{H}}$ --- in order to avoid having to take this concave envelope,
it would suffice to show that all $h_{\mathbb{H}}$ are concave. 
We note that, up to numerical precision, these functions indeed appear to be concave.
We leave a rigorous proof of concavity for future work, and highlight that a consequence of this concavity would be that the equality $f_{\widetilde{H}^{\downarrow}_\alpha}=f_{\Bar{H}^{\uparrow}_\alpha}$ does not generally hold for every $q \in [0,1]$ and $\alpha \in  (1,2]$.

\subsection{Proof of Theorem~\ref{Thm: NPPBounds}} \label{Sec: NPPBound}
\begin{proof}
For any $\sigma_{IAE}$ of the form given by Eq.~\eqref{Eq: AllISigmaNPP}, we first consider Alice's and Eve's bipartite state for some $I=i$. For any $\sigma_{AE}$ as in Eq.~\eqref{Eq: SingleISigmaNPP}, Eve's reduced density matrix is given by 
\begin{align} \label{Eq: SigmaENPP}
    \sigma_{E}=\frac{1-g_\scalar}{2} \ketbra{v_1}{v_1} + \frac{1+g_\scalar}{2} \ketbra{v_2}{v_2} \; ,
\end{align}
where $g_\scalar = |\inner{\psi_{=}}{\psi_{\neq}}|$ and\footnote{Note that we work with $g_\scalar$, rather than $g_\score$, as we do not explicitly prove the fact that $h_{\mathbb{H}}(\score_{\beta})$ is monotonically decreasing.
Rather, at an appropriate point, we use the fact that its concave envelope is monotonically decreasing.}
      \begin{align}
        \ket{v_1} &= - \sqrt{\frac{1-g_\scalar}{2}} \ket{0} + \sqrt{\frac{1+g_\scalar}{2}} \ket{1} \\
        \ket{v_2} &=\phantom{-} \sqrt{\frac{1+g_\scalar}{2}} \ket{0} + \sqrt{\frac{1-g_\scalar}{2}} \ket{1} \, .
      \end{align}
We first derive an exact expression for $\widetilde{H}^{\downarrow}_\alpha(A|E)_{\sigma}$. Plugging the above expressions directly into $\sigma_{E}^{\frac{1-\alpha}{2\alpha}}\sigma_{AE}\sigma_{E}^{\frac{1-\alpha}{2\alpha}}$ gives us that
\begin{align}
    & \sigma_{E}^{\frac{1-\alpha}{2\alpha}}\sigma_{AE}\sigma_{E}^{\frac{1-\alpha}{2\alpha}} \\
    & = \frac{1}{2} \ketbra{0}{0}_{A} \otimes \left[ \left( \frac{1-g_\scalar}{2}\right)^{\frac{1}{\alpha}}\ketbra{v_1}{v_1}_{E} +  \left( \frac{1-g_\scalar^2}{4}\right)^{\frac{1}{2\alpha}}\left(q-\bar{q}\right)\left(\ketbra{v_1}{v_2}_{E}+\ketbra{v_2}{v_1}_{E} \right) + \left( \frac{1+g_\scalar}{2}\right)^{\frac{1}{\alpha}}\ketbra{v_2}{v_2}_{E}
    \right] \notag \\
    &+ \frac{1}{2} \ketbra{1}{1}_{A} \otimes \left[ \left( \frac{1-g_\scalar}{2}\right)^{\frac{1}{\alpha}}\ketbra{v_1}{v_1}_{E} +  \left( \frac{1-g_\scalar^2}{4}\right)^{\frac{1}{2\alpha}}\left(\bar{q}-q\right)\left(\ketbra{v_1}{v_2}_{E}+\ketbra{v_2}{v_1}_{E} \right) + \left( \frac{1+g_\scalar}{2}\right)^{\frac{1}{\alpha}}\ketbra{v_2}{v_2}_{E}
    \right] \; ,
\end{align}
where $\bar{q}=1-q$. This expression is diagonalizable, and it is easily verifiable that
\begin{align}
    \sigma_{E}^{\frac{1-\alpha}{2\alpha}}\sigma_{AE}\sigma_{E}^{\frac{1-\alpha}{2\alpha}}= 2^{-1-\frac{1}{\alpha}} \left(\ketbra{0}{0}_{A} \otimes \left[ \lambda_{1}\ketbra{w_1}{w_1}_{E} + \lambda_{2}\ketbra{w_2}{w_2}_{E}
    \right]   + \ketbra{1}{1}_{A} \otimes \left[ \lambda_{1}^\prime\ketbra{w_1^\prime}{w_1^\prime}_{E} + \lambda_{2}^\prime\ketbra{w_2^\prime}{w_2^\prime}_{E}
    \right]\right) \; ,
\end{align}
where $\{\ket{w_1},\ket{w_2} \} \ \text{and}\  \{ \ket{w_1^\prime}, \ket{w_2^\prime} \}$ are pairwise orthonormal vectors, and 
\begin{align}
    \lambda_{1,2} &= \frac{\left( 1-g_\scalar\right)^{\frac{1}{\alpha}} + \left( 1+g_\scalar\right)^{\frac{1}{\alpha}} }{2} \pm \sqrt{\left(\frac{\left( 1-g_\scalar\right)^{\frac{1}{\alpha}} + \left( 1+g_\scalar\right)^{\frac{1}{\alpha}} }{2}\right)^2 - \left( 1-g_\scalar^2\right)^{\frac{1}{\alpha}} \left( 4q-4q^2\right) }  \\
\lambda_{1,2}^{\prime} &= \frac{\left( 1-g_\scalar\right)^{\frac{1}{\alpha}} + \left( 1+g_\scalar\right)^{\frac{1}{\alpha}} }{2} \pm \sqrt{\left(\frac{\left( 1-g_\scalar\right)^{\frac{1}{\alpha}} + \left( 1+g_\scalar\right)^{\frac{1}{\alpha}} }{2}\right)^2 - \left( 1-g_\scalar^2\right)^{\frac{1}{\alpha}} \left( 4\bar{q}-4\bar{q}^2\right) } \, . 
\end{align}
Note that because $q-q^2 = \bar{q} - \bar{q}^2$, it must also hold that $\lambda_{1,2}= \lambda_{1,2}^\prime$. It then directly follows that
\begin{align}
\widetilde{H}^{\downarrow}_\alpha(A|E)_{\sigma} &=  \frac{1}{1-\alpha}\log\left[ \operatorname{Tr} \left[\left(\sigma_{E}^{\frac{1-\alpha}{2\alpha}}\sigma_{AE}\sigma_{E}^{\frac{1-\alpha}{2\alpha}}\right)^\alpha \right] \right] \\
 & = \frac{1}{1-\alpha}\log \frac{\left[ \lambda_{1}^{\alpha} + \lambda_{2}^{\alpha} +  \lambda_{1}^{\prime \alpha} + \lambda_{2}^{\prime \alpha}\right] }{2^{\alpha+1}} \\
 & = \frac{1}{1-\alpha}\log \frac{\left[ 2\lambda_{1}^{\alpha} + 2\lambda_{2}^{\alpha} \right] }{2^{\alpha+1}} \\
   & = \frac{1}{1-\alpha}\log \frac{\left[ \lambda_{1}^{\alpha} + \lambda_{2}^{\alpha} \right] }{2^{\alpha}} \, .
\end{align}
This thus yields
\begin{multline}
 \widetilde{H}^{\downarrow}_\alpha(A|E)_{\sigma} =  1 + \frac{1}{1-\alpha}\log \left[\frac{1}{2^{\alpha+1}}\left[\left(\left( 1-g_\scalar\right)^{\frac{1}{\alpha}} + \left( 1+g_\scalar\right)^{\frac{1}{\alpha}} + \sqrt{\left(\left( 1-g_\scalar\right)^{\frac{1}{\alpha}} + \left( 1+g_\scalar\right)^{\frac{1}{\alpha}} \right)^2 - 16\left( 1-g_\scalar^2\right)^{\frac{1}{\alpha}} \left( q-q^2\right) } \right)^{\alpha} \right. \right.\\
 \left. \left.+ \left(\left( 1-g_\scalar\right)^{\frac{1}{\alpha}} + \left( 1+g_\scalar\right)^{\frac{1}{\alpha}} - \sqrt{\left(\left( 1-g_\scalar\right)^{\frac{1}{\alpha}} + \left( 1+g_\scalar\right)^{\frac{1}{\alpha}} \right)^2 - 16\left( 1-g_\scalar^2\right)^{\frac{1}{\alpha}} \left( q-q^2\right) } \right)^{\alpha} \right]\right]\, .
\end{multline}
Up to the use of the concave envelope and the difference between $g_{\scalar}$ and $g_{\score}$, this is the expression from Eq.~\eqref{def:down_arrow_sandwNPP}. Next, we consider $\Bar{H}^{\downarrow}_\alpha(A|E)_{\sigma}$. For any $\sigma_{AE}$ as in Eq.~\eqref{Eq: SingleISigmaNPP}, it holds that 
\begin{multline}
    \sigma_{AE}^{\alpha} =  \ketbra{0}{0} \otimes\left[\left(\frac{1+\sqrt{g^2_{\scalar}+(q-\bar{q})^2(1-g^2_{\scalar})}}{4}\right)^{\alpha}\ketbra{u_1}{u_1} + \left(\frac{1-\sqrt{g^2_{\scalar}+(q-\bar{q})^2(1-g^2_{\scalar})}}{4}\right)^{\alpha}\ketbra{u_2}{u_2}\right]\\
        +\ketbra{1}{1} \otimes\left[\left(\frac{1+\sqrt{g^2_{\scalar}+(q-\bar{q})^2(1-g^2_{\scalar})}}{4}\right)^{\alpha}\ketbra{u_1^\prime}{u_1^\prime} + \left(\frac{1-\sqrt{g^2_{\scalar}+(q-\bar{q})^2(1-g^2_{\scalar})}}{4}\right)^{\alpha}\ketbra{u_2^\prime}{u_2^\prime}\right] \; ,
\end{multline}
where
\begin{align}
    \ket{u_1} &= \frac{g_{\scalar}-\sqrt{g^2_{\scalar}+(q-\bar{q})^2(1-g^2_{\scalar})}}{N_{1}}\ket{v_1} -\frac{(q-\bar{q})\sqrt{1-g^2_{\scalar}}}{N_{1}}\ket{v_2}\\
   \ket{u_2} &= \frac{g_{\scalar}+\sqrt{g^2_{\scalar}+(q-\bar{q})^2(1-g^2_{\scalar})}}{N_{2}}\ket{v_1} -\frac{(q-\bar{q})\sqrt{1-g^2_{\scalar}}}{N_{2}}\ket{v_2}\\
    \ket{u_1^\prime} &= \frac{g_{\scalar}-\sqrt{g^2_{\scalar}+(q-\bar{q})^2(1-g^2_{\scalar})}}{N_{1}}\ket{v_1} +\frac{(q-\bar{q})\sqrt{1-g^2_{\scalar}}}{N_{1}}\ket{v_2}\\
    \ket{u_2^\prime} &= \frac{g_{\scalar}+\sqrt{g^2_{\scalar}+(q-\bar{q})^2(1-g^2_{\scalar})}}{N_{2}}\ket{v_1} +\frac{(q-\bar{q})\sqrt{1-g^2_{\scalar}}}{N_{2}}\ket{v_2} \; .
\end{align}
It then holds that
\begin{align}
    \Bar{H}^{\downarrow}_\alpha(A|E)_{\sigma} &=  \frac{1}{1-\alpha}\log\left[ \operatorname{Tr} \left[  \sigma_{AE}^{\alpha} \sigma_{E}^{1-\alpha}\right] \right] \\
    &=  \frac{1}{1-\alpha}\log \left[\sum_{j=1}^{2}2\left(\frac{1+(-1)^{j+1}\sqrt{g^2_{\scalar}+(q-\bar{q})^2(1-g^2_{\scalar})}}{4}\right)^{\alpha}\left(\frac{1-g_{\scalar}}{2}\right)^{1-\alpha}\left(\frac{g_{\scalar}+(-1)^{j}\sqrt{g^2_{\scalar}+(q-\bar{q})^2(1-g^2_{\scalar})}}{N_{j}}\right)^2
    \right. \notag
    \\
    &+ \left. \sum_{j=1}^{2}2\left(\frac{1+(-1)^{j+1}\sqrt{g^2_{\scalar}+(q-\bar{q})^2(1-g^2_{\scalar})}}{4}\right)^{\alpha}\left(\frac{1+g_{\scalar}}{2}\right)^{1-\alpha}\left(\frac{(q-\bar{q})\sqrt{1-g^2_{\scalar}}}{N_{j}}\right)^2
    \right] \, .
\end{align}
Slightly rewriting this equality, we then attain that this is equal to
 \begin{multline}
 1 + \frac{1}{1-\alpha}\log \left[\sum_{j=1}^{2}\frac{1}{2}\left(1+(-1)^{j+1}\sqrt{g^2_{\scalar}+(2q-1)^2(1-g^2_{\scalar})}\right)^{\alpha}\left(1-g_{\scalar}\right)^{1-\alpha}\left(\frac{g_{\scalar}+(-1)^{j}\sqrt{g^2_{\scalar}+(2q-1)^2(1-g^2_{\scalar})}}{N_{j}}\right)^2 \right.\\
 + \left. \sum_{j=1}^{2}\frac{1}{2}\left(1+(-1)^{j+1}\sqrt{g^2_{\scalar}+(2q-1)^2(1-g^2_{\scalar})}\right)^{\alpha}\left(1+g_{\scalar}\right)^{1-\alpha}\left(\frac{(2q-1)\sqrt{1-g^2_{\scalar}}}{N_{j}}\right)^2 \right]\, .
\end{multline}
To calculate $ \Bar{H}^{\uparrow}_{\alpha}\left(A|E\right)_\sigma$, we again use~\cite[Lemma~5.1]{Tomamichel2015QuantumIP}, i.e.
 \begin{align}
\Bar{H}^{\uparrow}_{\alpha}\left(A|E\right)_\sigma = \frac{\alpha}{1-\alpha} \log \left[\operatorname{Tr}\left[ \operatorname{Tr}_A\left(\sigma^\alpha_{AE}\right)^{\frac{1}{\alpha}}\right] \right] \, .
    \end{align}
A consequence of the above calculations is that
 \begin{align}  \operatorname{Tr}_A[\sigma^{\alpha}_{AE}]
        &=\sum_{j=1}^{2} \left(\frac{1+\left(-1\right)^{j+1}\sqrt{g^2_{\scalar}+(q-\bar{q})^2(1-g^2_{\scalar})}}{4}\right)^{\alpha}\left(\ketbra{u_j}{u_j}+ \ketbra{u_j^\prime}{u_j^\prime}\right)
        \\
        &=\sum_{j=1}^{2} 2\left(\frac{1+\left(-1\right)^{j+1}\sqrt{g^2_{\scalar}+(q-\bar{q})^2(1-g^2_{\scalar})}}{4}\right)^{\alpha}\notag \\
    &\ \ \ \cdot \left[\left(\frac{g_{\scalar}+\left(-1\right)^{j}\sqrt{g^2_{\scalar}+(q-\bar{q})^2(1-g^2_{\scalar})}}{N_{j}}\right)^2\ketbra{v_1}{v_1} + \left(\frac{(q-\bar{q})\sqrt{1-g^2_{\scalar}}}{N_{j}}\right)^2\ketbra{v_2}{v_2}\right]\\
&=\sum_{j=1}^{2} 2\left(\frac{1+\left(-1\right)^{j+1}\sqrt{g^2_{\scalar}+(q-\bar{q})^2(1-g^2_{\scalar})}}{4}\right)^{\alpha}\notag \\
    &\ \ \ \cdot \left[\left(1-\left(\frac{(q-\bar{q})\sqrt{1-g^2_{\scalar}}}{N_{j}}\right)^2\right)\ketbra{v_1}{v_1} + 
    \left(\frac{(q-\bar{q})\sqrt{1-g^2_{\scalar}}}{N_{j}}\right)^2\ketbra{v_2}{v_2}\right] \, .
    \end{align}
    The eigenvalues of $\operatorname{Tr}_A[\sigma^{\alpha}_{AE}]$ are given by
    \begin{align}
        \mu_1 =& 2 \sum_{j=1}^{2}\left(\frac{1+\left(-1\right)^{j+1}\sqrt{g^2_{\scalar}+(q-\bar{q})^2(1-g^2_{\scalar})}}{4}\right)^{\alpha}\left(1-\left(\frac{(q-\bar{q})\sqrt{1-g^2_{\scalar}}}{N_{j}}\right)^2\right)\\
        \mu_2 =& 2 \sum_{j=1}^{2}\left(\frac{1+\left(-1\right)^{j+1}\sqrt{g^2_{\scalar}+(q-\bar{q})^2(1-g^2_{\scalar})}}{4}\right)^{\alpha}\left(\frac{(q-\bar{q})\sqrt{1-g^2_{\scalar}}}{N_{j}}\right)^2 \, ,
    \end{align}
and therefore
\begin{multline}
\Bar{H}^{\uparrow}_{\alpha}\left(A|E\right)_\sigma =  1 + \frac{\alpha}{1-\alpha}\log \left[\left[ \sum_{j=1}^{2}\left(\frac{1+\left(-1\right)^{j+1}\sqrt{g^2_{\scalar}+(2q-1)^2(1-g^2_{\scalar})}}{2}\right)^{\alpha}\left(1-\left(\frac{(2q-1)\sqrt{1-g^2_{\scalar}}}{N_{j}}\right)^2\right) \right]^{\frac{1}{\alpha}} \right.\\
 + \left. \left[ \sum_{j=1}^{2}\left(\frac{1+\left(-1\right)^{j+1}\sqrt{g^2_{\scalar}+(2q-1)^2(1-g^2_{\scalar})}}{2}\right)^{\alpha}\left(\frac{(2q-1)\sqrt{1-g^2_{\scalar}}}{N_{j}}\right)^2 \right]^{\frac{1}{\alpha}} \right] \, .
\end{multline}
Again, up to the use of the concave envelope and the difference between $g_{\scalar}$ and $g_{\score}$, these expressions for the Petz-{\Renyi} entropies are equal to Eqs.~${\eqref{def:down_arrow_petzNPP}-\eqref{def:up_arrow_petzNPP}}$. Let us now reintroduce the index $I=i$. We prove in Appendix~\ref{Sec: MonotonicityPropNPP} that all three functions $\invbreve{h}_{\mathbb{H}}$ are monotonically decreasing. 
By upper bounding $h_{\mathbb{H}}$  with $\invbreve{h}_{\mathbb{H}}$ and then using the fact that, due to the monotonicity of $\invbreve{h}_{\mathbb{H}}$, one can provide a subsequent upper  bound by replacing $g_{\scalar}$ with $g_{\score}$, the following must be true. For any $\rho_{IAE}$ of the form given by Eq.~\eqref{Eq: rhopreprocessing}, it must hold that
\begin{align}    \widetilde{H}^{\downarrow}_\alpha(A|X=0,IE)_{\rho} &\geq \frac{1}{1-\alpha}\log \left[\sum_{i} \operatorname{Pr} \left[ I=i\right] 2^{\left(1-\alpha\right)\widetilde{H}^{\downarrow}_\alpha(A|I=i,E)_{\sigma^i}} \right] \\
&\geq 1+ \frac{1}{1-\alpha}\log \left[ \sum_{i} \operatorname{Pr} \left[ I=i\right] \invbreve{h}_{\widetilde{H}^{\downarrow}_\alpha}(\score^i_{\beta}) \right]\\
&\geq 1 + \frac{1}{1-\alpha}\log \left[ \invbreve{h}_{\widetilde{H}^{\downarrow}_\alpha}(\score_{\beta})\right] \, ,
\end{align}
where we use~\cite[Prop. 5.1]{Tomamichel2015QuantumIP} in the first inequality. To prove that this bound is tight, first note that, by construction, for any $\score_{\beta} \in \left[2\lvert \beta \rvert, 2\sqrt{1+\beta^2}\right]$, there must exist a set of $\{ \score_{\beta}^{i} \}_{i}$ and a distribution defined by the elements $\{p_i \}_{i}$ such that
\begin{align}
    \sum_{i} p_i \score_{\beta}^{i} &= \score_{\beta} \\
    \sum_{i} p_i h_{\widetilde{H}_{\alpha}^{\downarrow}} \left(\score_{\beta}^{i}\right) &= \invbreve{h}_{\widetilde{H}^{\downarrow}_\alpha}\left(\score_{\beta}\right) \, .
\end{align}
For any such $\score_{\beta}^{i}$, the explicit attack from Appendix~\ref{Sec: TightnessBounds} produces a post-measurement state, $\sigma^{i}_{AE}$, of the form as in Eq.~\eqref{Eq: SingleISigmaNPP}. However, for such states, we have explicitly calculated the {\Renyi} entropy, and
        \begin{align}
            \widetilde{H}_{\alpha}^{\downarrow}(A|X=0,E)_{\sigma^i} &= 1 + \frac{1}{1-\alpha} \log\left(h_{\widetilde{H}_{\alpha}^{\downarrow}}\left(\score_{\beta}^{i}\right)\right) \, .
        \end{align}
If, with probability $p_i$, Eve constructs the attack from Appendix~\ref{Sec: TightnessBounds} that achieves a score of $\score_{\beta}^{i}$ and stores the index $I=i$ on a classical register, then she can generate the post-measurement state $\sigma_{IAE} = \sum_{i} p_i \ketbra{i}{i}_{I} \otimes \sigma^{i}_{AE}$. For this state, 
\begin{align}
    \widetilde{H}_{\alpha}^{\downarrow}(A|X=0,IE)_{\sigma} &= 1 + \frac{1}{1-\alpha} \log\left(\sum_{i} p_i h_{\widetilde{H}_{\alpha}^{\downarrow}} (\score_{\beta}^{i})\right) \\
    &= 1 + \frac{1}{1-\alpha} \log\left(\invbreve{h}_{\widetilde{H}^{\downarrow}_\alpha}\left(\score_{\beta}\right)\right) \, ,
\end{align}
 due to~\cite[Prop. 5.1]{Tomamichel2015QuantumIP} (or else see the classical linearity property in Lemma~\ref{Lem: QProperties}). 
This attack thus saturates our bounds.
Using the same arguments, it must also hold that
\begin{align}  
            \Bar{H}^{\downarrow}_\alpha(A|X=0,IE)_{\rho} &= 1 + \frac{1}{1-\alpha}\log \left[ \invbreve{h}_{\Bar{H}^{\downarrow}_\alpha}(\score_{\beta})\right] \\
         \Bar{H}^{\uparrow}_\alpha(A|X=0,IE)_{\rho} &= 1 + \frac{\alpha}{1-\alpha}\log \left[ \invbreve{h}_{\Bar{H}^{\uparrow}_\alpha}(\score_{\beta})\right] \, .
    	\end{align}
\end{proof}
\subsection{Monotonicity Properties} \label{Sec: MonotonicityPropNPP}
In this section, we show that all three functions $\invbreve{h}_{\mathbb{H}}(\score_{\beta})$ described in Theorem~\ref{Thm: NPPBounds} are monotonically decreasing in the score, $\score_{\beta}$ (for $\alpha$, $\beta$ in the appropriate ranges).
\begin{prop}
 For all $\alpha \in (1,\infty)$ and $q \in \left[0,1\right]$, the function $\invbreve{h}_{\widetilde{H}^{\downarrow}_\alpha}(\score_{\beta})$ is monotonically decreasing in the interval $\score_\beta \in \left[2\lvert \beta \rvert, 2\sqrt{1+\beta^2}\right]$.
 Similarly, for all $\alpha \in (1,2)$ and $q \in \left[0,1\right]$, the functions $\invbreve{h}_{\Bar{H}^{\downarrow}_\alpha}(\score_{\beta})$ and $\invbreve{h}_{\Bar{H}^{\uparrow}_\alpha}(\score_{\beta})$ are monotonically decreasing in the interval $\score_\beta \in \left[2\lvert \beta \rvert, 2\sqrt{1+\beta^2}\right]$.
\end{prop}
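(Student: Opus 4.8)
The plan is to split the argument into a soft, structural step --- which is exactly what lets the concave \emph{envelope} be monotone even if $h_{\mathbb{H}}$ is not shown to be monotone pointwise --- and a hard, purely analytic step. First I would work directly on the interval $I_\beta := [2\lvert\beta\rvert, 2\sqrt{1+\beta^2}]$ in the variable $\score_\beta$, and only afterwards pass to $g_\score = \sqrt{\bar g(\score_\beta)}$ with $\bar g(\score_\beta) := \tfrac{\score_\beta^2}{4}-\beta^2$ for the concrete computation: since $\bar g'(\score_\beta) > 0$ on $I_\beta$, the map $\score_\beta \mapsto g_\score$ is a strictly increasing bijection of $I_\beta$ onto $[0,1]$, so the location of the maximum of $h_{\mathbb{H}}$ is the same in either variable.

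The structural backbone is an elementary lemma about concave envelopes that I would state and prove once and apply to all three functions: for any continuous $h$ on a compact interval $[a,b]$ with concave envelope $\invbreve{h}$, one has $\invbreve{h}(a) = h(a)$ (the endpoints, being extreme in the $x$-direction, lie on the upper boundary of the convex hull of the graph of $h$), and $\invbreve{h}(x) \le \max_{[a,b]} h$ for all $x$ (the constant $\max_{[a,b]} h$ is a concave majorant). Consequently, if $h$ attains its maximum at the left endpoint $a$, then so does $\invbreve{h}$, and a concave function whose maximum sits at the left endpoint of its domain is non-increasing --- immediate from the concavity inequality $\invbreve{h}(x) \ge \tfrac{y-x}{y-a}\invbreve{h}(a) + \tfrac{x-a}{y-a}\invbreve{h}(y)$ for $a \le x \le y$. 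This is the precise mechanism that makes $\invbreve{h}_{\mathbb{H}}$ monotone while only requiring control of $h_{\mathbb{H}}$ through its global maximum, and it sidesteps having to rule out interior critical points that a full monotonicity proof of $h_{\mathbb{H}}$ would force us to handle.

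The task then reduces to the analytic statement that, for each of $h_{\widetilde{H}^{\downarrow}_\alpha}$, $h_{\bar{H}^{\downarrow}_\alpha}$, and $h_{\bar{H}^{\uparrow}_\alpha}$ (in their respective $\alpha$-ranges and for all $q \in [0,1]$), the maximum over $g_\score \in [0,1]$ is attained at $g_\score = 0$. I would proceed exactly as in the preceding subsection: express each $h_{\mathbb{H}}$ through the explicit eigenvalues already computed in the proof of Theorem~\ref{Thm: NPPBounds} --- the $\lambda_{1,2}$ for the sandwiched downarrow case, the eigenvalues $\mu_{1,2}$ of $\mathrm{Tr}_A[\sigma_{AE}^\alpha]$ for the Petz uparrow case, and the diagonal form of $\sigma_E^{(1-\alpha)/2}\sigma_{AE}^\alpha\sigma_E^{(1-\alpha)/2}$ for the Petz downarrow case --- differentiate in $g_\score$, factor out the manifestly nonnegative prefactors, and try to reduce the sign of the derivative to a single difference $\Psi(1+g_\score) - \Psi(1-g_\score)$ with $\Psi$ monotone, which is sign-definite, mirroring the reductions that made $f_3'(x) \le 0$ and $f_2'(x) \le 0$ transparent.

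The hard part will be this last step. Unlike the no-preprocessing case, the $q$-dependent eigenvalues carry the nested radical $\sqrt{g_\score^2 + (1-2q)^2(1-g_\score^2)}$, so the derivative does not collapse to a clean $(1+g_\score)^{s} - (1-g_\score)^{s}$ difference and one must track the coupling between $g_\score$ and $q$. I would control the offending terms with the same third- and fourth-order Taylor estimates used in the concavity proofs (Eqs.~\eqref{Eq:TaylorPetzup1}--\eqref{Eq:TaylorPetzup2} and \eqref{Eq: Petzdown1ineq}--\eqref{Eq: Petzdown2ineq}); and since the structural lemma only needs the weaker claim that the global maximum lies at $g_\score = 0$, a robust fallback is to prove $h_{\mathbb{H}}(g_\score) \le h_{\mathbb{H}}(0)$ directly by a Schur-type comparison of the eigenvalue vectors at $g_\score$ versus at $g_\score = 0$, which avoids establishing a sign for the derivative everywhere on $(0,1)$.
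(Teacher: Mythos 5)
Your structural reduction is exactly the mechanism the paper uses: $\invbreve{h}_{\mathbb{H}}$ agrees with $h_{\mathbb{H}}$ at the left endpoint $\score_\beta = 2\lvert\beta\rvert$, a constant concave majorant pins the envelope's maximum there once $h_{\mathbb{H}}(\score_\beta)\le h_{\mathbb{H}}(2\lvert\beta\rvert)$ is known, and a concave function maximized at the left endpoint of its domain is non-increasing --- the paper's proof runs through precisely these three observations. The genuine gap is in the analytic half, which you leave as a plan: everything hinges on the claim $h_{\mathbb{H}}(g_\score)\le h_{\mathbb{H}}(0)$, and neither of your proposed routes is carried out or straightforward. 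Differentiating the $q$-dependent eigenvalues in $g_\score$ is exactly what the paper deliberately sidesteps (its own footnote flags that pointwise monotonicity of $h_{\mathbb{H}}$ is never proven), and with the nested radical $\sqrt{g_\score^2+(1-2q)^2(1-g_\score^2)}$ there is no evident reduction to a sign-definite difference $\Psi(1+g_\score)-\Psi(1-g_\score)$ of the kind that worked for $f_2$ and $f_3$. The Schur-type fallback also needs real work: the eigenvalue sums are not constant in $g_\score$ (e.g.\ in the sandwiched case $\lambda_1+\lambda_2 = 2\bigl[(1-g_\score)^{1/\alpha}+(1+g_\score)^{1/\alpha}\bigr]$ strictly decreases), so plain majorization is unavailable; you would need weak submajorization, and its leading inequality $\lambda_1(g_\score)\le\lambda_1(0)$ is itself not obvious because $(1-g_\score^2)^{1/\alpha}\le 1$ pushes the radical up while $\lambda_1+\lambda_2$ goes down.

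What the paper does instead is a short operational argument that costs no calculus at all. For any state of the form Eq.~\eqref{Eq: SingleISigmaNPP}, let $A'$ be Alice's bit before preprocessing, $Q$ the flip bit, and $A=A'\oplus Q$. Then $\mathbb{H}(Q) = \mathbb{H}(Q|A'E) = \mathbb{H}(AQ|A'E) = \mathbb{H}(A|A'E)\le\mathbb{H}(A|E)$, using independence of $Q$ from $A'E$, two deterministic-reconstruction equalities in the sense of \cite[Lemma~B.7]{DFR20}, and data processing --- all valid for the sandwiched ($\alpha>1$) and Petz ($\alpha\in(1,2]$) families. Since $h_{\mathbb{H}}$ is, up to positive constants, the decreasing transform $2^{(1-\alpha)\mathbb{H}(A|E)}$ (or its uparrow analogue) with $\alpha>1$, this uniform entropy lower bound is a uniform upper bound on $h_{\mathbb{H}}$, and it is saturated at $g_\score=0$, where $\ket{\psi_{=}}\perp\ket{\psi_{\neq}}$, Eve learns $A'$ exactly, and her residual uncertainty about $A$ is exactly $\mathbb{H}(Q)$. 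That yields $h_{\mathbb{H}}(\score_\beta)\le h_{\mathbb{H}}(2\lvert\beta\rvert)$ in three lines; feeding it into your (correct) envelope lemma completes the proof. So keep your structural half essentially verbatim, but replace the derivative/majorization plan with this entropic bound.
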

\begin{proof}
    Intuitively, the monotonicity will be a consequence of the fact that $\invbreve{h}_{\mathbb{H}}(\score_{\beta})$ is concave and non-increasing at the point $\score_{\beta} =2\lvert \beta \rvert$. More concretely, for any $h_{\mathbb{H}}(\score_{\beta})$,  we are evaluating (up to a multiplicative constant) the corresponding quantity $\mathbb{Q}(A|E)$ for a classical-quantum state of the form in Eq.~\eqref{Eq: SingleISigmaNPP} such that $ |\inner{\psi_{=}}{\psi_{\neq}}| = g_{\score}$.  For these classical-quantum states, we shall show that 
    \begin{align}   h_{\mathbb{H}}(\score_{\beta}) &\leq h_{\mathbb{H}}(2\lvert \beta \rvert) \, .
    \end{align}
    In other words, at $g_{\score}=0$ (i.e. $\score_{\beta} = 2\lvert \beta \rvert$), Eve only needs to guess the noisy preprocessing bit; for all other $g_{\score}>0$, the conditional entropy, $\mathbb{H}(A| E)$, cannot be lower than that value. To prove this, let $A^\prime$ contain Alice's output bit before applying noisy preprocessing, let $Q$ contain the noisy preprocessing bit that is added to $A^\prime$, and let $A$ contain the bit Alice stores after incorporating noisy preprocessing. It must hold that 
    \begin{align}
      \mathbb{H}(Q) =  \mathbb{H}(Q|A^\prime E) = \mathbb{H}(AQ|A^\prime E) = \mathbb{H}(A|A^\prime E) \leq \mathbb{H}(A| E) \, .
    \end{align}
The first inequality holds, as $Q$ is sampled independently and not related to the registers $A^\prime$ and $E$. The second equality holds, because $A$ can be constructed deterministically from $A^\prime$ and $Q$, in the sense of~\cite[Lemma~B.7]{DFR20}. The third  equality holds for the same reason, i.e. $Q$ can be constructed deterministically from $A^\prime$ and $A$. The last inequality holds due to data processing~\cite[Corollary~5.1]{Tomamichel2015QuantumIP}. 

For any $\mathbb{H}$, the constant function $l(\score_{\beta})=h_{\mathbb{H}}(2\lvert \beta \rvert) $
is an upper bound on $h_{\mathbb{H}}(\score_{\beta})$ that is (trivially) concave, and tight at the endpoint $\score_{\beta}=2\lvert \beta \rvert$; this implies that the concave envelope of $h_{\mathbb{H}}(\score_{\beta})$ also satisfies  
    \begin{align}   \invbreve{h}_{\mathbb{H}}(2\lvert \beta \rvert) &= h_{\mathbb{H}}(2\lvert \beta \rvert) \, .
    \end{align}
    We are now ready to prove monotonicity. For any $a,b \in \left[2\lvert \beta \rvert, 2\sqrt{1+\beta^2}\right]$ such that $a \leq b$, it must hold due to concavity that 
     \begin{align}   \invbreve{h}_{\mathbb{H}}(a) &\geq \frac{b-a}{b-2\lvert \beta \rvert}\invbreve{h}_{\mathbb{H}}(2\lvert \beta \rvert) + \frac{a-2\lvert \beta \rvert}{b-2\lvert \beta \rvert}\invbreve{h}_{\mathbb{H}}(b) = \frac{b-a}{b-2\lvert \beta \rvert}l(b) + \frac{a-2\lvert \beta \rvert}{b-2\lvert \beta \rvert}\invbreve{h}_{\mathbb{H}}(b) \, .
    \end{align}
    Moreover, as $l(\score_{\beta})$ is concave, it must also be an upper bound on $\invbreve{h}_{\mathbb{H}}(\score_{\beta})$. It then follows that
    \begin{align}   \invbreve{h}_{\mathbb{H}}(a) & \geq \frac{b-a}{b-2\lvert \beta \rvert}\invbreve{h}_{\mathbb{H}}(b) + \frac{a-2\lvert \beta \rvert}{b-2\lvert \beta \rvert}\invbreve{h}_{\mathbb{H}}(b) = \invbreve{h}_{\mathbb{H}}(b) \, .
    \end{align}
    This proves that all three functions are monotonically decreasing.
\end{proof}
\section{Finite-size key rates}
\label{app:finitesize}

\newcommand{\HEV}{H_{\mathrm{EV}}}
\newcommand{\ecom}{\eps_{\mathrm{com}}}
\newcommand{\ecomAT}{\eps_{\mathrm{com}}^{\mathrm{AT}}}
\newcommand{\ecomEV}{\eps_{\mathrm{com}}^{\mathrm{EV}}}
\newcommand{\esound}{\eps_{\mathrm{sound}}}
\newcommand{\ecorr}{\eps_{\mathrm{corr}}}
\newcommand{\esecret}{\eps_{\mathrm{secret}}}
\newcommand{\ECstring}{\mathbf{L}_\mathrm{EC}}
\newcommand{\EVstring}{\mathbf{L}_\mathrm{EV}}
\newcommand{\lenEC}{\ell_{\mathrm{EC}}}
\newcommand{\lenEV}{\ell_{\mathrm{EV}}}
\newcommand{\lkey}{\ell_{\mathrm{key}}}
\newcommand{\dupp}{\delta^\mathrm{upp}}
\newcommand{\dlow}{\delta^\mathrm{low}}
\newcommand{\wCHSHhon}{\omega_\mathrm{hon}}
\newcommand{\qberhon}{Q^\mathrm{err}_\mathrm{hon}}

We first elaborate on the classical postprocessing used in the last steps. For a DIQKD protocol, this consists of the following procedures:
\begin{enumerate}
\item Error correction and error verification:
In error correction, Alice sends a string $\ECstring$ (of some fixed length $\lenEC$) to Bob, who uses it to produce a guess for Alice's string $A_1^n$. Then in error verification, Alice draws some choice of hash function $\HEV$ from a $\delta$-almost-universal hash family~\cite{DK07} (with fixed output length $\lenEV$), then applies it to $A_1^n$ and sends the resulting value $\EVstring$ to Bob, along with the choice of hash function $\HEV$.
Bob then computes the corresponding hash of his guess and aborts if it does not match.
\item Privacy amplification: Alice applies a privacy amplification procedure to $A_1^n$ to produce a final key of length $\lkey$, and Bob does the same to his guess for $A_1^n$.
\end{enumerate}

When designing a protocol for the finite-size regime, there are two critical ``overall'' parameters that should be considered. We briefly outline them here, deferring the details to e.g.~\cite{NDN+22,PR22}.

\begin{enumerate}
\item The \term{completeness} parameter $\ecom$: this is an upper bound on the probability that the honest behavior aborts. Since this protocol might abort during either the acceptance test or the error verification step, it is convenient to construct upper bounds $\ecomAT$ and $\ecomEV$ on the abort probabilities in each of those two steps respectively, after which one can validly take $\ecom = \ecomAT+\ecomEV$ due to the union bound. Here we design the protocol such that $\ecomAT = 10^{-3}$ and $\ecomEV \leq 10^{-3}$ in an essentially similar fashion to~\cite{NDN+22}\footnote{While the final completeness parameter reported in that work was $\ecom=10^{-2}$, that was a somewhat conservative estimate.} (we elaborate on this in Sec.~\ref{subsec:completeness} below). We emphasize that this parameter does not affect the security properties of the protocol in any way, which are instead quantified by the next parameter.
\item The \term{soundness} parameter $\esound$: informally, this quantifies the ``security'' of the final key; refer to~\cite{NDN+22,PR22} for a rigorous definition. As discussed in those works, to analyze this parameter it suffices to separately consider a \term{correctness} parameter $\ecorr$ and a \term{secrecy} parameter $\esecret$, then set $\esound = \ecorr + \esecret$. Basically, $\ecorr$ is an upper bound on the probability that the final keys do not match and the protocol accepts, while roughly speaking $\esecret$ quantifies how well Alice's final key is decoupled from Eve; again, see~\cite{NDN+22,PR22} for details. 
Following~\cite{NDN+22}, we design the protocol such that $\esound = 10^{-10}$, by setting $\ecorr = 2^{-61}$ and $\esecret = \esound - \ecorr$. 
\end{enumerate}
We now discuss the details of our protocol in terms of the above parameters. 
For the testing probability $\gamma$, in Fig.~\ref{subfig:smalln} we followed the value $\gamma=13/256$ used in~\cite{NDN+22} for all data points, whereas in Fig.~\ref{subfig:largen} we optimized over $\gamma$ in units of $1/256$ (as was done in~\cite{NDN+22} so that the test/generation decision could be straightforwardly determined by drawing $8$ uniformly random bits). We did so because we found that for the range of $n$ values in the former, optimizing the choice of $\gamma$ only improved the finite-size key rates by less than $0.002$. In contrast, for the larger $n$ values considered in the latter, we found that it was important to optimize over the choice of $\gamma$ to obtain better finite-size key rates --- this is due to some subtle limitations we discuss in Appendix~\ref{app:protmods} later.

We emphasize that apart from the above point regarding $\gamma$, our protocol only differs from the protocol in~\cite{NDN+22} in terms of using a slightly different accept condition (see Remark~\ref{remark:AT}), a technical point in privacy amplification (see Remark~\ref{remark:PA}), and having Bob directly announce the values $\bar{B}_1^n$ for Alice to compute $\bar{C}_1^n$ (which slightly simplifies the analysis without sacrificing key rate; see Remark~\ref{remark:diffs}).

\subsection{Completeness}
\label{subsec:completeness}

To discuss completeness, we need to specify some honest behavior for the devices. We suppose that the honest behavior is IID, and each round produces some distribution $\vect{q}_\mathrm{hon}$ on the register $\bar{C}_j$ for that round. For our protocol, this distribution would have the form
\begin{align}
q_\mathrm{hon}(0) = \gamma (1-\wCHSHhon), \quad q_\mathrm{hon}(1) = \gamma \wCHSHhon, \quad q_\mathrm{hon}(\perp) = 1-\gamma,
\end{align}
where $\wCHSHhon$ is the expected CHSH winning probability of the honest behavior in test rounds.
Furthermore, let $\qberhon$ denote the probability of Alice and Bob getting different outcomes in generation rounds.
Following~\cite{NDN+22}, we set 
\begin{align}
\wCHSHhon = 0.83, \quad \qberhon = 0.018,
\end{align}
where the $\wCHSHhon$ value corresponds to the expected CHSH ``correlator'' score of $S = 2.64$ used in that work (as a somewhat conservative estimate of the device performance in that experiment).

For a given $\ecomAT$, we need to choose the set $S_\mathrm{acc}$ in the accept condition such that the probability of the honest behavior yielding a frequency distribution outside $S_\mathrm{acc}$ is at most $\ecomAT$. 
We shall focus on $S_\mathrm{acc}$ of the following form: for each value $\bar{c} \in \{0,1,\perp\}$ we take some values $\dlow_{\bar{c}},\dupp_{\bar{c}}>0$, and set $S_\mathrm{acc}$ to be the set of distributions $\vect{q}$ satisfying
\begin{align}\label{eq:acceptbox}
\forall \bar{c} \in \{0,1,\perp\}, \quad 
q_\mathrm{hon}(\bar{c}) - \dlow_{\bar{c}} \leq q(\bar{c}) \leq q_\mathrm{hon}(\bar{c}) + \dupp_{\bar{c}}
.
\end{align}
For $S_\mathrm{acc}$ of this form, to achieve some desired $\ecomAT$, it suffices to choose the values $\dlow_{\bar{c}},\dupp_{\bar{c}}$ such that for the honest behavior we have
\begin{align}\label{eq:probscom}
\forall \bar{c} \in \{0,1,\perp\}, \quad \Pr[\freq_{\bar{C}_1^n}(\bar{c}) < q_\mathrm{hon}(\bar{c}) - \dlow_{\bar{c}}] \leq \frac{\ecomAT}{6} 
\quad\text{and}\quad
\Pr[\freq_{\bar{C}_1^n}(\bar{c}) > q_\mathrm{hon}(\bar{c}) + \dupp_{\bar{c}}] \leq \frac{\ecomAT}{6} 
,
\end{align}
since by the union bound, the probability of violating one or more of the inequalities is upper bounded by the sum of the individual probabilities of violating each one.
(It would of course be possible to ``distribute'' $\ecomAT$ in some other fashion across the terms; however, we found heuristically that this appears to give better performance as compared to e.g.~distributing it such that all the values $\dlow_{\bar{c}},\dupp_{\bar{c}}$ are equal.)
Since the honest behavior is IID, the probabilities in~\eqref{eq:probscom} can be written in terms of the CDF of a binomial distribution, which allows us to use the inverse CDF (available in most computational software) to solve for $\dlow_{\bar{c}},\dupp_{\bar{c}}$ in terms of $\ecomAT$. 

\begin{rem}\label{remark:AT}
This choice of accept condition differs slightly from~\cite{NDN+22}, which used an accept condition with only a one-sided bound on $q(1)$. We chose to use the form presented here because in some cases it seems to improve the key rates from the REAT (albeit usually only by a small amount), and also because a lower bound on $q(\perp)$ in the accept condition is needed to apply an improved chain rule we use later (in the first line of Eq.~\eqref{eq:DIQKDchain} below).

Furthermore, the accept condition in~\cite{NDN+22} was based on a $3$-standard-deviation ``tolerance'', rather than exactly computing the CDF of a binomial distribution to achieve a desired $\ecomAT$. For this work we choose to conservatively match this by setting $\ecomAT = 10^{-3}$, since a (one-sided) $3$-standard-deviation fluctuation in a normal distribution occurs with probability $1.35 \times 10^{-3} > 10^{-3}$.
\end{rem}

As for $\ecomEV$, we first observe that error verification can only abort if Bob's guess for $A_1^n$ is wrong, thus any upper bound on the probability of the latter (under the honest behavior) is a valid choice of $\ecomEV$. We then note that a specialized error correction procedure was developed in~\cite{NDN+22} with the following properties: for an IID honest behavior of the described form, an error-correction string of length
\begin{align}\label{eq:lengthEC}
\lenEC = n\left((1-\gamma) \binh\!\left(\qberhon\right) + \gamma \binh\!\left(1-\wCHSHhon\right) \right) + 50\sqrt{n}
\end{align}
suffices to ensure that Bob's guess is correct with probability over $99.9\%$, as estimated by simulations. (While this value is a somewhat heuristic estimate, recall that it only affects the probability that the honest behavior aborts, not any of the security properties of the protocol.) Hence performing error correction according to this procedure suffices to heuristically ensure $\ecomEV \leq 10^{-3}$. 

\subsection{Correctness}
\label{subsec:correctness}

In~\cite{NDN+22}, error verification was performed using a $\delta$-almost-universal hash with $\delta=2^{-61}$ and $\lenEV=64$ (under the condition that the message length in bits is at most $2^{64} \approx 10^{19}$, which is indeed the case here). We leave this aspect entirely unchanged, which suffices to ensure a correctness parameter of $\ecorr=2^{-61}$ as proven in~\cite{NDN+22}. 

\subsection{Secrecy}

This is the part of our analysis that differs the most from~\cite{NDN+22}, in that apart from improving the entropy accumulation bound, we simplify or improve a number of other steps in the analysis. We shall show that to achieve a desired secrecy parameter $\esecret$, it suffices to take the length of the final key to be 
\begin{align}\label{eq:lkey}
\lkey = n h_{\alpha} - n\left(\gamma + \dlow_{\perp} \right) - \lenEC - \lenEV - \frac{\alpha}{\alpha-1}\log\frac{1}{\esecret} + 2, 
\end{align}
where $h_{\alpha}$ is computed in terms of the $S_\mathrm{acc}$ choice defined in Sec.~\ref{subsec:completeness}, while $\lenEC$ and $\lenEV$ are as described in~\eqref{eq:lengthEC} and Sec.~\ref{subsec:correctness} respectively. Note that to evaluate $h_{\alpha}$, we used generic heuristic numerical methods rather than a convex solver that returns explicit dual bounds, because our bounds on the {\Renyi} entropy do not fall within the standard disciplined-convex-programming ruleset for such solvers. However, as the optimization is convex with respect to each of the individual variables, every local minimum is a global minimum and hence we believe it is unlikely that the resulting value we obtain for $h_{\alpha}$ is a significant overestimate of its true value.

\begin{rem} \label{remark:PA}
In order for the following analysis to hold, we currently require an implementation difference between the protocol described here and the protocol in~\cite{NDN+22}, in that privacy amplification would have to be performed using 2-universal hashing~\cite{Dup23} rather than Trevisan's extractor~\cite{DPV+12,MPS12,NDN+22} as used in that work. This is because a {\Renyi} privacy amplification theorem has currently only been proven for the former, not the latter. However, it seems likely that it should be possible to obtain such a result for the latter, and it would be a useful question to investigate in future work.
\end{rem}

Let $\OAT$ denote the event that the protocol accepts during the acceptance test, and let $\OEV$ denote the event that it accepts during error verification (so the event of the protocol accepting overall is $\OAT \land \OEV$. By applying the REAT (specifically~\cite[Lemma~5.1 with Lemma~6.1]{arx_AHT24}), the state conditioned on $\OAT$ satisfies
\begin{align}  
\widetilde{H}_{\alpha}^{\uparrow}\left({A}_{1}^{n}\bar{{C}}_{1}^{n}|
{X}_{1}^{n}{Y}_{1}^{n}{T}_{1}^{n}\mathbf{E}\right)_{\rho_{|\OAT}} \geq n h_{\alpha} 
- \frac{\alpha}{\alpha-1} \log\frac{1}{\Pr[\OAT]} \, .
\end{align}
However, in order to apply the relevant privacy amplification theorem from~\cite{Dup23}, we would need to account for various other publicly announced registers, such as $\bar{{B}}_{1}^{n}$, $\ECstring$, $\EVstring$ and $\HEV$, as well as further conditioning on $\OEV$. (In fact the REAT by itself technically allows us to directly condition on $\OAT \land \OEV$ instead; however, doing so in this proof would obstruct the last line in~\eqref{eq:removeECEV} later where we need to ``factor off'' $\HEV$.)
Furthermore, since Alice performs privacy amplification only on ${A}_{1}^{n}$, the relevant {\Renyi} entropy should only have ${A}_{1}^{n}$ (not $\bar{{C}}_{1}^{n}$) on the left side of the conditioning. 

To address these points, we first handle the  conditioning on $\OEV$, and remove the error-correction and error-verification registers from the conditioning:
\begin{align}  
&\widetilde{H}_{\alpha}^{\uparrow}\left({A}_{1}^{n}|\bar{{B}}_{1}^{n} \ECstring \EVstring \HEV
{X}_{1}^{n}{Y}_{1}^{n}{T}_{1}^{n}\mathbf{E}\right)_{\rho_{|\OAT\land\OEV}} \nonumber\\ 
\geq& 
\widetilde{H}_{\alpha}^{\uparrow}\left({A}_{1}^{n}|\bar{{B}}_{1}^{n} \ECstring \EVstring \HEV
{X}_{1}^{n}{Y}_{1}^{n}{T}_{1}^{n}\mathbf{E}\right)_{\rho_{|\OAT}} 
-\frac{\alpha}{\alpha-1}\log\frac{1}{\Pr[\OEV|\OAT]} \nonumber\\
\geq& 
\widetilde{H}_{\alpha}^{\uparrow}\left({A}_{1}^{n}|\bar{{B}}_{1}^{n} \HEV
{X}_{1}^{n}{Y}_{1}^{n}{T}_{1}^{n}\mathbf{E}\right)_{\rho_{|\OAT}} - \lenEC - \lenEV
-\frac{\alpha}{\alpha-1}\log\frac{1}{\Pr[\OEV|\OAT]} \nonumber\\
=& \widetilde{H}_{\alpha}^{\uparrow}\left({A}_{1}^{n}|\bar{{B}}_{1}^{n}
{X}_{1}^{n}{Y}_{1}^{n}{T}_{1}^{n}\mathbf{E}\right)_{\rho_{|\OAT}} - \lenEC - \lenEV
-\frac{\alpha}{\alpha-1}\log\frac{1}{\Pr[\OEV|\OAT]} 
\, .\label{eq:removeECEV}
\end{align}
where the second line is~\cite[Lemma~B.5]{DFR20}, the third line is a standard chain rule\footnote{Specifically, \cite[Prop.~8]{MDS+13} together with the fact that classical registers have non-negative contributions to entropy~\cite[Lemma~5.3]{Tomamichel2015QuantumIP}.} for classical conditioning registers, and the fourth line holds because ${\rho_{{A}_{1}^{n}\bar{{B}}_{1}^{n}\HEV
{X}_{1}^{n}{Y}_{1}^{n}{T}_{1}^{n}\mathbf{E}}}$ can be viewed as the state immediately after the choice of hash function $\HEV$ was drawn, in which case $\HEV$ is independent of all other registers (even conditioned on $\OAT$) due to how it was generated.

Next, we relate this to $\widetilde{H}_{\alpha}^{\uparrow}\left({A}_{1}^{n}\bar{{C}}_{1}^{n}|
{X}_{1}^{n}{Y}_{1}^{n}{T}_{1}^{n}\mathbf{E}\right)_{\rho_{|\OAT}}$ following the approach in~\cite{arx_AHT24}: observe that 
\begin{align}
\widetilde{H}_{\alpha}^{\uparrow}\left({A}_{1}^{n}|\bar{{B}}_{1}^{n}{X}_{1}^{n}{Y}_{1}^{n}{T}_{1}^{n}\mathbf{E}\right)_{\rho_{|\OAT}} 
&\geq \widetilde{H}_{\alpha}^{\uparrow}\left({A}_{1}^{n}\bar{{B}}_{1}^{n}|
{X}_{1}^{n}{Y}_{1}^{n}{T}_{1}^{n}\mathbf{E}\right)_{\rho_{|\OAT}} - n (\gamma+\dlow_{\perp}) \log\dim(\bar{B}_i) \nonumber\\
&= \widetilde{H}_{\alpha}^{\uparrow}\left({A}_{1}^{n}\bar{{B}}_{1}^{n}\bar{C}_{1}^{n}|
{X}_{1}^{n}{Y}_{1}^{n}{T}_{1}^{n}\mathbf{E}\right)_{\rho_{|\OAT}} - n (\gamma+\dlow_{\perp}) \log\dim(\bar{B}_i) \nonumber\\
&\geq \widetilde{H}_{\alpha}^{\uparrow}\left({A}_{1}^{n}\bar{C}_{1}^{n}|
{X}_{1}^{n}{Y}_{1}^{n}{T}_{1}^{n}\mathbf{E}\right)_{\rho_{|\OAT}} - n (\gamma+\dlow_{\perp}) \log\dim(\bar{B}_i) \, ,
\label{eq:DIQKDchain}
\end{align}
where the first line is proven in~\cite[Remark~8.1]{arx_AHT24} (noting that the number of test rounds conditioned on $\OAT$ is at most $\gamma+\dlow_{\perp}$),
the second line holds because $\bar{{C}}_{1}^{n}$ can be ``projectively reconstructed'' from ${A}_{1}^{n}{B}_{1}^{n}
{X}_{1}^{n}{Y}_{1}^{n}{T}_{1}^{n}$ in the sense described in~\cite[Lemma~B.7]{DFR20}, and the last line holds because classical registers have non-negative contributions to entropy~\cite[Lemma~5.3]{Tomamichel2015QuantumIP} (this last step is not necessary in general, but we employ it here since our analytic bounds are only for the entropy of Alice's output, not Bob's). Putting all the above bounds together, we conclude (since $\dim(\bar{B}_i)=2$ and $\Pr[\OAT]\Pr[\OEV|\OAT] = \Pr[\OEV \land \OAT]$):
\begin{align}  
\widetilde{H}_{\alpha}^{\uparrow}\left({A}_{1}^{n}|\bar{{B}}_{1}^{n} \ECstring \EVstring
{X}_{1}^{n}{Y}_{1}^{n}{T}_{1}^{n}\mathbf{E}\right)_{\rho_{|\OAT\land\OEV}} \geq 
n h_{\alpha} - n(\gamma+\dlow_{\perp}) - \lenEC - \lenEV
- \frac{\alpha}{\alpha-1} \log\frac{1}{\Pr[\OEV \land \OAT]}
\, . \label{eq:finalentbnd}
\end{align}

With this, we note that if we let $K_A$ denote Alice's final key and $\mathbf{E}_\mathrm{fin}$ denote Eve's final side-information after privacy amplification, then we have 
\begin{align}  
&\Pr[\OEV \land \OAT] \, d\left(\rho_{K_A\mathbf{E}_\mathrm{fin}|\OEV \land \OAT} , \frac{\id_{K_A}}{|K_A|} \otimes \rho_{\mathbf{E}_\mathrm{fin}|\OEV \land \OAT}\right) 
\nonumber\\
\leq& \Pr[\OEV \land \OAT]  2^{\frac{2}{\alpha}-2} 2^{\frac{\alpha-1}{\alpha}\left(\lkey - \widetilde{H}_{\alpha}^{\uparrow}\left({A}_{1}^{n}|\bar{{B}}_{1}^{n} \ECstring \EVstring \HEV
{X}_{1}^{n}{Y}_{1}^{n}{T}_{1}^{n}\mathbf{E}\right)_{\rho_{|\OEV \land \OAT}} \right)} \nonumber\\
=& \Pr[\OEV \land \OAT]  2^{\frac{\alpha-1}{\alpha}\left(\lkey - \widetilde{H}_{\alpha}^{\uparrow}\left({A}_{1}^{n}|\bar{{B}}_{1}^{n} \ECstring \EVstring \HEV
{X}_{1}^{n}{Y}_{1}^{n}{T}_{1}^{n}\mathbf{E}\right)_{\rho_{|\OEV \land \OAT}} - 2 \right)} \nonumber\\
\leq& \Pr[\OEV \land \OAT]  2^{\log\frac{1}{\Pr[\OEV \land \OAT] } - \log\frac{1}{\esecret}} \nonumber\\
=& \esecret
\, ,
\end{align}
where the second line is the {\Renyi} privacy amplification theorem in~\cite[Theorem~9 with Lemma~7]{Dup23} (noting that the $1$-norm distance and trace distance differ by a factor of $1/2$), the third line simply regroups the exponents, and the fourth line follows from combining~\eqref{eq:lkey} with \eqref{eq:finalentbnd}.
This fulfills the definition of $\esecret$-secrecy as described in e.g.~\cite{Arnon-Friedman:2018aa,TSB+22}.

\begin{rem} \label{remark:diffs}
Comparing the above analysis to that in~\cite{NDN+22}, apart from the main change of the former being based on {\Renyi} entropies\footnote{Refer to~\cite[Fig.~1]{hahn2024boundspetzrenyidivergencesapplications} for an analysis of the effect of making only this change, without any of the other improvements we employ here.} (which also simplified some points regarding event conditioning), the other notable difference is the chain rule used above to obtain the first line of~\eqref{eq:DIQKDchain}. We believe it is likely to yield better results than the chain rules used in~\cite{NDN+22} to handle the $\bar{B}_1^n$ registers. Moreover, this chain rule also places $\bar{B}_1^n$ directly in the conditioning registers, which allowed us to modify the protocol such that Bob publicly announces the $\bar{B}_1^n$ registers for Alice to compute $\bar{C}_1^n$ --- this simplifies the analysis compared to~\cite{NDN+22}, where instead Bob computes $\bar{C}_1^n$ using his guess for Alice's $A_1^n$ registers, and extra steps had to be taken in that proof to accommodate the possibility that his guess could be wrong. 
\end{rem}

\subsection{Possible modifications}
\label{app:protmods}

Finally, we make some informal comments regarding some potential for slightly sharpening the above analysis. Namely, the way we computed the lower bound on $h_{\alpha}$ is slightly suboptimal, in that we were effectively ``sacrificing'' entropy contributions from test rounds (in that comparing to the REAT statement in~\cite[Lemma~5.1]{arx_AHT24}, we have handled the entropy contributions from terms with $\bar{c}\neq\perp$ by trivially lower bounding them with zero). 
In contrast, the earlier security proofs in e.g.~\cite{Arnon-Friedman:2018aa,NDN+22} using previous EAT versions (based on von Neumann entropy in single rounds) were able to incorporate the entropy contributions from those rounds, due to certain properties of von Neumann entropy that did not carry over to the REAT. Due to this, we found that at the larger $n$ values studied in Fig.~\ref{subfig:largen} (where the rates are closer to the asymptotic values), the REAT-based approach here gives worse rates than those in~\cite{NDN+22} when using the same value of $\gamma$ --- to obtain the improved rates in that figure, we instead had to optimize the choice of $\gamma$, using a smaller value that ``sacrifices'' less of the key rate to the test-round component.

From a theoretical point of view, one way to overcome this drawback would be to use~\cite[Theorem~5.1]{arx_AHT24} rather than~\cite[Lemma~5.1]{arx_AHT24}, in that it ``retains'' entropy contributions from test rounds. However, the former involves some slightly elaborate {\Renyi} divergence terms that do not seem straightforward to analyze using the methods in this work. It would be an interesting task for future work to generalize these methods to handle those {\Renyi} divergence terms. Another prospect could be to note that~\cite[Lemma~5.1]{arx_AHT24} itself technically involves {\Renyi} entropy terms conditioned on each value of $\bar{c}$ that could be individually analyzed (here we have basically only retained the $\bar{c}=\perp$ term); however, that approach seems less promising for future use, because those terms would all be zero if the protocol is one where $\bar{c}$ contains the full input-output values in test rounds.

Alternatively, from a practical perspective it might seem expedient to address this by having Alice perform privacy amplification on only the generation-round data (since the test-round entropy is anyway ``sacrificed''), in which case she would not need to include test-round data in the error correction information (as implicitly accounted for in~\eqref{eq:lengthEC}) --- this would reduce the value of $\lenEC$ accordingly and also make that step practically easier to implement. 
However, due to the structure of the above proof (mainly the use of the ``projective reconstruction'' property from~\cite[Lemma~B.7]{DFR20} in the second line of~\eqref{eq:DIQKDchain}, which requires Alice's test-round outputs to appear at some point in the entropy terms), it does not seem entirely straightforward to ``directly'' get a bound for the final entropy of Alice's generation-round registers only. 

We observe that technically, one way to obtain a bound would be as follows: define registers $\bar{A}_i$ that are equal to $A_i$ in test rounds and set to $0$ in generation rounds, and define registers $\hat{A}_i$ that are equal to $A_i$ in generation rounds and set to $0$ in test rounds. Then
\begin{align}
\widetilde{H}_{\alpha}^{\uparrow}\left(\hat{A}_{1}^{n}|\bar{{B}}_{1}^{n}{X}_{1}^{n}{Y}_{1}^{n}{T}_{1}^{n}\mathbf{E}\right)_{\rho_{|\OAT}} 
&\geq \widetilde{H}_{\alpha}^{\uparrow}\left(\hat{A}_{1}^{n}|\bar{A}_{1}^{n}\bar{{B}}_{1}^{n}{X}_{1}^{n}{Y}_{1}^{n}{T}_{1}^{n}\mathbf{E}\right)_{\rho_{|\OAT}} \nonumber\\
&\geq \widetilde{H}_{\alpha}^{\uparrow}\left(\hat{A}_{1}^{n}\bar{A}_{1}^{n}\bar{{B}}_{1}^{n}|
{X}_{1}^{n}{Y}_{1}^{n}{T}_{1}^{n}\mathbf{E}\right)_{\rho_{|\OAT}} - n (\gamma+\dlow_{\perp}) \log\dim(\bar{A}_i\bar{B}_i) \nonumber\\
&= \widetilde{H}_{\alpha}^{\uparrow}\left({A}_{1}^{n}\bar{{B}}_{1}^{n}|
{X}_{1}^{n}{Y}_{1}^{n}{T}_{1}^{n}\mathbf{E}\right)_{\rho_{|\OAT}} - n (\gamma+\dlow_{\perp}) \log\dim(\bar{A}_i\bar{B}_i) 
\, ,
\end{align}
where the second line is again via~\cite[Remark~8.1]{arx_AHT24}, and the third line is again via~\cite[Lemma~B.7]{DFR20}, observing that ${A}_{1}^{n}$ can be ``projectively reconstructed'' from $\hat{A}_{1}^{n}\bar{A}_{1}^{n}{T}_{1}^{n}$ (and vice versa, $\hat{A}_{1}^{n}\bar{A}_{1}^{n}$ can also be ``projectively reconstructed'' from ${A}_{1}^{n}{T}_{1}^{n}$).

With this we can continue on exactly the same way as in the second and third lines of~\eqref{eq:DIQKDchain}. However, observe that this results in $2n\left(\gamma + \dlow_{\perp} \right)$ in place of $n\left(\gamma + \dlow_{\perp} \right)$ in the final key length formula. In order to obtain an overall benefit from this approach, we would need a more detailed analysis of how much the $O(\sqrt{n})$ term in $\lenEC$ (Eq.~\eqref{eq:lengthEC}) can be improved by not having to include the test-round data, which is a somewhat more specialized coding-theory question that we shall not consider within this work. (Furthermore, we informally note that this proposed approach still ends up reducing the analysis to $\widetilde{H}_{\alpha}^{\uparrow}\left({A}_{1}^{n}\bar{{B}}_{1}^{n}|
{X}_{1}^{n}{Y}_{1}^{n}{T}_{1}^{n}\mathbf{E}\right)_{\rho_{|\OAT}}$, which again includes Alice's test-round data and hence does not really ``exploit'' the fact that our single-round analysis excludes the entropy contributions from those terms.)

\end{document}